\documentclass[10pt,journal]{IEEEtran}

\ifCLASSINFOpdf

\else

\fi

\hyphenation{op-tical net-works semi-conduc-tor}

\usepackage{booktabs}
\usepackage{calc}
\usepackage{xcolor}
\usepackage{dsfont}
\usepackage{url}
\usepackage{amsmath}
\usepackage{amssymb}
\usepackage{amsthm}
\usepackage{mathtools}
\usepackage{tabularx}
\usepackage{nicefrac}

\DeclareMathOperator*{\argmin}{arg\,min}
\usepackage{multirow}

\usepackage[switch]{lineno}
\usepackage[named]{algo}
\usepackage[noend]{algpseudocode}
\usepackage{algorithm}

\ifCLASSOPTIONcompsoc
\usepackage[caption=false,font=normalsize,labelfont=sf,textfont=sf]{subfig}
\else
\usepackage[caption=false,font=footnotesize]{subfig}
\fi

\usepackage{cite}

\usepackage[inline]{enumitem}

\usepackage{balance}
\usepackage{verbatim}

\hyphenation{op-tical net-works semi-conduc-tor}

\theoremstyle{definition}
\newtheorem{definition}{Definition}
\newtheorem{proposition}{Proposition}
\newtheorem{theorem}{Theorem}

\newtheorem{lemma}{Lemma}
\newtheorem{remark}{Remark}

{
	\theoremstyle{plain}
	
}

\providecommand{\customgenericname}{}

\DeclareMathOperator*{\minimize}{minimize}

\DeclareMathOperator*{\argmax}{arg\,max}

\usepackage{stackengine}
\def\delequal{\mathrel{\ensurestackMath{\stackon[1pt]{=}{\scriptstyle\Delta}}}}

\usepackage{booktabs}

\definecolor{myBlue}{RGB}{219, 48, 122}

\begin{document}
	\title{WaveMax: Radar Waveform Design via Convex Maximization of FrFT Phase Retrieval}
	
	\author{Samuel Pinilla~\IEEEmembership{Member,~IEEE}, Kumar Vijay Mishra~\IEEEmembership{Senior Member,~IEEE}, and Brian M. Sadler~\IEEEmembership{Life Fellow,~IEEE}
		
		\thanks{S. P. is with The Alan Turing Institute, London NW12DB and Rutherford Appleton Laboratory, Oxfordshire OX110QX United Kingdom (e-mail: samuel.pinilla@stfc.ac.uk).}
		\thanks{K. V. M. is with the United States DEVCOM Army Research Laboratory, Adelphi, MD 20783 USA (e-mail: kvm@ieee.org).}
		\thanks{B. M. S. is with The University of Texas, Austin, TX 78712 USA (e-mail: brian.sadler@ieee.org).}
		\thanks{K. V. M. acknowledges partial support from the National Academies of Sciences, Engineering, and Medicine via Army Research Laboratory Harry Diamond Distinguished Fellowship. Research was sponsored by the Army Research Laboratory and was accomplished under Cooperative Agreement Number W911NF-21-2-0288. The views and conclusions contained in this document are those of the authors and should not be interpreted as representing the official policies, either expressed or implied, of the Army Research Laboratory or the U.S. Government. The U.S. Government is authorized to reproduce and distribute reprints for Government purposes notwithstanding any copyright notation herein.}
		\thanks{The conference precursor of this work was presented at the 2021 IEEE International Symposium on Information Theory (ISIT).}
	}

	
	\maketitle
	
	\begin{abstract}
		The ambiguity function (AF) is a critical tool in radar waveform design, representing the two-dimensional correlation between a transmitted signal and its time-delayed, frequency-shifted version. Obtaining a radar signal to match a specified AF magnitude is a bi-variate variant of the well-known phase retrieval problem. Prior approaches to this problem were either limited to a few classes of waveforms or lacked a computable procedure to estimate the signal. Our recent work provided a framework for solving this problem for both band- and time-limited signals using non-convex optimization. In this paper, we introduce a novel approach \textit{WaveMax} that formulates waveform recovery as a \textit{convex} optimization problem by relying on the fractional Fourier transform (FrFT)-based AF. We exploit the fact that AF of the FrFT of the original signal is equivalent to a rotation of the original AF. In particular, we reconstruct the radar signal by solving a low-rank minimization problem, which approximates the waveform using the leading eigenvector of a matrix derived from the AF. Our theoretical analysis shows that unique waveform reconstruction is achievable with a sample size no more than three times the signal frequencies or time samples. Numerical experiments validate the efficacy of \textit{WaveMax} in recovering signals from noiseless and noisy AF, including scenarios with randomly and uniformly sampled sparse data.
	\end{abstract}
	
	\begin{IEEEkeywords}
		Ambiguity function, fractional Fourier transform, phase retrieval, radar waveform design, time-frequency representation.
	\end{IEEEkeywords}
	
	\IEEEpeerreviewmaketitle
	
	\section{Introduction}
	Signal processing for radar and sonar sensing is heavily dependent on cross-correlation for signal detection and parameter estimation. A key result of this process is the \textit{ambiguity function} (AF)~\cite{jing2018designing,chen2022generalized,wang2024designing}, which is formed by correlating the transmitted waveform with its Doppler-shifted and time-delayed replicas. Originally introduced by Ville \cite{ville1948theorie}, the importance of AF as a radar signal design metric was elucidated in Woodward's groundbreaking work \cite{woodward1965probability,woodward1967radar}, with further elaboration provided by Siebert \cite{siebert1956radar}. The definition of AF is not unique \cite{abramovich2008bounds,baylis2016myths} and there are several variants to accommodate a wider range of radar target scenarios \cite{urkowitz1962generalized,san2007mimo}.  
	
	 AF plays a crucial role in selecting the appropriate radar waveforms for specific applications~\cite{wilcox1960synthesis}. Although an ideal AF would be non-zero at a single point, the likelihood that a target aligns precisely with such a response is negligible \cite{jankiraman2007design,rihaczek1996principles}. In practice, considerable theoretical effort has been dedicated to identifying AFs that deliver optimal radar performance. This involves designing waveforms that produce AFs with a \textit{thumbtack} shape, characterized by a sharp central spike and low sidelobes in the delay-Doppler plane. The inverse problem of designing a signal that yields a given AF dates back to the seminal work of Rudolf de Buda in 1970 \cite{de1970signals}. This study demonstrated that when the AF is bounded by a \textit{Hermite function}, the corresponding signal is also a Hermite function, with the Hermite polynomial coefficients determined by comparing the AF's coefficients \cite{de1970signals}. 
	
	The AF may also be viewed as a bilinear or quadratic time-frequency representation (TFR) that has been obtained after smoothing or filtering the \textit{Wigner-Ville distribution} (WVD) \cite{boashash2016time}. In general, several TFRs exist \cite[Chapter 2]{boashash2016time} and their analysis is used to characterize the time-varying spectral contents of nonstationary signals. For example, representation properties of WVD are shown to be optimal in the analysis of linearly frequency modulated (FM) signals \cite{boashash1998polynomial} but it is suboptimal for nonlinear FM (NLFM) \cite{barkat1999design,hussain2002adaptive,o2009improved}. In this context, subsequent studies \cite{jaming2010phase,jaming2014uniqueness} on AF-based radar waveform design explored signal design via other TFRs, such as using the \textit{fractional Fourier transform} (FrFT) \cite{sejdic2011fractional,ozaktas2001fractional}. In this paper, we focus on waveform design using the FrFT-based AF.
	
	First introduced by Namias in the context of quantum mechanics \cite{namias1980fractional}, the FrFT extends the conventional Fourier transform (FT) by incorporating an additional degree of freedom: rotation \cite{pei2001relations}. Various formulations of FrFT have been proposed \cite{erseghe1999unified}, including the weighted FrFT \cite{shih1995fractionalization}. Broadly, the FrFT of order $\alpha$ represents a rotation of a signal in the time-frequency plane by an angle $\alpha$. For specific values, $\alpha=2 n \pi$, $n\pi$, and $2 n \pi+\frac{\pi}{2}$, where $n$ is an integer, the FrFT simplifies to the time-domain signal, its time-reversal, and its FT, respectively \cite{sejdic2011fractional}. Numerous approaches for discrete-time FrFT (DFrFT) have also been developed \cite{ma2020fractional,candan2000discrete,zhao2021compressed}, with some lacking closed-form solutions \cite{kunche2020fractional}. In this work, we adopt a DFrFT based on the direct sampling of the FrFT \cite{tao2007sampling,jaming2014uniqueness}.
	
	In particular, the AF of signal's FrFT is equivalent to the rotation of Woodward's definition of AF for the same signal \cite{pei2001relations,almeida1994fractional}. This relationship has been exploited for a variety of radar applications, including generation of phase-coded waveforms \cite{gao2022fractional}, analysis of signals that are bandlimited in FrFT domain \cite{erseghe1999unified}, and imaging with inverse synthetic aperture radar \cite{borden2006fractional}. Leveraging this approach, \cite{jaming2010phase} extended de Buda's results by showing that the bounding assumption over the radar AF is removed to uniquely identify (up to trivial ambiguities) a Hermite function and \textit{rectangular pulse trains}. 
	
	The approaches in \cite{de1970signals, jaming2010phase} introduced the formulation of radar waveform design based on AF as a one-dimensional (1-D) bivariate phase recovery (PR) problem, which involves the recovery guarantees of complex signals from measurements only in magnitude~\cite[Page 35]{he2012waveform}. Although PR has been widely studied in fields such as optics \cite{pinilla2023unfolding}, image processing \cite{jacome2024invitation}, X-ray crystallography \cite{harrison1993phase}, speech recognition \cite{juang1993fundamentals}, and astronomy \cite{fienup1987phase}, its application to radar waveform design has received limited attention. Conventional AF-based waveform design typically relies on iterative approximation methods \cite{arikan1990time, sen2009adaptive} or optimization frameworks \cite{alhujaili2019quartic} to achieve the desired AF shape. In this context, \textit{exact} signal recovery from AF-based PR had until now only been demonstrated for specific cases, such as Hermite functions through the Woodward's AF \cite{de1970signals} and rectangular pulse trains via the FrFT-based AF \cite{jaming2010phase}. \textit{However, this body of work lacks a computable method to recover the waveform from the AF and fails to provide performance guarantees}. In addition, it is not straightforward to generalize them to waveforms that are widely used by radar systems, such as band- or time-limited pulses \cite{theron1999ultrawide,chen2008mimo}. Although results from \cite{jaming2014uniqueness} have been employed to construct low redundancy frame for stable PR \cite{bodmann2015stable}, theoretical uniqueness results and practical recovery algorithms for the original FrFT-based AF PR remain largely unexplored in the literature.
	
	\begin{figure*}[t]
		\centering
		\includegraphics[width=0.96\linewidth]{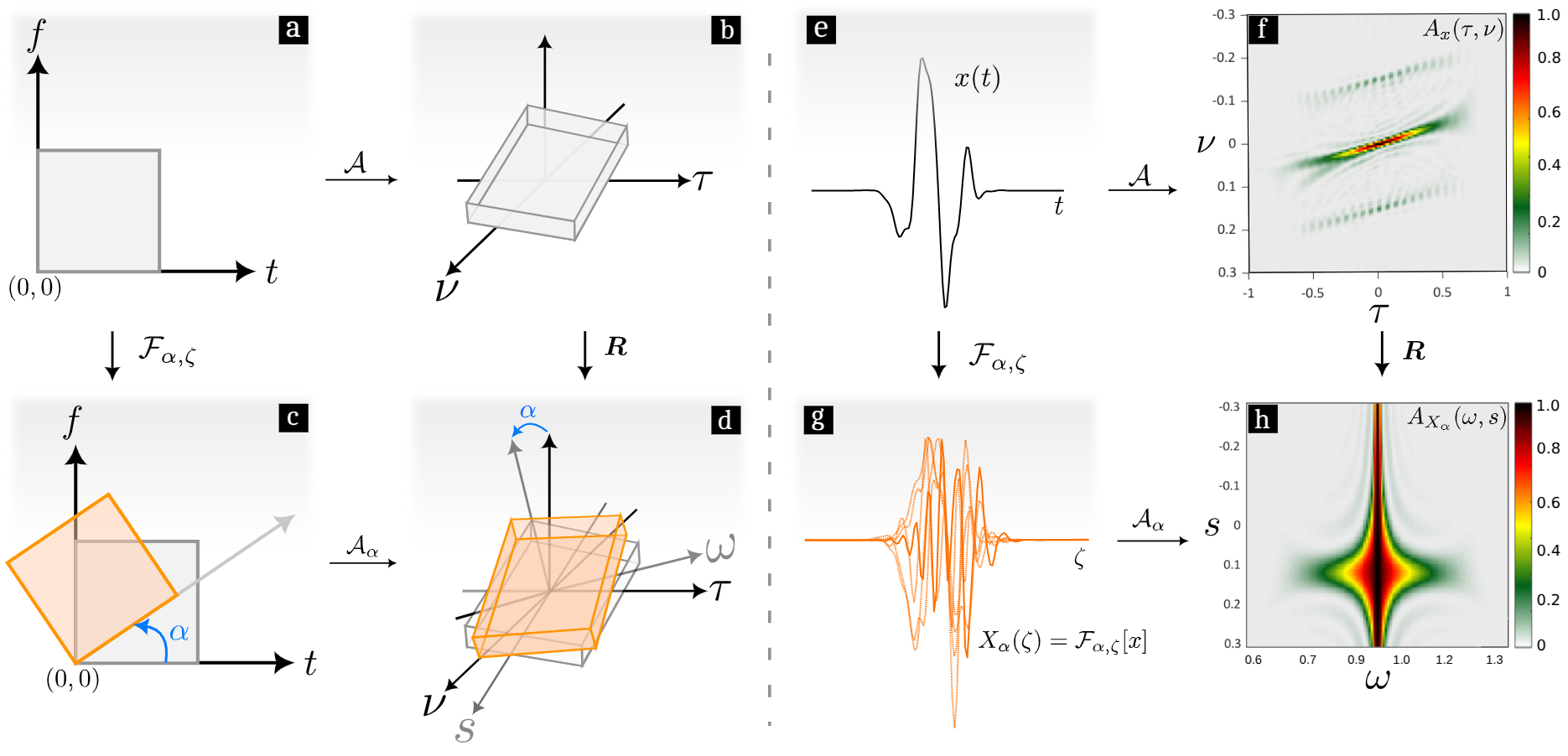}
		\vspace{-1em}
		\caption{An illustration of the relationship between the FrFT-based AF  $A_{X_{\alpha}}(\omega,s)$ and the conventional AF $A_{x}(\tau,\nu)$ of the signal $x(t)$. (a) Signal $x$ represented in the time-frequency $t$-$f$ plane. (b) Representation of the AF of the same signal in the delay-Doppler $\tau$-$\nu$ plane obtained using the operator $\mathcal{A}$ on $x(t)$. (c) Illustration of rotation of the signal in the $t$-$f$ plane by applying the FrFT operator $\mathcal{F}_{\alpha,\zeta}$. (d) Representation of the AF of the FrFT $X_{\alpha}$ shows that $A_{X_{\alpha}}$ is the rotation of $A_x$ using the coordinate transformation matrix $\boldsymbol{R}$ to the new $\omega$-$s$ plane. (e) Example of a signal $x(t)$. (f) AF $A_x(\tau,\nu)$  in the normalized delay-Doppler plane. (g) FrFT $X_{\alpha}(\zeta)$ plotted for various values of $\alpha$ from $-\pi/2$ to $\pi/2$ in increments of approximately $\pi/128$. (h) The FrFT-based AF $A_{X_{\alpha}}(\omega,s)$ is given by $A_{X_{\alpha}}(\omega,s)=A_{x}(\tau \cos \alpha-\nu \sin \alpha, \tau \sin \alpha+ \nu \cos \alpha)$.}
		\label{fig:intro}
		\vspace{-1.5em}
	\end{figure*}
	
	Our recent work \cite{pinilla2024phase} formulated the AF-based waveform design of band- and time-limited signals as a non-convex PR problem. This approach provided strong performance guarantees for waveform retrieval, and the algorithm was able to recover waveforms in challenging scenarios such as noisy and sparsely sampled AFs. In this paper, we present a convex formulation of AF-based PR to design band-limited radar signals. We achieve this improvement over the approach in \cite{pinilla2024phase} by employ the FrFT-based AF PR, which remained an open problem in prior works such as \cite{jaming2010phase,jaming2014uniqueness}. We solve this problem by developing \textit{WaveMax} - radar \textit{wav}eform recovery through convex \textit{max}imization - algorithm. This is a basis pursuit method that requires a designed approximation of the radar signal obtained by extracting the leading eigenvector of a AF-dependent matrix. Our performance analyses show that the radar signals are reconstructed using signal samples no more than three times the number of signal frequencies or time samples. Numerical experiments demonstrate that \textit{WaveMax} recovers band-limited signals in a variety of sampling (even-sparsely and randomly sampled AFs) and noisy (full noiseless samples and sparse noisy samples) scenarios.
	
	Our proposed WaveMax addresses a critical limitation in the recent formulation of AF-based waveform design for band- and time-limited signals as a non-convex phase retrieval problem \cite{pinilla2024phase}. Although \cite{pinilla2024phase} demonstrated robust numerical performance --including operation under noisy and sparsely sampled AF conditions-- it inherently precludes uniqueness guarantees because of its non-convex formulation. Furthermore, although the fundamental theory of the FrFT AF phase recovery \cite{pei2001relations,almeida1994fractional,jaming2010phase} establishes theoretical relationships, \textit{it does not provide a computationally tractable algorithm for waveform recovery}. WaveMax ensures unique waveform recovery from AF representations. Consequently, the most relevant comparison between WaveMax and the literature lies with the phase retrieval technique in \cite{pinilla2024phase}, which devised a non-convex solver but offered no uniqueness guarantees. 
	
	The preliminary results of this work appeared in our conference publication \cite{pinilla2021wavemax}, where the initialization of \textit{WaveMax} depended on the data and theoretical guarantees were excluded. In this paper, we propose a data-independent initialization procedure, prove its convergence, and provide uniqueness guarantees for the \textit{WaveMax} algorithm. In addition, we compare \textit{WaveMax} with our previous AF-based PR recovery methods, e.g. \textit{ban}d-limited \textit{ra}dar \textit{w}aveform design via phase retrieval (BanRaW) \cite{pinilla2021banraw}, and investigate additional practical constraints on signals such as LFM/NLFM waveforms. Our work addresses the inverse problem of uniquely reconstructing the waveform from a given ambiguity function (AF). In contrast, information-theoretic waveform design \cite{jing2018designing,chen2022generalized,wang2024designing} is a forward problem: it seeks to optimize waveform parameters without requiring exact reconstruction of a specific signal sequence. Therefore, these two approaches are not directly comparable. AF-based waveform design aims to minimize range and Doppler ambiguities, thereby enhancing resolution and target detection. Information-theoretic criteria, on the other hand, directly optimize waveform parameters to maximize information transfer, either by reducing estimation error through tighter bounds or by maximizing mutual information. For this reason, we compare our method with another AF-based approach (e.g., the BanRaW algorithm), since the objective of WaveMax is exact signal reconstruction, not parameter optimization.
	
	The remainder of the paper is organized as follows. In the next section, we introduce the desiderata on FrFT-based AF. In Section~\ref{sec:problem}, we formulate the radar PR problem and cast it as a convex optimization problem for the \textit{WaveMax} algorithm. Section \ref{sec:initialization} provides a mathematical description of the proposed estimation procedure depending on the AF. We validate our models and methods through numerical experiments in Section \ref{sec:results} before concluding in Section \ref{sec:conclusion}.
	
	Throughout this paper, we denote the sets of positive and strictly positive real numbers by $\mathbb{R}_{+}:=\{s\in \mathbb{R}: s\geq 0\}$ and $\mathbb{R}_{++}:=\{s\in \mathbb{R}: s>0\}$, respectively. We use boldface lowercase and uppercase letters for vectors and matrices, respectively. The sets are denoted by calligraphic letters and $\overline{\overline{ \mathcal{I} }}$ represents the cardinality of the set $\mathcal{I}$. We denote operators by calligraphic letters followed by curly braces e.g. $\mathcal{A}\{\cdot\}$. The conjugate and conjugate transpose of the vector $\boldsymbol{q}\in \mathbb{C}^{N}$ are denoted by the notation $\overline{\boldsymbol{q}}\in \mathbb{C}^{N}$ and $\boldsymbol{q}^{H}\in \mathbb{C}^{N}$, respectively. The $n$th entry of a vector $\boldsymbol{q}$, assumed to be periodic, is $\boldsymbol{q}[n]$. The $(k,l)$-th entry of a matrix $\boldsymbol{A}$ is $\boldsymbol{A}[k,l]$. We denote the FT of a vector and its conjugate reflected version (that is, $\hat{\boldsymbol{q}}[n] := \overline{\boldsymbol{q}}[-n]$) by $\tilde{\boldsymbol{q}}$ and $\hat{\boldsymbol{q}}$, respectively. The function $\lfloor \cdot \rfloor$ ($\lceil \cdot \rceil$) yields the largest (smallest) integer smaller (greater) than its argument. For matrices, define $\lVert \boldsymbol{Q}\rVert_{p}= \left[\sum_{n}\sigma_{n}^p(\boldsymbol{Q})\right]^{1/p}$ as the $p$-norm, where $\sigma_{n}(\boldsymbol{Q})$ denotes the $n$th singular value of $\boldsymbol{Q}$, the operation $\text{Tr}(\cdot)$ as the trace, and $\lVert \cdot \rVert_{\mathcal{F}}$ denotes the Frobenius norm of a matrix. For vectors, $\lVert\boldsymbol{q}\rVert_{p}$ is the usual $\ell_{p}$-norm. Additionally, we use $\sqrt{\cdot}$ is the point-wise square root; superscript within parentheses as $(\cdot)^{(t)}$ indicates the value at $t$-th iteration; $\lVert \cdot \rVert_{\mathcal{F}}$ denotes the Frobenius norm of a matrix;  $\sigma_{\max}(\cdot)$ ($\sigma_{\min}(\cdot)$) represents the largest (smallest) singular value of its matrix argument; $\mathcal{R}(\cdot)$ denotes the real part of its complex argument; $\mathbb{E}[\cdot]$ represents the expected value; and $\omega=e^{\frac{2\pi i}{n}}$ is the $n$-th root of unity, where $i=\sqrt{-1}$.
	
	\section{Desiderata for FrFT-based AF}
	\label{sec:sysmod}
	By Woodward's definition, the AF of a known narrow-band continuous-time waveform $x(t)$ is obtained by correlating the waveform with its replica shifted by Doppler frequency $\nu$ and time-delay $\tau$ as \cite{woodward1965probability}
	\begin{align}
		A_x(\tau, \nu) = \left \lvert \int\limits_{\mathbb{R}} x(t)x^{*}(t-\tau)e^{-i2 \pi \nu t} dt \right \rvert^{2}.
		\label{eq:narrow}
	\end{align}
	Denote the process of computing the AF for signal $x(t)$ by operator $\mathcal{A}\{x(t)\} = A_x(\tau, \nu)$. 
	
	Consider the FrFT\footnote{The FrFT is also derived from the eigenfunctions of the FT \cite{namias1980fractional}. This leverages the property that Hermite-Gaussian functions, which serve as eigenfunctions of the FT, remain eigenfunctions under the FrFT, with eigenvalues equal to the $\alpha$-th root of those associated with the FT \cite{namias1980fractional}.} of $x(t)$ \cite{almeida1994fractional}:
	\begin{equation}\label{eq:frft}
		X_{\alpha}(\zeta) = e^{-\mathrm{i}\pi \zeta^{2} \cot(\alpha)} \int\limits_{\mathbb{R}} x(t)e^{-\mathrm{i}\pi t^{2}\cot(\alpha)}e^{\frac{-\mathrm{i} 2\pi t\zeta}{\sin(\alpha)}}dt,
	\end{equation}
	where $\alpha\in \mathbb{R}\setminus \pi \mathbb{Z}$ is the order (rotation angle) of the FrFT, and $\zeta$ models the FrFT frequency that is related to $\nu$ in \eqref{eq:narrow} by $\nu=\frac{\zeta}{\sin(\alpha)}$. Denote the process of computing the FrFT of signal $x(t)$ by the operator $\mathcal{F}_{\alpha,\zeta}\{x(t)\} = X_{\alpha}(\zeta)$. For $\alpha=2n\pi$ and $n\pi$, $\mathcal{F}_{\alpha,\zeta}\{x(t)\}$ is defined to be $x(\zeta)$ and $x(-\zeta)$, respectively. 
	
	An important property of the FrFT, which establishes its connection with the AF is that a FrFT produces a rotation of AF (Figure \ref{fig:intro}). For instance, consider the AF of $X_{\alpha}(\zeta)$ for the time-shift $\tau=0$ and frequency $\nu$: \par\noindent\small
	\begin{align}
		A_{X_{\alpha}}(0,\nu) &=\overbrace{\left \lvert \int\limits_{\mathbb{R}} \mathcal{F}_{\alpha,\zeta}\{x(t)\}\mathcal{F}^{*}_{\alpha,\zeta}\{x(t)\}e^{-\mathrm{i}2 \pi \zeta \nu} d\zeta \right \rvert^{2}}^{A(0,\nu) \text{ as in \eqref{eq:narrow} for signal }\mathcal{F}_{\alpha,\zeta}\{x(t)\} } \nonumber\\
		&= \left \lvert  \int\limits_{\mathbb{R}} \left \lvert\int\limits_{\mathbb{R}} x(t)e^{-\mathrm{i}\pi t^{2} \cot(\alpha)-\frac{\mathrm{i}2\pi t\zeta}{\sin(\alpha)}} dt \right \rvert^{2} e^{-\mathrm{i}2\pi\zeta \nu } d\zeta \right\rvert^{2}.
		\label{eq:fractional}
	\end{align}\normalsize
	Here, $\zeta$ is the indexing Fourier frequency variable of the FrFT-based AF $A_{X_{\alpha}}$. Observe that 
	\begin{align}
		&\left \lvert\int\limits_{\mathbb{R}} x(t)e^{-\mathrm{i}\pi t^{2} \cot(\alpha)-\frac{\mathrm{i}2\pi t\zeta}{\sin(\alpha)}} dt \right \rvert^{2} \nonumber\\
		&=\iint\limits_{\mathbb{R}}x(t_{1})x^{*}(t_{2})e^{-i \pi \cot(\alpha)(t_{1}^{2}-t_{2}^{2})} e^{-i2\pi \zeta\left(\frac{t_{1}}{\sin(\alpha)}-\frac{t_{2}}{\sin(\alpha)}+\nu\right)} dt_{1}dt_{2}
	\end{align}
	Using the above equivalence and \eqref{eq:fractional} gives
	\begin{align}
		&A_{X_{\alpha}}(0,\nu)\nonumber\\ 
		&= \left \lvert\iiint\limits_{\mathbb{R}}x(t_{1})x^{*}(t_{2})e^{-i \pi \cot(\alpha)(t_{1}^{2}-t_{2}^{2})}\right.\nonumber\\
		&\hspace{1cm}\left. e^{-i2\pi \zeta\left(\frac{t_{1}}{\sin(\alpha)}-\frac{t_{2}}{\sin(\alpha)}+\nu\right)} dt_{1}dt_{2}d\zeta\right\rvert^{2} \nonumber\\
		&=\left \lvert\iint\limits_{\mathbb{R}}x(t_{1})x^{*}(t_{2})e^{-i \pi \cot(\alpha)(t_{1}^{2}-t_{2}^{2})}dt_{1}dt_{2} \right. \nonumber\\
		&\hspace{1cm}\left. \left(\int\limits_{\mathbb{R}}e^{-i2\pi \zeta\left(\frac{t_{1}}{\sin(\alpha)}-\frac{t_{2}}{\sin(\alpha)}+\nu\right)} d\zeta\right)\right\rvert^{2}.
		\label{eq:FrFTEquivalency}
	\end{align}
	Observe that the last integral with respect to $\zeta$ in \eqref{eq:FrFTEquivalency} vanishes, except when $\frac{t_{1}}{\sin(\alpha)}-\frac{t_{2}}{\sin(\alpha)}+\nu=0$ leading to $t_{2}=t_{1} + \nu\sin(\alpha)$. Hence, \eqref{eq:FrFTEquivalency} is equivalent to
	\begin{small}
		\begin{align} 
			&A_{X_{\alpha}}(0,\nu) \nonumber\\
			&= \left \lvert\int\limits_{\mathbb{R}}x(t)x^{*}(t+\nu\sin(\alpha))  e^{-i \pi \cot(\alpha)(t^{2}-t^{2}-2t\nu\sin(\alpha)-\nu^{2}\sin^{2}(\alpha))} dt\right\rvert^{2} \nonumber\\
			&= \left \lvert\int\limits_{\mathbb{R}}x(t)x^{*}(t+\nu\sin(\alpha)) e^{i 2\pi t \nu \cos(\alpha)} dt\right\rvert^{2}.
			\label{eq:FrFTEquivalency1}
		\end{align}
	\end{small}
	Note that \eqref{eq:FrFTEquivalency1} being computed over $\mathbb{R}$, with the change of sign in the exponential, it 
	is equivalent to 
	\begin{align}
		A_{X_{\alpha}}(0,\nu) &= \left \lvert\int\limits_{\mathbb{R}}x(t)x^{*}(t+\nu\sin(\alpha)) e^{-i 2\pi t \nu \cos(\alpha)} dt\right\rvert^{2} \nonumber\\
		&=A_{x}(-\nu \sin(\alpha),\nu\cos(\alpha)).
		\label{eq:FrFTEquivalency2}
	\end{align}
	
	It follows from \eqref{eq:FrFTEquivalency2} that a fractional FT produces a rotation of AF in the $(\tau,\nu)$ plane $(\omega,s)$~\cite{sejdic2011fractional} (Figure~\ref{fig:intro}). The illustration below shows these relationships:
	\par\noindent\small
	\begin{align*}
		\begin{array}{lll}
			x(t) & \xrightarrow[]{\mathcal{A}} &  A_{x}(\tau, \nu) \\
			\downarrow \text { FrFT } &  & \downarrow \text {Rotation of AF } \\
			X_{\alpha}(\zeta)=\mathcal{F}_{\alpha,\zeta}\{x(t)\} & \xrightarrow[]{\mathcal{A}_{\alpha}} 
			& A_{X_{\alpha}}(0, \nu)=A_{x}(-\nu\sin(\alpha),\nu\cos(\alpha)) .
		\end{array}
	\end{align*}\normalsize
	
	In general, for the cross-correlation AF \cite{jaming2010phase}, we have $A_{X_{\alpha}}(\omega,s)=A_{x}(\tau,\nu)$, where the coordinates $(\omega,s)$ in the rotated frame are related to $(\tau,\nu)$ as~\cite{jaming2010phase}
	\begin{align}
		\left[\begin{array}{l}
			\omega \\
			s
		\end{array}\right]=\underbrace{\left[\begin{array}{rr}
				\cos \alpha & -\sin \alpha \\
				\sin \alpha & \cos \alpha
			\end{array}\right]}_{\boldsymbol{=R}}\left[\begin{array}{l}
			\tau \\
			\nu
		\end{array}\right],
	\end{align}
	where $\boldsymbol{R}$ is the coordinate rotation transformation matrix. In case of auto-coorelation AF in \eqref{eq:narrow}, we set $\tau=0$ above to obtain the relationship between AF and FrFT AF.
	
	\section{Problem Formulation}
	\label{sec:problem}
	In practice, it is difficult to compute FrFT through numerical integrations. Therefore, its discrete-time formulation is utilized instead. Assume the sampling periods in time and Doppler domains are $\Delta t$ and $\Delta \nu$, respectively, such that the discrete indices are $p_{\alpha}=\nu\sin(\alpha)/\Delta t$ and $k=f/\Delta \nu$. Given a discrete-time signal $\boldsymbol{x}\in \mathbb{C}^{N}$, the discrete-valued version of its FrFT-based AF in \eqref{eq:fractional} is \cite{jaming2010phase}
	\begin{equation}
		\boldsymbol{A}[\alpha,k]:= \left\lvert \sum_{n=0}^{N-1} \boldsymbol{x}[n]\overline{\boldsymbol{x}}[n+p_{\alpha}]e^{\frac{-2i\pi nk\cos(\alpha)}{N}} \right\rvert^2,
		\label{eq:Ambiguity}
	\end{equation}
	where $i = \sqrt{-1}$, $k=0,\cdots,N-1$, and $\alpha\in [-\pi/2,\pi/2]$ uniformly sampled. The AF in \eqref{eq:Ambiguity} is a map $\mathbb{C}^{N}\rightarrow \mathbb{R}_{+}^{N\times N}$ that has four types of symmetry or \textit{trivial ambiguities} as follows
	\begin{description}
		\item[T1] the rotated signal $e^{i\phi}\boldsymbol{x}[n]$ for some $\phi\in \mathbb{R}$.
		\item[T2] the translate signal $\boldsymbol{x}[n-a]$ for some $a\in \mathbb{R}$.
		\item[T3] the reflected signal $\boldsymbol{x}[-n]$.
		\item[T4] the scaled signal $e^{ibn}\boldsymbol{x}[n]$ for some $b\in \mathbb{R}$.
		\label{eq:ambiguities}
	\end{description}
	Our goal is to estimate the signal $\boldsymbol{x}$, up to these trivial ambiguities, from the AF $\boldsymbol{A}$. From the definition of $\boldsymbol{A}$, we also define $m$ as the number of measurements which corresponds to the number of available entries of $\boldsymbol{A}$. 
	
	\subsection{Uniqueness guarantees}
	We introduce the following definition of a band-limited signal.
	\begin{definition}[$B$-band-limitedness]
		A signal $\boldsymbol{x}\in \mathbb{C}^{N}$ is defined to be $B$-band-limited if its FT $\tilde{\boldsymbol{x}}\in \mathbb{C}^{N}$ contains $N-B$ consecutive zeros. That is, there exists $k$ such that $\tilde{\boldsymbol{x}}[k]=\cdots=\tilde{\boldsymbol{x}}[N+k-B-1]=0$.
		\label{def:bandlimitedSignal}
	\end{definition}
	Following this definition, we show in the following Proposition~\ref{prop:uniqueness} that the AF is able to uniquely identify a band-limited signal. Here, \textit{almost all} implies that the set of signals, which is not uniquely determined up to trivial ambiguities, is contained in the vanishing locus of a nonzero polynomial.
	\begin{proposition}
		Assume $\boldsymbol{x}\in \mathbb{C}^{N}$ is a $B$-band-limited signal for some $B\leq N/2$. Then, almost all signals are uniquely determined from their AF $\boldsymbol{A}[p,k]$, up to trivial ambiguities, from $m\geq 3B$ measurements. If, in addition, the spectrum signal $|\tilde{\boldsymbol{x}}[t]|^2$ is also known and $N \geq 3$, then $m\geq2B$ measurements suffice.
		\label{prop:uniqueness}
	\end{proposition}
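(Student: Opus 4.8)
The plan is to follow the algebraic/generic uniqueness template used for phase-retrieval problems (in the spirit of the FrFT-AF analyses of~\cite{jaming2010phase,jaming2014uniqueness} and of our band-limited work~\cite{pinilla2024phase}): reduce the question to finitely many parameters, then show that two parameter vectors produce the same measured AF entries only when they differ by a trivial ambiguity, for all parameter vectors outside a proper algebraic subset.

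\emph{First,} I would cut the problem down to finitely many unknowns. By Definition~\ref{def:bandlimitedSignal}, $\tilde{\boldsymbol{x}}$ occupies $B$ consecutive frequency indices; applying the modulation ambiguity \textbf{T4} I may assume $\operatorname{supp}(\tilde{\boldsymbol{x}})\subseteq\{0,\dots,B-1\}$, so $\boldsymbol{x}$ is described by $\boldsymbol{b}=(\tilde{\boldsymbol{x}}[0],\dots,\tilde{\boldsymbol{x}}[B-1])\in\mathbb{C}^{B}$ via the inverse DFT $\boldsymbol{x}[n]=\tfrac1N\sum_{j=0}^{B-1}\boldsymbol{b}[j]e^{2\pi\mathrm{i}jn/N}$. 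Substituting this into~\eqref{eq:Ambiguity} turns every available entry $\boldsymbol{A}[\alpha,k]$ into a fixed real polynomial $P_{\alpha,k}(\boldsymbol{b},\overline{\boldsymbol{b}})$ that is bihomogeneous of bidegree $(2,2)$. The statement then becomes: for $\boldsymbol{b}$ off the zero locus of some nonzero polynomial, the only solutions $\boldsymbol{b}'$ of the system $\{P_{\alpha,k}(\boldsymbol{b}',\overline{\boldsymbol{b}'})=P_{\alpha,k}(\boldsymbol{b},\overline{\boldsymbol{b}})\}$ over the measured pairs $(\alpha,k)$ are the images of $\boldsymbol{b}$ under the residual ambiguities \textbf{T1}--\textbf{T3} (global phase, integer translation, reflection-conjugation), whose orbit has at most two continuous parameters.

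\emph{Next,} I would use the slice structure of~\eqref{eq:Ambiguity}: for each sampled order $\alpha$, $k\mapsto\boldsymbol{A}[\alpha,k]$ is the squared modulus of a $\cos\alpha$-warped DFT of the lag-product sequence $\boldsymbol{y}_{p_\alpha}[n]:=\boldsymbol{x}[n]\overline{\boldsymbol{x}}[n+p_\alpha]$, so each slice determines the autocorrelation of $\boldsymbol{y}_{p_\alpha}$. Band-limitedness keeps this cheap: $\boldsymbol{y}_{p_\alpha}$ is supported on a difference set of width at most $2B-1$, so only $O(B)$ slices carry independent information. The lag-$0$ slice ($\alpha=0$), together with a few low-lag slices, returns the magnitudes $|\boldsymbol{b}[j]|$ and consistent products $\boldsymbol{b}[j]\overline{\boldsymbol{b}[j+1]}$; from these the phases of $\boldsymbol{b}$ propagate relative to a reference (fixing \textbf{T1}), leaving exactly the linear-phase and reflection freedoms \textbf{T2}, \textbf{T3}. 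A bookkeeping of the measurements actually consumed --- roughly $B$ to pin the magnitudes and $2B$ to propagate the phases against $\approx 2B$ real unknowns modulo the trivial orbit --- yields $m\ge 3B$; when $|\tilde{\boldsymbol{x}}|^{2}$ is known the magnitude step is free and the count drops to $m\ge 2B$, with $N\ge3$ ensuring the phase propagation has a valid starting configuration.

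\emph{Finally,} I would promote ``works at one non-degenerate $\boldsymbol{b}$'' to ``works for almost all $\boldsymbol{b}$'': every way the reconstruction can fail --- a vanishing product magnitude, a coincidence of two candidate phase assignments, a rank drop of the Jacobian of the measurement map modulo \textbf{T1}--\textbf{T3} --- is cut out by a nonzero polynomial in $\boldsymbol{b}$ once it is checked on a single explicit witness signal, and a finite union of such sets is again contained in the zero locus of one nonzero polynomial, which is precisely the ``almost all'' exclusion. \textbf{The main obstacle} is this last stage combined with the sharp counting: showing that the $\cos\alpha$ warping does not spoil invertibility of the slice-to-autocorrelation map, isolating the \emph{minimal} informative set of $(\alpha,k)$ so that the bound is exactly $3B$ (resp.\ $2B$) and not, say, $4B$, and exhibiting a witness whose measurement Jacobian (modulo trivial ambiguities) has full rank $\approx 2B$ so that the non-uniqueness locus is genuinely a proper subvariety.
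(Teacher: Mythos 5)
Your proposal follows essentially the same route as the paper: after the identical normalization (using the trivial ambiguities and cyclic reindexing so that the $B$ nonzero Fourier coefficients occupy $\{0,\dots,B-1\}$, with $B=N/2$ treated first and $\lfloor N/2\rfloor$ for odd $N$), the paper simply applies the ``analogous construction procedure'' of \cite[Theorem 1]{pinilla2024phase} to the FrFT-AF expression \eqref{eq:fractionalEqui}, which is precisely the lag-product/phase-propagation/genericity argument you sketch, yielding $m\geq 3B$ and $m\geq 2B$ when $|\tilde{\boldsymbol{x}}|$ is known with $N\geq 3$. The sharp counting, the effect of the $\cos\alpha$ warping, and the ``almost all'' polynomial-vanishing step that you flag as the main obstacle are exactly what the paper delegates to that citation rather than proving in place, so your plan matches the paper's proof in both structure and level of detail.
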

	\begin{IEEEproof}
		Consider the expression of AF $\boldsymbol{A}$, i.e., \cite{jaming2010phase}
		\begin{align}
			\boldsymbol{A}[\alpha,k] = \left\lvert \sum_{n=0}^{N-1} \boldsymbol{x}[n]\overline{\boldsymbol{x}}[n+p_{\alpha}] e^{\frac{-2\pi in k\cos(\alpha)}{N}} \right\rvert^{2},
			\label{eq:fractionalEqui}
		\end{align}
		for $k=0,\cdots,N-1$, and $\alpha\in [-\pi/2,\pi/2]$. 
		
		Assume $B = N/2$, $N$ is even, that $\tilde{\boldsymbol{x}}[n] \not= 0$ for $n = 0\dots,B-1$, and that $\tilde{\boldsymbol{x}}[n] = 0$ for $n = N/2,\dots, N-1$. If the signal's nonzero coefficients are not in the interval $0,\dots, N/2-1$, then we cyclically reindex the signal without affecting the proof. If $N$ is odd, then replace $N/2$ by $\lfloor N/2 \rfloor$ everywhere in the sequel. Clearly, the proof carries through for any $B \leq N/2$. 
		
		Performing an analogous construction procedure as in \cite[Theorem 1]{pinilla2024phase} over \eqref{eq:fractionalEqui}, almost all signals $\boldsymbol{x}$ are uniquely determined from $m\geq 3B$ measurements. When the signal spectrum $\lvert \tilde{\boldsymbol{x}} \rvert$ is also provided, $\tilde{\boldsymbol{x}}$ is uniquely determined for $N\geq 3$. This implies that, under this scenario, only $m\geq 2B$ measurements are needed.
	\end{IEEEproof}
	
	Similar to \cite{pinilla2024phase}, Proposition~\ref{prop:uniqueness} is easily extended to time-limited signals. Later, in Section \ref{sec:results}, we numerically validate both band- and time-limited cases. We note that the signals $\boldsymbol{x}[t]$, $\boldsymbol{x}[t]e^{\textrm{i}\phi}$, $\boldsymbol{x}[t-a]$, $\boldsymbol{x}[-t]$, and $e^{\textrm{i}bt}\boldsymbol{x}[t]$ yield the same magnitude measurements for any constant $\phi$, $a$, $b \in \mathbb{R}$ and therefore these constants cannot be recovered by any algorithm. This naturally leads to the following measure of the relative error between the true signal $\boldsymbol{x}$ and any $\boldsymbol{q}\in \mathbb{C}^{N}$.
	
	\begin{definition}
		\label{def:dis}The distance between the two vectors $\boldsymbol{x} \in \mathbb{C}^N$ and $\boldsymbol{q} \in \mathbb{C}^N$ is defined as 
		\begin{equation}
			\text{dist}(\boldsymbol{x},\boldsymbol{q}):= \min_{\boldsymbol{z}\in \mathcal{T}(\boldsymbol{x})}\lVert \boldsymbol{q}-\boldsymbol{z}\rVert_{2},
			\label{eq:distance}
		\end{equation}
		where the set $\mathcal{T}(\boldsymbol{x})=\{\boldsymbol{z}\in \mathbb{C}^{N} \hspace{0.2em}:\hspace{0.2em} \boldsymbol{z}[n]=e^{i\beta}e^{i b n}\boldsymbol{x}[\epsilon n - a] \text{ for } \beta,b\in \mathbb{R}, \text{ and }\epsilon=\pm 1, a\in \mathbb{Z} \}$ contains vectors with all possible trivial ambiguities. \textcolor{blue}If $\text{dist}(\boldsymbol{x},\boldsymbol{q})=0$ and the uniqueness conditions of Proposition~\ref{prop:uniqueness} are met, then, for almost all signals, $\boldsymbol{x}$ and $\boldsymbol{q}$ are equal \emph{up to trivial ambiguities}. Furthermore, when $\text{dist}(\boldsymbol{x},\boldsymbol{q})=0$, it also implies that the AF for both $\boldsymbol{x}$, and $\boldsymbol{q}$ are equal.
	\end{definition}
	The minimum of $\lVert \boldsymbol{q}-\boldsymbol{z}\rVert_{2}$ in \eqref{eq:distance} exists because $\mathcal{T}(\boldsymbol{x})$ is a closed set. We prove this in Appendix~\ref{app:distance}.
	
	\subsection{WaveMax: Convex formulation for radar waveform design}
	Evidently, it follows from Proposition~\ref{prop:uniqueness} that not all the delay steps are needed to recover the signal. Therefore, a method that works for this truncated signal is also desired. We now formulate a convex optimization problem that is able to estimate a band-limited pulse from its FrFT-based AF. Define the vector $\boldsymbol{u}_{n,\alpha}$ as the $n$th row of the orthogonal FrFT matrix transform\footnote{The FrFT matrix transformation can be computed following~\cite{bultheel2004computation}.} for a given $\alpha$, and $w = e^{i\frac{2\pi}{N}} $. Then, using the equivalence \eqref{eq:FrFTEquivalency2}, we have that the AF in \eqref{eq:Ambiguity} can be expressed as 
	\begin{align}
		\boldsymbol{A}[\alpha,k] &= \left\lvert \sum_{n=0}^{N-1} \left\lvert \boldsymbol{u}_{n,\alpha}^{H}\boldsymbol{x} \right\rvert^{2}w^{-k n} \right\rvert^2 \nonumber\\
		&=\left\lvert \sum_{n=0}^{N-1} \text{Tr}(\boldsymbol{u}_{n,\alpha}\boldsymbol{u}_{n,\alpha}^{H}\boldsymbol{x}\boldsymbol{x}^{H})w^{-k n} \right\rvert^2, \nonumber\\
		&=\left\lvert  \text{Tr}\left(\left(\sum_{n=0}^{N-1}\boldsymbol{U}_{n,\alpha}w^{-k n}\right)\boldsymbol{X}\right) \right\rvert^2, \nonumber\\
		&=\left\lvert  \text{Tr}\left(\boldsymbol{B}_{\alpha,k}\boldsymbol{X}\right) \right\rvert^2, \nonumber\\
		\label{eq:system1}
	\end{align}
	where matrix $\boldsymbol{B}_{\alpha,k}$ is defined as
	\begin{align}
		\boldsymbol{B}_{\alpha,k} = \sum_{n=0}^{N-1}\boldsymbol{U}_{n,\alpha}w^{-k n} = \sum_{n=0}^{N-1}\boldsymbol{u}_{n,\alpha}\boldsymbol{u}_{n,\alpha}^{H}w^{-k n},
		\label{eq:matricesB}
	\end{align}
	and $\boldsymbol{X} = \boldsymbol{x}\boldsymbol{x}^{H}$.  Thus, considering \eqref{eq:system1} we formulate \textit{WaveMax}, a convex optimization problem (we prove this in the Lemma below) to estimate the pulse $\boldsymbol{x}$ by solving
	\begin{align}
		\minimize_{\boldsymbol{Z}\in \mathbb{C}^{N\times N}} \hspace{0.5em} & -\text{Tr}\left(\boldsymbol{Z}^{H}\boldsymbol{X}^{(0)} \right) \nonumber\\
		\text{ subject to }\hspace{0.5em} & \left\lvert  \text{Tr}\left(\boldsymbol{B}_{\alpha,k}\boldsymbol{Z}\right) \right\rvert \leq \sqrt{\boldsymbol{A}[\alpha,k]} \nonumber\\
		& \boldsymbol{Z} \succeq 0,
		\label{eq:wavemax}
	\end{align}
	where $\boldsymbol{X}^{(0)}\in \mathbb{C}^{N\times N}$, given by $\boldsymbol{X}^{(0)} = \boldsymbol{x}^{(0)}(\boldsymbol{x}^{(0)})^{H}$ with $\boldsymbol{x}^{(0)}\in \mathbb{C}^{N}$, is an approximated version of $\boldsymbol{x}\boldsymbol{x}^{H}$.
	
	The main idea behind of \eqref{eq:wavemax} is to find the positive semidefinite matrix $\boldsymbol{Z}$ that is most aligned with the approximation $\boldsymbol{X}^{(0)}$ and satisfies the relaxation of the measurement constraints in \eqref{eq:wavemax}. The construction of the estimation of $\boldsymbol{X}^{(0)}$ is explained in detail in the next section, where $\boldsymbol{x}^{(0)}$ is obtained by extracting the leading eigenvector of a matrix depending on the AF. And the following lemma states that optimization problem in \eqref{eq:wavemax} is convex.
	\begin{lemma}
		The optimization problem in \eqref{eq:wavemax} is convex.
		\label{lem:convexWavemax}
	\end{lemma}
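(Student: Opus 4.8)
```latex
The plan is to verify directly that the feasible set of \eqref{eq:wavemax} is a convex subset of $\mathbb{C}^{N\times N}$ and that the objective is an affine (hence convex) function of $\boldsymbol{Z}$; a problem minimizing a convex objective over a convex set is convex by definition.

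First I would examine the objective $-\text{Tr}(\boldsymbol{Z}^{H}\boldsymbol{X}^{(0)})$. Since $\boldsymbol{X}^{(0)}$ is a fixed matrix, the map $\boldsymbol{Z}\mapsto \text{Tr}(\boldsymbol{Z}^{H}\boldsymbol{X}^{(0)})$ is linear in the entries of $\boldsymbol{Z}$ (more precisely, it is the standard Hilbert--Schmidt inner product $\langle \boldsymbol{X}^{(0)},\boldsymbol{Z}\rangle$, which is $\mathbb{R}$-linear in $\boldsymbol{Z}$), so its negative is affine and therefore convex. Next I would treat the constraints one at a time. The constraint $\boldsymbol{Z}\succeq 0$ defines the positive semidefinite cone, which is a standard closed convex cone: for any $\boldsymbol{Z}_1,\boldsymbol{Z}_2\succeq 0$ and $\theta\in[0,1]$, $\theta\boldsymbol{Z}_1+(1-\theta)\boldsymbol{Z}_2\succeq 0$ since $\boldsymbol{v}^{H}(\theta\boldsymbol{Z}_1+(1-\theta)\boldsymbol{Z}_2)\boldsymbol{v}\geq 0$ for all $\boldsymbol{v}$. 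For the measurement constraints, I would note that for each fixed $(\alpha,k)$ the map $\boldsymbol{Z}\mapsto \text{Tr}(\boldsymbol{B}_{\alpha,k}\boldsymbol{Z})$ is linear (again an inner product against the fixed matrix $\boldsymbol{B}_{\alpha,k}$), so $\boldsymbol{Z}\mapsto \lvert \text{Tr}(\boldsymbol{B}_{\alpha,k}\boldsymbol{Z})\rvert$ is the composition of a linear map with the modulus, which is a norm on $\mathbb{C}$; a norm composed with an affine map is convex. Hence each sublevel set $\{\boldsymbol{Z} : \lvert \text{Tr}(\boldsymbol{B}_{\alpha,k}\boldsymbol{Z})\rvert \leq \sqrt{\boldsymbol{A}[\alpha,k]}\}$ is convex.

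The feasible set is the intersection of the PSD cone with finitely many such convex sublevel sets, and an intersection of convex sets is convex. Combined with the affine objective, this shows \eqref{eq:wavemax} is a convex program; in fact, since all constraints are either linear matrix inequalities or second-order-cone--type constraints and the objective is linear, it is a semidefinite program, which I would remark in passing.

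I do not anticipate a genuine obstacle here; the only point requiring a little care is the bookkeeping over $\mathbb{C}$ versus $\mathbb{R}$ --- namely, emphasizing that $\text{Tr}(\boldsymbol{Z}^{H}\boldsymbol{X}^{(0)})$ need not be real in general, so the objective should be understood via its real part (or one restricts attention to the real subspace on which it is real, which is natural since $\boldsymbol{Z}\succeq 0$ forces $\boldsymbol{Z}$ Hermitian and $\boldsymbol{X}^{(0)}$ is Hermitian, making the inner product real). Once that convention is fixed, every step above is immediate, so the proof is short.
```
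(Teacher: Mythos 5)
Your proposal is correct and follows essentially the same route as the paper: the paper's proof likewise reduces the claim to showing the measurement-constraint set $\left\{\boldsymbol{Z} : \lvert \text{Tr}(\boldsymbol{B}_{\alpha,k}\boldsymbol{Z})\rvert \leq \sqrt{\boldsymbol{A}[\alpha,k]}\right\}$ is convex, verifying it by the triangle inequality on a convex combination $\lambda\boldsymbol{Z}_1+(1-\lambda)\boldsymbol{Z}_2$, which is exactly your ``norm of a linear map has convex sublevel sets'' argument, with the PSD cone and the linear objective treated as standard. Your added remark on taking the real part of $\text{Tr}(\boldsymbol{Z}^{H}\boldsymbol{X}^{(0)})$ is a harmless refinement the paper leaves implicit.
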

	\begin{IEEEproof}
		See Appendix \ref{app:convexWavemax}. The key to proof lies in demonstrating that the relaxation of the measurement constraints in \eqref{eq:wavemax} is convex.
	\end{IEEEproof}

	\section{Construction of $\boldsymbol{X}^{(0)}$}
	\label{sec:initialization}
	We now develop the procedure for computing the approximated matrix $\boldsymbol{X}^{(0)}= \boldsymbol{x}^{(0)}(\boldsymbol{x}^{(0)})^{H}$. Instead of directly dealing with the AF in \eqref{eq:Ambiguity}, consider the acquired data in a transformed domain by taking its 1D IDFT with respect to the frequency variable (normalized by $1/N$) as 
	\begin{align}
		\boldsymbol{Y}[\alpha,\ell] &= \frac{1}{N}\sum_{k=0}^{N-1}\boldsymbol{A}[\alpha,k]\omega^{k \ell} \nonumber\\
		&= \frac{1}{N}\sum_{k=0}^{N-1} \left\lvert \sum_{n=0}^{N-1} \left\lvert \boldsymbol{u}_{n,\alpha}^{H}\boldsymbol{x} \right\rvert^{2}w^{-k n} \right\rvert^2 \omega^{k \ell}\nonumber\\
		&=\frac{1}{N}\sum_{k,n_{1},n_{2}=0}^{N-1} \lvert \boldsymbol{u}_{n_{1},\alpha}^{H} \boldsymbol{x} \rvert^{2} \lvert \boldsymbol{u}_{n_{2},\alpha}^{H} \boldsymbol{x} \rvert^{2} \omega^{k(\ell + n_{2}-n_{1})}, \nonumber\\
		&=\sum_{n=0}^{N-1} \lvert \boldsymbol{u}_{n+\ell,\alpha}^{H} \boldsymbol{x} \rvert^{2} \lvert \boldsymbol{u}_{n,\alpha}^{H} \boldsymbol{x} \rvert^{2},
		\label{eq:system2}
	\end{align}
	where $\ell=0,\cdots,N-1$. Then, we compute $\boldsymbol{x}^{(0)}$ as the leading eigenvector of a AF-dependent matrix using the data $\boldsymbol{Y}$, as explained in next section. We finalize with a summary of the optimization steps to estimate $\boldsymbol{x}$ in Algorithm \ref{alg:algorithm}.
	
	\subsection{Approximated vector $\boldsymbol{x}^{(0)}$}
	\label{sec:approxinit}
	Here we present a motivating example that reveals the fundamental characteristics of high-dimensional $\boldsymbol{u}_{n,\alpha}$ vectors, and band-limited signals $\boldsymbol{x}$. We fix $\boldsymbol{x}\in \mathbb{C}^{N}$ as a $\left\lceil \frac{N-1}{2} \right\rceil$-band-limited signal that conform to a Gaussian power spectrum centered at 800 nm, produced via the FT of a complex vector with a Gaussian-shaped amplitude with a cutoff frequency of $150$ microseconds$^{-1}$ (usec$^{-1}$). Then, we generate data as in \eqref{eq:system2} using the sampling vectors $\boldsymbol{u}_{n,\alpha}$. In Figure \ref{fig:init1} the ordered values of $\boldsymbol{Y}[\alpha,\ell]$ are plotted, for $N=128$ and the variable $\alpha$ (angle of FrFT) was uniformly sampled from $[\frac{\pi}{2},\frac{3\pi}{2}]$ with number of samples varying from $\lceil N/64 \rceil$ to $\lceil N/19 \rceil$. The operation $\lceil s \rceil$ returns the smallest integer greater than $s$. Observe that all values of $\boldsymbol{Y}[\alpha,\ell]$ are smaller than $10^{-1}$, which implies that $\boldsymbol{x}$ is nearly orthogonal to a large number of $\boldsymbol{u}_{n,\alpha}$.
	
	Following the Figure \ref{fig:init1} and expression of $\boldsymbol{Y}[\alpha,\ell]$ in \eqref{eq:system2}, define
	\begin{equation}
		\cos^{2}(\theta_{n,\alpha})=\frac{\lvert\langle\boldsymbol{u}_{n,\alpha},\boldsymbol{x}\rangle\rvert^{2}}{\lVert \boldsymbol{u}_{n,\alpha}\rVert^{2}\lVert \boldsymbol{x}\rVert^{2}} = \frac{\lvert\langle\boldsymbol{u}_{n,\alpha},\boldsymbol{x}\rangle\rvert^{2}}{N\lVert \boldsymbol{x}\rVert^{2}},
	\end{equation}
	where $n=0,\cdots,N-1, \alpha\in [\frac{\pi}{2},\frac{3\pi}{2}]$, and $\theta_{n,\alpha}$ is the angle between vectors $\boldsymbol{u}_{n,\alpha}$ and $\boldsymbol{x}$. Consider ordering all $\cos^{2}(\theta_{n,\alpha})$ in an ascending order, such that $\cos^{2}(\theta_{n,\alpha})\geq\cdots\geq \cos^{2}(\theta_{n,\alpha})$. In order to approximate $\boldsymbol{x}$ by a vector that is mostly orthogonal to a subset of vectors $\lbrace\boldsymbol{u}_{n,\alpha}\rbrace$ (from Figure \ref{fig:init1}), we introduce $\mathcal{I}_{0}$, an index set with cardinality $\overline{\overline{\mathcal{I}_{0}}} < N^{2}$, that includes indices of the smallest squared normalized inner-products $cos^{2}(\theta_{n,\alpha})$, where $(n,\alpha)\in \mathcal{I}_{0}$ is the collection of indices corresponding to the smallest values of $\{ \boldsymbol{Y}[\alpha,\ell] / N \}$. Thus, we approximate $\boldsymbol{x}$ by solving the following optimization problem 
	\begin{equation}
		\boldsymbol{x}^{(0)}=\argmin_{\lVert \boldsymbol{s} \rVert_{2}=1} \hspace{0.5em} \boldsymbol{s}^{H}\left(\frac{1}{N \overline{\overline{\mathcal{I}_{0}}} }\sum_{(n,\alpha)\in \mathcal{I}_{0}}\boldsymbol{u}_{n,\alpha}\boldsymbol{u}_{n,\alpha}^{H}\right)\boldsymbol{s}.
		\label{eq:coorthoIniti}
	\end{equation}
	
	\begin{figure}[t!]
		\centering
		\includegraphics[width=1\linewidth]{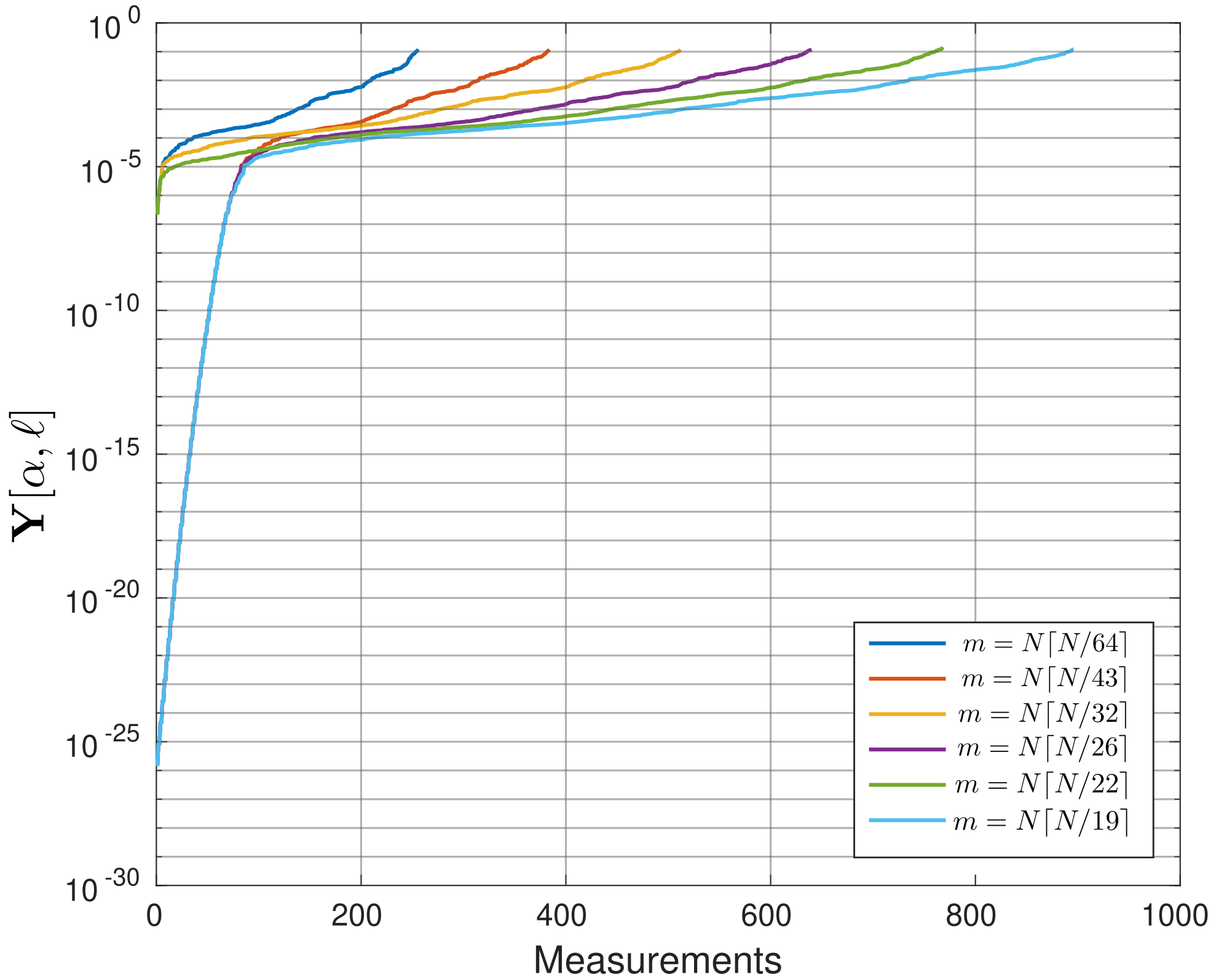}
		\caption{Ordered values of $\boldsymbol{Y}[\alpha,\ell]$ as defined in \eqref{eq:system2} for $N=128$ in log-scale. The variable $\alpha$ (angle of FrFT) was uniformly sampled from $[\frac{\pi}{2},\frac{3\pi}{2}]$ with number of samples varying from $\lceil N/64 \rceil$ to $\lceil N/19 \rceil$. Observe that all values of $\boldsymbol{Y}[\alpha,\ell]$ are smaller than $10^{-1}$, which implies that $\boldsymbol{x}$ is nearly orthogonal to a large number of $\boldsymbol{u}_{n,\alpha}$.
		}
		\label{fig:init1}
	\end{figure}
	
	Note that \eqref{eq:coorthoIniti} implies finding the smallest eigenvalue, which calls for eigen-decomposition or matrix inversion, each typically requiring computational complexity $\mathcal{O}(N^{3})$. We avoid this step by computing the summation in \eqref{eq:coorthoIniti} as 
	\begin{align}
		\frac{1}{N}\sum_{(n,\alpha)\in \mathcal{I}_{0}}\boldsymbol{u}_{n,\alpha}\boldsymbol{u}_{n,\alpha}^{H} &= \frac{1}{N}\sum_{n,\alpha=0}^{N-1}\boldsymbol{u}_{n,\alpha}\boldsymbol{u}_{n,\alpha}^{H} \nonumber\\ 
		&- \frac{1}{N}\sum_{(n,\alpha)\in \mathcal{I}^{c}_{0}}\boldsymbol{u}_{n,\alpha}\boldsymbol{u}_{n,\alpha}^{H} \nonumber\\
		&=\boldsymbol{I}_{N\times N} - \frac{1}{N}\sum_{(n,\alpha)\in \mathcal{I}^{c}_{0}}\boldsymbol{u}_{n,\alpha}\boldsymbol{u}_{n,\alpha}^{H},
		\label{eq:14}
	\end{align}
	where $\boldsymbol{I}_{N\times N}$ is an identity matrix and the orthogonality of the FrFT implies that $\mathcal{I}^{c}_{0}$ is the complement of $\mathcal{I}_{0}$. Considering the observation in \eqref{eq:14} we have that \eqref{eq:coorthoIniti} can be approximated as 
	\begin{align}
		\boldsymbol{x}^{(0)}=\argmax_{\lVert \boldsymbol{s} \rVert_{2}=1}\hspace{0.5em} \boldsymbol{s}^{H}\left(\frac{1}{N \overline{\overline{\mathcal{I}^{c}_{0}}}}\sum_{(n,\alpha)\in \mathcal{I}^{c}_{0}}\boldsymbol{u}_{n,\alpha}\boldsymbol{u}_{n,\alpha}^{H}\right)\boldsymbol{s},
		\label{eq:finalInit}
	\end{align}
	which meets the numerical computation of the leading eigenvector of matrix $\displaystyle \sum_{(n,\alpha)\in \mathcal{I}^{c}_{0}}\boldsymbol{u}_{n,\alpha}\boldsymbol{u}_{n,\alpha}^{H}$. 
	\begin{remark}
		The surrogate optimization in \eqref{eq:finalInit} -- a consequence of the FrFT orthogonal property -- is computationally less expensive than \eqref{eq:coorthoIniti}. Specifically, \eqref{eq:finalInit} is numerically solved via the power iteration method \cite{guerrero2020phase,wang2018phase}, which comprises recursively performing a matrix-vector multiplication. The significance of this result lies in enabling the initialization procedure necessary for addressing the radar PR problem, as outlined in \eqref{eq:wavemax}, to be executed using a strategy like the power iteration method \cite{pinilla2024phase}.
	\end{remark}
	
	It follows from \eqref{eq:finalInit} that $\boldsymbol{x}^{(0)}$ is unitary,  indicating that the true norm value $\lVert \boldsymbol{x} \rVert_{2}$ is lost. Therefore, we approximate $\lVert \boldsymbol{x} \rVert_{2}$ as
	\begin{align}
		\sum_{\ell=0}^{N-1}\boldsymbol{Y}[\alpha,\ell] &= \sum_{\ell=0}^{N-1}\sum_{n=0}^{N-1} \lvert \boldsymbol{u}_{n+\ell,\alpha}^{H} \boldsymbol{x} \rvert^{2} \lvert \boldsymbol{u}_{n,\alpha}^{H} \boldsymbol{x} \rvert^{2} \nonumber\\
		&=\sum_{n=0}^{N-1} \lvert \boldsymbol{u}_{n,\alpha}^{H} \boldsymbol{x} \rvert^{2}\left(\sum_{\ell=0}^{N-1}\lvert \boldsymbol{u}_{n+\ell,\alpha}^{H}\boldsymbol{x} \rvert^{2}\right), \nonumber\\
		&\cong\lVert \boldsymbol{x} \rVert_{2}^{4},
		\label{eq:auxAF}
	\end{align}
	where the second equality follows from the Parseval's theorem. 
	This leads to 
	\begin{align}
		\lVert \boldsymbol{x} \rVert_{2} \cong \lambda_{0}:= \frac{1}{N}\sum_{\alpha}\sqrt[4]{\sum_{\ell=0}^{N-1}\boldsymbol{Y}[\alpha,\ell]}.
		\label{eq:approxnorm}
	\end{align}
	
	To solve \eqref{eq:finalInit}, we employ the following Algorithm \ref{alg:initialization} for which the input is the AF $\boldsymbol{A}[\alpha,k]$ as in \eqref{eq:Ambiguity}. In Lines 2, 3 and 5, Algorithm \ref{alg:initialization} computes a random vector $\tilde{\boldsymbol{x}}^{(0)}$, and the auxiliary matrices $\boldsymbol{Y}[\alpha,\ell]$, $\boldsymbol{G}_{0}$ following \eqref{eq:system2} and \eqref{eq:finalInit} respectively. Then, in Lines 7 and 8 the recursive matrix-vector multiplication between $\boldsymbol{G}_{0}$, and $\tilde{\boldsymbol{x}}^{(0)}$ (power iteration method) to approximate the signal $\boldsymbol{x}$. Once this loop is finished, the vector $\tilde{\boldsymbol{x}}^{(T)}$ is scaled so that its norm matches approximately that of $\boldsymbol{x}$ based on \eqref{eq:approxnorm}, where $\lambda_{0} \approx \lVert \boldsymbol{x}\rVert_{2}$. 
	
	\begin{algorithm}[H]
		\caption{Spectral Initialization}
		\label{alg:initialization}
		\begin{algorithmic}[1]
			\State{\textbf{input: }Data $\left\lbrace\boldsymbol{A}[\alpha,k]:k=0,\cdots,N-1, \alpha\in [-\pi/2,\pi/2] \right\rbrace$, and maximum number of iterations $T$.}
			\State{$\tilde{\boldsymbol{x}}^{(0)}$ chosen randomly.}
			\State{Compute $$ \boldsymbol{Y}[\alpha,\ell] = \frac{1}{N}\sum_{k=0}^{N-1}\boldsymbol{A}[\alpha,k]\omega^{k \ell} $$}
			\State{\textbf{set }$\mathcal{I}^{c}_{0}$ as the set of indices corresponding to the $\lfloor m/6\rfloor$ largest values of $\{ \boldsymbol{Y}[\alpha,\ell] / N \}$.}		
			\State{Compute \begin{equation*}
					\boldsymbol{G}_{0} = \frac{1}{ \overline{\overline{\mathcal{I}^{c}_{0}}}} \sum_{(n,\alpha) \in \mathcal{I}^{c}_{0}} \boldsymbol{u}_{n,\alpha}\boldsymbol{u}_{n,\alpha}^{H}
			\end{equation*}}
			\For{$t=0:T-1$} 
			\State{$\grave{\boldsymbol{x}}^{(t+1)}= \boldsymbol{G}_{0}\tilde{\boldsymbol{x}}^{(t)}$}
			\State{$\tilde{\boldsymbol{x}}^{(t+1)}=\frac{\grave{\boldsymbol{x}}^{(t+1)}}{ \lVert \grave{\boldsymbol{x}}^{(t+1)} \rVert_{2}}$}	
			\EndFor
			\State{Compute $\boldsymbol{x}^{(0)} = \lambda_{0}\tilde{\boldsymbol{x}}^{(T)}$ where $\lambda_{0} = \frac{1}{N}\sum_{\alpha}\sqrt[4]{\sum_{\ell=0}^{N-1}\boldsymbol{Y}[\alpha,\ell]}$}		
			\State{\textbf{return: } $\boldsymbol{x}^{(0)}$ \Comment{Approximation of $\boldsymbol{x}$}}
		\end{algorithmic}
	\end{algorithm}
	
	Further, the solution to \eqref{eq:finalInit} is accurately approximated using a few power iterations,  which has a significantly lower computational complexity compared to the $\mathcal{O}(N^3)$ needed to solve \eqref{eq:coorthoIniti}. This is because it involves estimating the leading eigenvector of the matrix $\boldsymbol{G}_{0}$ \cite{guerrero2020phase}. The following Theorem \ref{theo:initi} indicates the accuracy of $\boldsymbol{x}^{(0)}$ in \eqref{eq:finalInit} to approximate the signal $\boldsymbol{x}$. It shows that \eqref{eq:finalInit} provides a close estimation of $\boldsymbol{x}$ from $\boldsymbol{Y}[\alpha,\ell]$ for almost all signals.
	\begin{theorem}
		Consider the AF $\boldsymbol{A}[\alpha,k]$ as defined in \eqref{eq:Ambiguity} for a $B$-band-limited signal $\boldsymbol{x}\in \mathbb{C}^{N}$ with $B\leq N/2$. Then, for almost all signals, the vector $\boldsymbol{x}^{(0)}\in \mathbb{C}^{N}$ returned by \eqref{eq:finalInit} satisfies
		\begin{align}
			\lVert \boldsymbol{x}\boldsymbol{x}^{H}-\boldsymbol{x}^{(0)}(\boldsymbol{x}^{(0)})^{H} \rVert_{F} \leq \rho \lVert \boldsymbol{x}\boldsymbol{x}^{H} \rVert_{F},
			\label{eq:dis}		
		\end{align}
		for some constant $\rho \in (0,1)$, provided that $\overline{\overline{\mathcal{I}^{c}_{0}}} \geq \mu N$, with $\mu>0$ sufficiently large.
		\label{theo:initi}
	\end{theorem}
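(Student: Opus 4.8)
The plan is to recognize \eqref{eq:finalInit} as a truncated spectral initializer and analyze it by the standard mean-plus-fluctuation decomposition. First I would translate the target bound \eqref{eq:dis} into two scalar statements: using the elementary identity $\lVert \boldsymbol{a}\boldsymbol{a}^{H}-\boldsymbol{b}\boldsymbol{b}^{H}\rVert_{F}^{2}=\lVert \boldsymbol{a}\rVert_{2}^{4}+\lVert \boldsymbol{b}\rVert_{2}^{4}-2\lvert\langle \boldsymbol{a},\boldsymbol{b}\rangle\rvert^{2}$, the inequality \eqref{eq:dis} is equivalent to controlling (i) the alignment $\lvert\langle \boldsymbol{x}^{(0)},\boldsymbol{x}\rangle\rvert/(\lVert\boldsymbol{x}^{(0)}\rVert_{2}\lVert\boldsymbol{x}\rVert_{2})$ of $\boldsymbol{x}^{(0)}$ with $\boldsymbol{x}$ (bounded away from $0$), and (ii) the norm ratio $\lambda_{0}/\lVert\boldsymbol{x}\rVert_{2}$ (close to $1$). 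Claim (ii) is essentially the content of \eqref{eq:auxAF}--\eqref{eq:approxnorm}, once one bounds the error in the ``$\cong$'' step there --- i.e.\ the contribution of the off-diagonal terms $\sum_{\ell\neq 0}$ in the Parseval identity --- which is $o(\lVert\boldsymbol{x}\rVert_{2}^{4})$ for almost all $\boldsymbol{x}$; so the bulk of the work is claim (i).

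For claim (i) I would show that $\boldsymbol{x}/\lVert\boldsymbol{x}\rVert_{2}$ is, up to trivial ambiguities, approximately the leading eigenvector of $\boldsymbol{G}_{0}$. The structural fact is that, by \eqref{eq:system2}, $\boldsymbol{Y}[\alpha,\ell]=\sum_{n}c_{n+\ell,\alpha}c_{n,\alpha}$ with $c_{n,\alpha}=\lvert\boldsymbol{u}_{n,\alpha}^{H}\boldsymbol{x}\rvert^{2}$, so by Cauchy--Schwarz the largest entries of $\{\boldsymbol{Y}[\alpha,\ell]/N\}$ are governed by the largest overlaps $c_{n,\alpha}$; hence the selected index set $\mathcal{I}^{c}_{0}$ essentially collects the pairs $(n,\alpha)$ for which $\boldsymbol{u}_{n,\alpha}$ is most aligned with $\boldsymbol{x}$ (equivalently $\cos^{2}\theta_{n,\alpha}$ largest, as asserted after \eqref{eq:coorthoIniti}), up to a set of $(n,\alpha)$ of negligible size that does not perturb the averaged operator in \eqref{eq:finalInit}. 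I would then write $\boldsymbol{G}_{0}=\boldsymbol{M}+\boldsymbol{E}$, where $\boldsymbol{M}$ is the ``population'' operator obtained by replacing the hard top-$\lfloor m/6\rfloor$ truncation by its idealized counterpart. Because restricting to large $\lvert\boldsymbol{u}_{n,\alpha}^{H}\boldsymbol{x}\rvert$ biases $\boldsymbol{u}_{n,\alpha}$ toward the direction $\boldsymbol{x}$, the matrix $\boldsymbol{M}$ has the form $\beta_{1}\,\boldsymbol{x}\boldsymbol{x}^{H}/\lVert\boldsymbol{x}\rVert_{2}^{2}+\beta_{2}\,(\boldsymbol{I}_{N}-\boldsymbol{x}\boldsymbol{x}^{H}/\lVert\boldsymbol{x}\rVert_{2}^{2})$ with $\beta_{1}>\beta_{2}>0$, so $\boldsymbol{x}$ is its top eigenvector with spectral gap $\delta:=\beta_{1}-\beta_{2}>0$.

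Next I would bound the fluctuation $\lVert\boldsymbol{E}\rVert=\lVert\boldsymbol{G}_{0}-\boldsymbol{M}\rVert$. Writing $\boldsymbol{G}_{0}$ as an average of $\overline{\overline{\mathcal{I}^{c}_{0}}}$ bounded rank-one PSD matrices $\boldsymbol{u}_{n,\alpha}\boldsymbol{u}_{n,\alpha}^{H}$ (each of operator norm $\lVert\boldsymbol{u}_{n,\alpha}\rVert_{2}^{2}$, recall \eqref{eq:14}), a matrix concentration estimate --- matrix Bernstein, or a covering-number/union-bound argument over an $\varepsilon$-net of the sphere combined with genericity of $\boldsymbol{x}$ --- gives $\lVert\boldsymbol{E}\rVert\lesssim N\sqrt{(\log N)/\overline{\overline{\mathcal{I}^{c}_{0}}}}$, which under the hypothesis $\overline{\overline{\mathcal{I}^{c}_{0}}}\geq\mu N$ with $\mu$ large is at most $\delta/4$. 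A Davis--Kahan $\sin\theta$ bound then yields that the leading eigenvector $\boldsymbol{x}^{(0)}/\lambda_{0}$ of $\boldsymbol{G}_{0}$ satisfies $\text{dist}(\boldsymbol{x}^{(0)}/\lambda_{0},\boldsymbol{x}/\lVert\boldsymbol{x}\rVert_{2})\lesssim \lVert\boldsymbol{E}\rVert/\delta<1$; combined with the norm estimate (ii) and the identity of paragraph one this produces \eqref{eq:dis} with an explicit $\rho\in(0,1)$, the exceptional signals being confined to the vanishing locus of a nonzero polynomial as required by the ``almost all'' qualifier --- exactly as in the construction of \cite[Theorem~1]{pinilla2024phase}. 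The main obstacle is that the $\boldsymbol{u}_{n,\alpha}$ are the \emph{deterministic} rows of the orthogonal FrFT matrices, not random vectors, so there is no genuine probability space: the population matrix $\boldsymbol{M}$, the lower bound $\delta>0$ on its spectral gap, and the concentration of $\boldsymbol{E}$ must all be extracted from the frame-theoretic properties of the union of FrFT bases over the sampled angles $\alpha$ (low mutual coherence, near-tightness via \eqref{eq:14}) together with the genericity of $\boldsymbol{x}$; making the gap bound $\delta$ quantitative in $N$, $B$, and the angular sampling density is the delicate step, and is where I would rely most heavily on the machinery already established in \cite{pinilla2024phase}.
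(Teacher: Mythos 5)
Your route (write $\boldsymbol{G}_{0}=\boldsymbol{M}+\boldsymbol{E}$ with a spiked ``population'' matrix $\boldsymbol{M}=\beta_{1}\boldsymbol{x}\boldsymbol{x}^{H}/\lVert\boldsymbol{x}\rVert_{2}^{2}+\beta_{2}(\boldsymbol{I}-\boldsymbol{x}\boldsymbol{x}^{H}/\lVert\boldsymbol{x}\rVert_{2}^{2})$, bound $\lVert\boldsymbol{E}\rVert$ by matrix concentration, finish with Davis--Kahan) is not the paper's argument, and as written it has a genuine gap exactly at its two load-bearing steps. The spiked form of $\boldsymbol{M}$ with $\beta_{1}>\beta_{2}$ and a quantitative gap $\delta$ is a consequence of rotational invariance of \emph{random} (e.g.\ Gaussian) sampling vectors conditioned on a large overlap; for the deterministic rows $\boldsymbol{u}_{n,\alpha}$ of the FrFT matrices there is no such isotropy, and no candidate definition of $\boldsymbol{M}$ or lower bound on $\delta$ is given. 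Likewise the estimate $\lVert\boldsymbol{E}\rVert\lesssim N\sqrt{(\log N)/\overline{\overline{\mathcal{I}^{c}_{0}}}}$ invokes matrix Bernstein (or a net argument over an implicit ensemble) when there is no probability space: the selection of $\mathcal{I}^{c}_{0}$ is a deterministic, signal-dependent truncation. You flag this yourself and defer both the gap bound and the concentration to ``frame-theoretic properties'' and to the machinery of \cite{pinilla2024phase}, but that is precisely the content of the theorem, so the proposal leaves the core unproven rather than proving it by a different method. (A smaller point: \eqref{eq:finalInit} returns a unit vector and the paper reduces to $\lVert\boldsymbol{x}\rVert_{2}=1$ by homogeneity, so your claim (ii) about $\lambda_{0}/\lVert\boldsymbol{x}\rVert_{2}$ is not actually needed for \eqref{eq:dis}.)

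The paper avoids both obstacles by never forming a population matrix. Since $\boldsymbol{x}^{(0)}$ maximizes $\boldsymbol{s}\mapsto\lVert\boldsymbol{S}\boldsymbol{s}\rVert_{2}$ over the unit sphere, where $\boldsymbol{S}$ stacks the selected rows $\boldsymbol{u}_{n,\alpha}$, $(n,\alpha)\in\mathcal{I}^{c}_{0}$, Lemma~\ref{lem:l2} (via the angle decomposition \eqref{eq:p1}--\eqref{eq:p2} and \cite[Lemma~1]{wang2018solving}) gives the deterministic ratio bound $\tfrac{1}{2}\lVert\boldsymbol{x}\boldsymbol{x}^{H}-\boldsymbol{x}^{(0)}(\boldsymbol{x}^{(0)})^{H}\rVert_{F}^{2}\leq\lVert\boldsymbol{S}\boldsymbol{x}^{\perp}\rVert_{2}^{2}/\lVert\boldsymbol{S}\boldsymbol{x}\rVert_{2}^{2}$. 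The theorem then follows from two-sided singular-value bounds on the sub-frame matrix $\boldsymbol{S}$ (Lemmata~\ref{lem:l3} and \ref{lem:l4}), which are extracted from the tight-frame identity $\sum_{n,\alpha}\boldsymbol{u}_{n,\alpha}\boldsymbol{u}_{n,\alpha}^{H}=N\boldsymbol{I}$ used in \eqref{eq:14}, with the alignment of the selected rows supplying the $\zeta$-improvement in the numerator; no eigenvector perturbation theorem and no concentration inequality appear. If you want to salvage your outline, the missing piece is a deterministic replacement for the Bernstein/Davis--Kahan pair, and the ratio bound of Lemma~\ref{lem:l2} is exactly such a replacement, so the repaired version of your argument collapses into the paper's.
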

	\begin{IEEEproof}
		See Appendix \ref{app:prooftheoini}.
	\end{IEEEproof}
	\begin{remark}
		It follows from Theorem \ref{theo:initi} and Algorithm \ref{alg:initialization} that our extended initialization procedure is more general than the one proposed in \cite{pinilla2024phase} for the conventional AF because Algorithm \ref{alg:initialization} converges from any starting point $\tilde{\boldsymbol{x}}^{(0)}$. This is not the case with the initialization in \cite{pinilla2024phase}. 
	\end{remark}
	
	Algorithm \ref{alg:algorithm} below summarizes the numerical optimization steps to estimate the signal $\boldsymbol{x}$ from its AF $\boldsymbol{A}$. To solve \eqref{eq:wavemax}, we employed the software library PhasePack \cite{chandra2019phasepack}, which contains efficient implementations of PR methods\footnote{A MATLAB implementation of this library is available at \url{https://github.com/tomgoldstein/phasepack-matlab}.}. The input to the algorithm is the AF $\boldsymbol{A}$ as in \eqref{eq:Ambiguity} (Line 1). Then, we set the constants $\mu$, $\tau>0$, and number of iterations $T$ (Line 2). The next step is to compute the approximation matrix $\boldsymbol{X}^{(0)}=\boldsymbol{x}^{(0)}(\boldsymbol{x}^{(0)})^{H}$ of the signal $\boldsymbol{x}$ by solving \eqref{eq:coorthoIniti} following a power iteration method strategy (Line 3). The main loop (Lines 6-13) is the \textit{f}ast \textit{a}daptive \textit{s}hrinkage/\textit{t}hresholding \textit{a}lgorithm (FASTA), an adaptive/accelerated forward-backward splitting method \cite{goldstein2014field}. The projection to the semidefinitive cone of matrices to satisfy positive semidefinite matrix constraint in \eqref{eq:wavemax} is performed subsequently (Lines 11-13). Finally, the method computes the estimation of $\boldsymbol{x}$ (Line 14).
	\begin{algorithm}[H]
		\caption{\textit{WaveMax: }Recovery of $\boldsymbol{x}$ from its AF $\boldsymbol{A}$}
		\label{alg:algorithm}
		\small
		\begin{algorithmic}[1]
			\State{\textbf{Input: }Data $\left\lbrace\boldsymbol{A}[\alpha,k]:k=0,\cdots,N-1, \alpha\in [-\pi/2,\pi/2] \right\rbrace$} \State{Choose the constants $\mu,\tau >0$, and number of iterations $T$}
			\Statex{}
			\State{$\boldsymbol{x}^{(0)}\leftarrow$ Algorithm \ref{alg:initialization}($\boldsymbol{A}[\alpha,k]$)}
			\State{Compute $\boldsymbol{X}^{(0)}=\boldsymbol{x}^{(0)}(\boldsymbol{x}^{(0)})^{H}$}
			\State{Set $\boldsymbol{Z}^{(0)} = \boldsymbol{X}^{(0)}$}
			\Statex{}
			\For{$r=0:T-1$} 
			\State{Compute $Ph_{\alpha,k}^{(r+1)} = \text{Tr}\left(\boldsymbol{B}_{\alpha,k}\boldsymbol{Z}^{(r)}\right)$}
			\State{Compute $\boldsymbol{G}^{(r+1)}[\alpha,k] = \frac{Ph^{(r+1)}}{\left\lvert  Ph^{(r+1)} \right\rvert}\left( \left\rvert Ph^{(r+1)} \right\rvert - \sqrt{\boldsymbol{A}[\alpha,k]}\right)$}
			\State{Compute $\boldsymbol{W}^{(r+1)} = \boldsymbol{Z}^{(r)} - \tau\boldsymbol{G}^{(r+1)}$}
			\State{Compute $\boldsymbol{S}^{(r+1)} = \boldsymbol{I}_{N} + \tau\mu \boldsymbol{W}^{(r+1)}$}
			\State{Eigenvalue decomposition  $\boldsymbol{S}^{(r+1)} = \boldsymbol{V}^{(r+1)}\boldsymbol{D}^{(r+1)}(\boldsymbol{V}^{(r+1)})^{H}$}
			\State{Compute $\hat{\boldsymbol{D}}^{(r+1)}[n,n] = \max(\boldsymbol{D}^{(r+1)}[n,n]-\mu\tau,0)$}
			\State{Set  $\boldsymbol{Z}^{(r+1)} = \boldsymbol{V}^{(r+1)}\hat{\boldsymbol{D}}^{(r+1)}(\boldsymbol{V}^{(r+1)})^{H}$}
			\Statex{}
			\EndFor
			
			\State{Compute $\boldsymbol{x}^{(T)}$ as leading eigenvector of $\boldsymbol{Z}^{(T)}$}
			\State{\textbf{return: } $\boldsymbol{x}^{(T)}$ \Comment{Estimation of $\boldsymbol{x}$}}
		\end{algorithmic}
	\end{algorithm}
	
	\subsection{WaveMax Convergence and Complexity Analysis}
	The convergence of Algorithm~\ref{alg:algorithm} -- which adopts the FASTA framework implemented in~\cite{chandra2019phasepack} -- is analytically guaranteed under convexity of the loss function and constraints in~\eqref{eq:wavemax}, as established by~\cite{goldstein2014field}. Lemma~\ref{lem:convexWavemax} confirms this convexity requirement, ensuring global convergence to a minimizer. Regarding computational efficiency, the dominant cost arises from the per-iteration Eigenvalue Decomposition, exhibiting $\mathcal{O}(n^3)$ complexity~\cite{pan1999complexity}. Previous works on FrFT-based AF PR lack computationally tractable algorithms for waveform recovery~\cite{pei2001relations,almeida1994fractional,jaming2010phase} or uniqueness guarantees~\cite{pinilla2024phase}. Algorithm~\ref{alg:algorithm} guarantees a unique solution up to global ambiguities even within a general phase retrieval framework.
	
	
	\subsection{Uniqueness analysis of WaveMax}
	To establish theoretical guarantees for the unique recovery of $\boldsymbol{x}\boldsymbol{x}^{H} \in \mathbb{C}^{N\times N}$ through the \textit{WaveMax} formulation in \eqref{eq:wavemax}, we recast the problem, solely for analytical purposes, as follows:
	\begin{align}
		\minimize_{\boldsymbol{Z}\in \mathbb{C}^{N\times N}} \hspace{0.5em} & -\text{Tr}\left(\boldsymbol{Z}^{H}\boldsymbol{X}^{(0)} \right) + \sum_{\alpha} i_{\mathbb{R}^{N}_{+}}(\boldsymbol{b}_{\alpha}) \nonumber\\
		\text{ subject to }\hspace{0.5em} & \left\lvert \sum_{n=0}^{N-1} \boldsymbol{b}_{\alpha}[n]w^{-k n} \right\rvert \leq \sqrt{\boldsymbol{A}[\alpha,k]} \nonumber\\
		&\text{Tr}(\boldsymbol{U}_{n,\alpha}
		\boldsymbol{Z}) = \boldsymbol{b}_{\alpha}[n] \nonumber\\
		& \boldsymbol{Z} \succeq 0,
		\label{eq:wavemaxAnalysis}
	\end{align}
	where $i_{\mathbb{R}^{N}_{+}}(\cdot)$ is the indicator function defined as
	\begin{align}
		i_{\mathcal{E}}(\boldsymbol{s}) = \left\{ \begin{array}{ll}
			0, & \text{ if } \boldsymbol{s}\in \mathcal{E}, \\
			\infty, & \text{otherwise},
		\end{array}\right.
	\end{align}
	and matrices $\boldsymbol{U}_{n,\alpha}$ are as in \eqref{eq:wavemax}. Then, it follows from \eqref{eq:wavemaxAnalysis} that it is enough to prove unique recovery by solving \eqref{eq:wavemax} if  $\boldsymbol{x}\boldsymbol{x}^{H}$ is the unique feasible point of $\left\{\boldsymbol{Z}\in \mathbb{C}^{N\times N}| \boldsymbol{Z} \succeq 0, \text{Tr}(\boldsymbol{U}_{n,\alpha} \boldsymbol{Z}) = \boldsymbol{b}_{\alpha}[n], \forall n,\alpha\right\}$. To guarantee this uniqueness result, we need to analyze the following linear mapping $\mathcal{B}:\mathcal{S}^{N\times N}\rightarrow \mathbb{R}_{+}^{N^{2}}$, where $\mathcal{S}^{n\times n}$ is the space of self-adjoint matrices, such that
	\begin{align}
		\mathcal{B}(\boldsymbol{S}) = \left[ \text{Tr}(\boldsymbol{U}_{0,\alpha_{0}} \boldsymbol{S}),\dots, \text{Tr}(\boldsymbol{U}_{N-1,\alpha_{N-1}} \boldsymbol{S})\right]^{T},
		\label{eq:auxOperator}
	\end{align}
	where the subscript for $\alpha$ indexes the number of angles. The following Lemma \ref{lem:solution} states the key property of the linear operator $\mathcal{B}$ to prove uniqueness of \textit{WaveMax} formulation in \eqref{eq:wavemax}. 
	\begin{lemma}
		Fix $\epsilon>0$. Then, under the setup of Theorem \ref{theo:initi}, we have
		\begin{align}
			\left\lVert \frac{1}{N}\mathcal{B}^{*}(\boldsymbol{1}) - \boldsymbol{I} \right\rVert_{2} \leq \epsilon,
		\end{align}
		where $\boldsymbol{1}$ is the all-one vector, provided that $\overline{\overline{\mathcal{I}^{c}_{0}}} \geq \mu N$, with $\mu>0$ sufficiently large.
		\label{lem:solution}
	\end{lemma}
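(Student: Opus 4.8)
The plan is to compute the adjoint $\mathcal{B}^{*}(\boldsymbol{1})$ explicitly and recognize it as the sum of the rank-one projectors $\boldsymbol{u}_{n,\alpha}\boldsymbol{u}_{n,\alpha}^{H}$ restricted to the index set used in the spectral initialization, then invoke the concentration already obtained in the proof of Theorem~\ref{theo:initi}. Concretely, since $\mathcal{B}(\boldsymbol{S})=\big[\mathrm{Tr}(\boldsymbol{U}_{n,\alpha_n}\boldsymbol{S})\big]_n$ with $\boldsymbol{U}_{n,\alpha}=\boldsymbol{u}_{n,\alpha}\boldsymbol{u}_{n,\alpha}^{H}$, the adjoint acts on a vector $\boldsymbol{c}\in\mathbb{R}^{N^2}$ by $\mathcal{B}^{*}(\boldsymbol{c})=\sum_{(n,\alpha)}\boldsymbol{c}[n,\alpha]\,\boldsymbol{u}_{n,\alpha}\boldsymbol{u}_{n,\alpha}^{H}$, so $\mathcal{B}^{*}(\boldsymbol{1})=\sum_{(n,\alpha)\in\mathcal{I}^{c}_{0}}\boldsymbol{u}_{n,\alpha}\boldsymbol{u}_{n,\alpha}^{H}$. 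Up to the normalization by cardinality, this is exactly the matrix $\boldsymbol{G}_0$ from Algorithm~\ref{alg:initialization}. Hence $\tfrac{1}{N}\mathcal{B}^{*}(\boldsymbol{1})$ is, after scaling, the empirical average of rank-one FrFT projectors over the selected index set, and the claim $\big\lVert \tfrac{1}{N}\mathcal{B}^{*}(\boldsymbol{1})-\boldsymbol{I}\big\rVert_{2}\le\epsilon$ is precisely the statement that this empirical average concentrates around its expectation, which (because the rows of the orthogonal FrFT matrix form a tight frame) equals the identity up to scaling.

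The key steps, in order, are: (i) write out $\mathcal{B}^{*}$ and identify $\mathcal{B}^{*}(\boldsymbol{1})$ with the (unnormalized) matrix $\boldsymbol{G}_0$ of the initialization; (ii) use the orthogonality/tight-frame property of the FrFT transform, already exploited in \eqref{eq:14}, to show $\mathbb{E}\big[\tfrac{1}{N\overline{\overline{\mathcal{I}^{c}_{0}}}}\sum_{(n,\alpha)\in\mathcal{I}^{c}_{0}}\boldsymbol{u}_{n,\alpha}\boldsymbol{u}_{n,\alpha}^{H}\big]=\boldsymbol{I}$ when the index set is (effectively) uniformly distributed over $\{0,\dots,N-1\}^2$; (iii) apply a matrix concentration bound — a matrix Bernstein or matrix Chernoff inequality — to the sum of independent bounded rank-one terms $\boldsymbol{u}_{n,\alpha}\boldsymbol{u}_{n,\alpha}^{H}$, each with operator norm $\lVert\boldsymbol{u}_{n,\alpha}\rVert_2^2=N$, to conclude that once $\overline{\overline{\mathcal{I}^{c}_{0}}}\ge\mu N$ with $\mu$ large enough (depending on $\epsilon$), the deviation is at most $\epsilon$ with high probability, i.e. for almost all signals in the sense of Proposition~\ref{prop:uniqueness}; (iv) fold the cardinality normalization into the constant so that the stated bound reads with the $\tfrac{1}{N}$ prefactor as claimed. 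Much of steps (ii)–(iii) can be quoted verbatim from the proof of Theorem~\ref{theo:initi} in Appendix~\ref{app:prooftheoini}, since that proof already controls exactly this averaged projector matrix $\boldsymbol{G}_0$.

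The main obstacle is the dependence structure: the set $\mathcal{I}^{c}_{0}$ is not an i.i.d. uniform sample of indices but is chosen \emph{data-dependently} as the indices of the largest values of $\{\boldsymbol{Y}[\alpha,\ell]/N\}$, so the rank-one summands are not independent and a naive matrix Bernstein bound does not apply directly. The way around this — mirroring the argument for Theorem~\ref{theo:initi} — is to decouple the selection from the concentration: one shows that for almost all band-limited $\boldsymbol{x}$ the selected index set behaves, for the purpose of the spectral sum, like a generic subset of size $\overline{\overline{\mathcal{I}^{c}_{0}}}$, because the band-limitedness forces $\boldsymbol{x}$ to be nearly orthogonal to all but an $O(N)$-sized fraction of the $\boldsymbol{u}_{n,\alpha}$ (this is the phenomenon displayed in Figure~\ref{fig:init1}), so the complement set $\mathcal{I}^{c}_{0}$ essentially coincides with a fixed deterministic set determined by the frame geometry rather than by the realization of $\boldsymbol{x}$. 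Once that reduction is made, the tight-frame identity and the matrix concentration bound close the argument, and the "sufficiently large $\mu$" in the hypothesis is precisely what absorbs the constants produced by the concentration inequality.
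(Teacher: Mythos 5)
Your identification of the adjoint is where the argument goes off the rails, and it is internally inconsistent: you correctly write $\mathcal{B}^{*}(\boldsymbol{c})=\sum_{(n,\alpha)}\boldsymbol{c}[n,\alpha]\,\boldsymbol{u}_{n,\alpha}\boldsymbol{u}_{n,\alpha}^{H}$ with the sum running over all $N^{2}$ measurement indices (that is how $\mathcal{B}$ is defined in \eqref{eq:auxOperator}, mapping into $\mathbb{R}_{+}^{N^{2}}$), but then conclude that $\mathcal{B}^{*}(\boldsymbol{1})$ is the sum restricted to $\mathcal{I}^{c}_{0}$. Since $\boldsymbol{1}$ is the all-one vector on all $N^{2}$ entries, $\mathcal{B}^{*}(\boldsymbol{1})$ is the \emph{full} sum $\sum_{n,\alpha}\boldsymbol{u}_{n,\alpha}\boldsymbol{u}_{n,\alpha}^{H}$, and the orthogonality (tight-frame) property of the FrFT matrices — the same identity already used in \eqref{eq:14} and in Lemma~\ref{lem:l3} — gives this sum as a multiple of the identity exactly, with no randomness involved. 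That is precisely the paper's proof: $\tfrac{1}{N}\mathcal{B}^{*}(\boldsymbol{1})=\boldsymbol{I}$ deterministically (under the unit-norm convention adopted there), so the deviation is zero and the bound holds for any $\epsilon$. The concentration machinery you build is solving a different, harder problem than the lemma asks.

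Even taken on its own terms, the restricted-sum route has two concrete defects. First, the scaling does not close: the lemma's prefactor is exactly $\tfrac{1}{N}$, while your sum has $\overline{\overline{\mathcal{I}^{c}_{0}}}\geq\mu N$ terms of operator norm $\lVert\boldsymbol{u}_{n,\alpha}\rVert_{2}^{2}$, so $\tfrac{1}{N}$ times it has trace proportional to $\overline{\overline{\mathcal{I}^{c}_{0}}}$ (or $\overline{\overline{\mathcal{I}^{c}_{0}}}/N$, depending on normalization), which is not close to $\mathrm{Tr}(\boldsymbol{I})=N$ for a generic admissible cardinality (Algorithm~\ref{alg:initialization} takes $\overline{\overline{\mathcal{I}^{c}_{0}}}=\lfloor m/6\rfloor$); ``folding the cardinality into the constant'' cannot repair a prefactor that the statement fixes. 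Second, your steps (ii)--(iii) presuppose a random-sampling model (i.i.d.\ or uniform index draws) and a decoupling of the data-dependent selection of $\mathcal{I}^{c}_{0}$, neither of which exists in the paper: the ``almost all signals'' qualifier is algebraic (complement of a polynomial's vanishing locus), and the bounds actually used in Appendix~\ref{app:prooftheoini} are deterministic singular-value estimates on submatrices of $\boldsymbol{H}=\sum_{n,\alpha}\boldsymbol{u}_{n,\alpha}\boldsymbol{u}_{n,\alpha}^{H}$, not a matrix Bernstein inequality. The decoupling claim that $\mathcal{I}^{c}_{0}$ ``behaves like a generic subset'' is exactly the hard step, and it is asserted rather than proved. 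The correct and much shorter argument is to keep the full sum in $\mathcal{B}^{*}(\boldsymbol{1})$ and invoke FrFT orthogonality once.
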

	\begin{IEEEproof}
		From Theorem \ref{theo:initi},
		\begin{align}
			\frac{1}{N}\mathcal{B}^{*}(\boldsymbol{1}) &= \frac{1}{N} \sum_{\alpha,n=0}^{N-1} \boldsymbol{u}_{n,\alpha}\boldsymbol{u}_{n,\alpha}^{H}  = \boldsymbol{I},
		\end{align}
		holds provided that $\overline{\overline{\mathcal{I}^{c}_{0}}} \geq \mu N$, with $\mu>0$ sufficiently large. For simplicity but without loss of generality, assume each $\boldsymbol{u}_{n,\alpha}$ is unitary. Then,
		\begin{align}
			\left\lVert \frac{1}{N}\mathcal{B}^{*}(\boldsymbol{1}) - \boldsymbol{I} \right\rVert_{2} \leq \epsilon,
		\end{align}
		for any $\epsilon\in (0,1)$.
	\end{IEEEproof}
	
	Taking into account the linear operator $\mathcal{B}(\cdot)$ in \eqref{eq:auxOperator}, a band-limited signal $\boldsymbol{x}$ can be recovered from its AF $\boldsymbol{A}$ if $\mathcal{B}(\cdot)$ is injective \cite{candes2013phase}. With an accurate estimation $\boldsymbol{x}^{(0)}$ of $\boldsymbol{x}$, we adopt a strategy similar to that in \cite{candes2013phase,gross2017improved}. The following Theorem~\ref{theo:guarantee} outlines the conditions that the operator $\mathcal{B}$ must satisfy for guaranteed recovery through solving \eqref{eq:wavemax} \cite{candes2013phase}. However, unlike the approach in \cite{candes2013phase}, we must ensure that $\mathcal{B}$ can uniquely identify $\boldsymbol{x}\boldsymbol{x}^{H}$, as the operator's effectiveness relies on the close estimation $\boldsymbol{x}^{(0)}$ of $\boldsymbol{x}$. Further, the non-probabilistic isometries discussed in Theorem~\ref{theo:guarantee} have not been addressed in previous literature on Fourier-based PR problems, necessitating a new proof.
	\begin{theorem}
		Consider the operator $\mathcal{B}:\mathcal{S}^{N\times N}\rightarrow \mathbb{R}_{+}^{N^{2}}$ as defined in \eqref{eq:auxOperator}. Define the tangent space of the manifold of all rank-1 Hermitian matrices at the point $\boldsymbol{x}\boldsymbol{x}^{H}$ as
		\begin{align}
			\mathcal{T}_{\boldsymbol{x}} = \left\lbrace \boldsymbol{x}\boldsymbol{s}^{H} + \boldsymbol{s}\boldsymbol{x}^{H}| \boldsymbol{s}\in \mathbb{C}^{N} \right \rbrace.
			\label{eq:tangent}
		\end{align} 
		Under the setup of Theorem \ref{theo:initi} for almost all $B$-band-limited signals $\boldsymbol{x}\in \mathbb{C}^{N}$, with $B\leq N/2$, we have
		\begin{description}
			\item[A1] The operator $\mathcal{B}$ is injective and satisfies
			\begin{equation}
				(1-\delta)\lVert \boldsymbol{S}\rVert_{1}\leq\frac{1}{N}\lVert\mathcal{B}(\boldsymbol{S})\rVert_{1}\leq(1+\delta)\lVert\boldsymbol{S}\rVert_{1},
				\label{eq:comple}
			\end{equation}
			for all matrices $\boldsymbol{S}\in \mathcal{T}_{\boldsymbol{x}}$ provided that $\overline{\overline{\mathcal{I}^{c}_{0}}} \geq \mu N$, with $\mu>0$ sufficiently large, for some constant $\delta\in(0,1)$.
			\item[A2] There exists a self-adjoint matrix of the form $\boldsymbol{Q}=\mathcal{B}^{*}(\boldsymbol{\lambda})$, where self-adjointness implies $\boldsymbol{\lambda} \in \mathbb{R}^{N^{2}}$, such that
			\begin{align}
				\boldsymbol{Q}_{\mathcal{T}^{\perp}_{\boldsymbol{x}}} \preceq - \boldsymbol{I}_{\mathcal{T}^{\perp}_{\boldsymbol{x}}}, \text{ and } \lVert \boldsymbol{Q}_{\mathcal{T}_{\boldsymbol{x}}}\rVert_{F} \leq \zeta,
				\label{eq:certificate}
			\end{align}
			where $\zeta\in (0,1)$, the matrix $\boldsymbol{Q}_{\mathcal{T}^{\perp}_{\boldsymbol{x}}}$ denotes the orthogonal projection of $\boldsymbol{Q}$ onto $\mathcal{T}^{\perp}_{\boldsymbol{x}}$ given by $\boldsymbol{Q}_{\mathcal{T}^{\perp}_{\boldsymbol{x}}} = (\boldsymbol{I}-\boldsymbol{x}\boldsymbol{x}^{H})\boldsymbol{Q}(\boldsymbol{I}-\boldsymbol{x}\boldsymbol{x}^{H})$, and $\mathcal{B}^{*}(\cdot)$ represents the adjoint operator of $\mathcal{B}(\cdot)$.
		\end{description}
		Then, it follows from \textbf{A1} and \textbf{A2} that $\boldsymbol{x}\boldsymbol{x}^{H}$ is the unique element that satisfies the inequality constrains in \eqref{eq:wavemaxAnalysis}. Therefore, $\boldsymbol{x}\boldsymbol{x}^{H}$ also uniquely satisfies the inequality constraints in \eqref{eq:wavemax}.
		\label{theo:guarantee}
	\end{theorem}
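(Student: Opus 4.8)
The plan is to follow the dual-certificate argument that is standard for semidefinite phase-retrieval relaxations (the PhaseLift/golfing-scheme framework of Candès--Li or Gross--Krahmer--Kueng), but adapted to the deterministic FrFT sampling structure and, crucially, anchored to the accurate initialization $\boldsymbol{x}^{(0)}$. First I would set up the feasibility reformulation already indicated in \eqref{eq:wavemaxAnalysis}: it suffices to show that $\boldsymbol{x}\boldsymbol{x}^{H}$ is the unique positive semidefinite solution to the linear system $\mathrm{Tr}(\boldsymbol{U}_{n,\alpha}\boldsymbol{Z}) = \boldsymbol{b}_{\alpha}[n]$ with $\boldsymbol{b}_{\alpha}[n] = |\boldsymbol{u}_{n,\alpha}^{H}\boldsymbol{x}|^{2} \geq 0$. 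The key structural facts I would invoke are Lemma~\ref{lem:solution} (which says $\tfrac1N\mathcal{B}^{*}(\boldsymbol{1})$ is close to the identity, so the "average" of the measurement matrices concentrates) and Theorem~\ref{theo:initi} (which says $\boldsymbol{X}^{(0)}$ is within relative Frobenius distance $\rho<1$ of $\boldsymbol{x}\boldsymbol{x}^{H}$).

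For \textbf{A1}, I would argue injectivity of $\mathcal{B}$ restricted to the tangent space $\mathcal{T}_{\boldsymbol{x}}$, together with the stated $\ell_1$ near-isometry. The mechanism: any $\boldsymbol{S}\in\mathcal{T}_{\boldsymbol{x}}$ has the form $\boldsymbol{x}\boldsymbol{s}^{H}+\boldsymbol{s}\boldsymbol{x}^{H}$, so $\mathrm{Tr}(\boldsymbol{U}_{n,\alpha}\boldsymbol{S}) = 2\,\mathcal{R}\!\left(\overline{\boldsymbol{u}_{n,\alpha}^{H}\boldsymbol{x}}\,\boldsymbol{u}_{n,\alpha}^{H}\boldsymbol{s}\right)$; summing $|\cdot|$ over the index set $\mathcal{I}_{0}^{c}$ and using that the $\boldsymbol{u}_{n,\alpha}$ built from rows of FrFT matrices form (after normalization) an approximately tight frame on the relevant band-limited subspace, one gets two-sided bounds proportional to $\lVert\boldsymbol{S}\rVert_{1}$. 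The genericity ("almost all signals") is exactly what rules out the measure-zero set of pathological $\boldsymbol{x}$ for which the frame vectors collapse a direction in $\mathcal{T}_{\boldsymbol{x}}$ — this is consistent with the polynomial-vanishing-locus language already used in Proposition~\ref{prop:uniqueness}. The cardinality condition $\overline{\overline{\mathcal{I}_{0}^{c}}}\geq\mu N$ with $\mu$ large guarantees enough measurements to make $\delta$ as small as needed.

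For \textbf{A2}, the plan is an explicit (non-golfing) construction of the dual certificate $\boldsymbol{Q}=\mathcal{B}^{*}(\boldsymbol{\lambda})$. I would take an ansatz like $\boldsymbol{Q} = \mathcal{B}^{*}(\boldsymbol{\lambda})$ with $\boldsymbol{\lambda}$ chosen so that the tangent-space component is killed to high accuracy while the normal component is pushed below $-\boldsymbol{I}$. Concretely: start from $-\tfrac1N\mathcal{B}^{*}(\boldsymbol{1})$, which by Lemma~\ref{lem:solution} is $\approx -\boldsymbol{I}$; project onto $\mathcal{T}_{\boldsymbol{x}}$ and correct using the near-isometry of \textbf{A1} on $\mathcal{T}_{\boldsymbol{x}}$ (invertibility of $\mathcal{B}^{*}\mathcal{B}|_{\mathcal{T}_{\boldsymbol{x}}}$) to obtain a $\boldsymbol{\lambda}$ whose image has negligible tangent part and normal part $\preceq -\boldsymbol{I}_{\mathcal{T}^{\perp}_{\boldsymbol{x}}}$, with the residual controlled by $\delta$ and $\epsilon$. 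The key difference from the random-measurement literature — and the part that needs the initialization — is that without probabilistic concentration one cannot appeal to generic incoherence; instead the certificate is built \emph{around} $\boldsymbol{X}^{(0)}$, using the guarantee $\lVert\boldsymbol{X}^{(0)}-\boldsymbol{x}\boldsymbol{x}^{H}\rVert_{F}\leq\rho\lVert\boldsymbol{x}\boldsymbol{x}^{H}\rVert_{F}$ of Theorem~\ref{theo:initi} to certify that the objective $-\mathrm{Tr}(\boldsymbol{Z}^{H}\boldsymbol{X}^{(0)})$ is minimized at $\boldsymbol{x}\boldsymbol{x}^{H}$ on the feasible set.

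Finally I would close the standard argument: given any feasible $\boldsymbol{Z}\succeq 0$ with $\mathcal{B}(\boldsymbol{Z})=\mathcal{B}(\boldsymbol{x}\boldsymbol{x}^{H})$, write $\boldsymbol{H}=\boldsymbol{Z}-\boldsymbol{x}\boldsymbol{x}^{H}$, decompose $\boldsymbol{H}=\boldsymbol{H}_{\mathcal{T}_{\boldsymbol{x}}}+\boldsymbol{H}_{\mathcal{T}^{\perp}_{\boldsymbol{x}}}$, use $\mathcal{B}(\boldsymbol{H})=0$ with \textbf{A1} to bound $\lVert\boldsymbol{H}_{\mathcal{T}_{\boldsymbol{x}}}\rVert_{1}$ by a small multiple of $\lVert\boldsymbol{H}_{\mathcal{T}^{\perp}_{\boldsymbol{x}}}\rVert_{1}$, then pair $\boldsymbol{H}$ with $\boldsymbol{Q}$: since $\langle\boldsymbol{Q},\boldsymbol{H}\rangle = \langle\boldsymbol{\lambda},\mathcal{B}(\boldsymbol{H})\rangle = 0$, the inequalities in \textbf{A2} force $\boldsymbol{H}_{\mathcal{T}^{\perp}_{\boldsymbol{x}}}\preceq 0$; combined with $\boldsymbol{Z}\succeq 0$ (so $\boldsymbol{H}_{\mathcal{T}^{\perp}_{\boldsymbol{x}}}\succeq -(\text{something PSD})$ and in fact the normal block of $\boldsymbol{Z}$ is PSD) this yields $\boldsymbol{H}_{\mathcal{T}^{\perp}_{\boldsymbol{x}}}=0$, hence $\boldsymbol{H}=0$. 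I expect the main obstacle to be \textbf{A2}: constructing the dual certificate deterministically, without the usual random-model concentration, so that the cross-terms between the coarse approximation $-\tfrac1N\mathcal{B}^{*}(\boldsymbol1)$ and the tangent-space correction remain below the thresholds $\zeta$ and the $-\boldsymbol{I}_{\mathcal{T}^{\perp}_{\boldsymbol{x}}}$ bound simultaneously — this is exactly the "non-probabilistic isometries" point the paper flags as new, and it will hinge on carefully quantifying how the FrFT-frame incoherence interacts with the band-limited structure and with the accuracy parameter $\rho$ from Theorem~\ref{theo:initi}.
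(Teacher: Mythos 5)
Your overall framework (feasibility reformulation, restricted $\ell_1$-isometry on $\mathcal{T}_{\boldsymbol{x}}$, dual certificate, and the closing argument pairing $\boldsymbol{H}=\boldsymbol{Z}-\boldsymbol{x}\boldsymbol{x}^{H}$ with $\boldsymbol{Q}$) is the same as the paper's, and your treatment of \textbf{A1} is in the same spirit: the paper also reduces to the rank-$\leq 2$ structure of $\boldsymbol{S}\in\mathcal{T}_{\boldsymbol{x}}$ (via its eigendecomposition $\boldsymbol{S}=\sigma_{1}\boldsymbol{a}\boldsymbol{a}^{H}+\sigma_{2}\boldsymbol{b}\boldsymbol{b}^{H}$ rather than your $\boldsymbol{x}\boldsymbol{s}^{H}+\boldsymbol{s}\boldsymbol{x}^{H}$ parametrization) and invokes the near-tight-frame property of the FrFT rows to obtain \eqref{eq:comple}. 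Your explicit final uniqueness deduction is fine and is actually more detailed than the paper, which defers that standard implication to the PhaseLift literature.

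The genuine gap is in \textbf{A2}, and you have identified it yourself without closing it. You propose a one-shot construction: start from $-\tfrac{1}{N}\mathcal{B}^{*}(\boldsymbol{1})\approx-\boldsymbol{I}$ and correct the tangent component by inverting $\mathcal{B}\mathcal{B}^{*}$ restricted to $\mathcal{T}_{\boldsymbol{x}}$. But the correction term then lives in $\mathrm{Range}(\mathcal{B}^{*})$ and leaks into $\mathcal{T}^{\perp}_{\boldsymbol{x}}$; to preserve $\boldsymbol{Q}_{\mathcal{T}^{\perp}_{\boldsymbol{x}}}\preceq-\boldsymbol{I}_{\mathcal{T}^{\perp}_{\boldsymbol{x}}}$ you must bound this leakage in spectral norm, and the $\ell_{1}$ near-isometry of \textbf{A1} gives you no such operator-norm control of $\mathcal{P}_{\mathcal{T}^{\perp}_{\boldsymbol{x}}}\mathcal{B}^{*}\mathcal{B}\,\mathcal{P}_{\mathcal{T}_{\boldsymbol{x}}}$. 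This is exactly the missing quantitative ingredient, and it is what the paper supplies differently: Lemma~\ref{lem:card} and Lemma~\ref{theo:certificate} show (leaning on the machinery of Theorem~\ref{theo:initi} and the cardinality condition $\overline{\overline{\mathcal{I}^{c}_{0}}}\geq\mu N$) that every tangent-space matrix $\boldsymbol{S}$ admits an approximation $\boldsymbol{Q}\in\mathrm{Range}(\mathcal{B}^{*})$ with $\lVert\boldsymbol{Q}-\boldsymbol{S}\rVert_{2}\leq\delta_{1}\lVert\boldsymbol{S}\rVert_{F}$, $\delta_{1}<\tfrac{1}{4\sqrt{2}}$, which simultaneously controls the tangent error in Frobenius norm and the normal-space leakage in spectral norm. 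The paper then runs the golfing scheme: partition the measurements into $K+1$ batches, iterate $\boldsymbol{X}^{(t)}=\boldsymbol{X}^{(t-1)}-\mathcal{P}_{\mathcal{T}_{\boldsymbol{x}}}(\boldsymbol{Z}^{(t)})$ so the tangent residual decays geometrically, and sum the leakage terms to stay strictly above $-2\boldsymbol{I}_{\mathcal{T}^{\perp}_{\boldsymbol{x}}}+\boldsymbol{I}_{\mathcal{T}^{\perp}_{\boldsymbol{x}}}$ using Lemma~\ref{lem:solution}; this is precisely the device that avoids the operator-norm bound your one-shot inversion would require. Without either that per-step approximation lemma or a proof of the missing operator-norm estimate, your construction of the certificate in \eqref{eq:certificate} does not go through. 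A secondary, smaller point: the certificate's role here is only to certify uniqueness of the feasible point, so tying its construction to the objective $-\mathrm{Tr}(\boldsymbol{Z}^{H}\boldsymbol{X}^{(0)})$ is a red herring; the initialization enters the paper's argument only through the approximation property of Lemma~\ref{lem:card}, not through the objective.
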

	\begin{IEEEproof}
		See Appendix \ref{app:guarantees}.
	\end{IEEEproof}
	
	Theorem \ref{theo:guarantee} establishes that a band-limited signal $\boldsymbol{x}$ is uniquely determined from the linear operator $\mathcal{B}$. Condition \textbf{A1}, 
	akin to the local restricted isometry property, ensures the \textit{robust injectivity} of the mapping $\mathcal{B}$ restricted to elements in $\mathcal{T}_{\boldsymbol{x}}$. 
	Meanwhile, \textbf{A2} indicates the existence of an \textit{approximate dual certificate}. In general, injectivity paired with an exact dual certificate enables exact reconstruction. Thus, Theorem \ref{theo:guarantee} demonstrates that a robust form of injectivity, combined with an approximate dual certificate, guarantees exact recovery, as discussed in \cite[Section 2.1]{candes2013phaselift} and further explored in \cite{gross2011recovering}. 
	
	\section{Numerical Experiments}
	\label{sec:results}
	\begin{figure*}[t]
		\centering
		\includegraphics[width=1\linewidth]{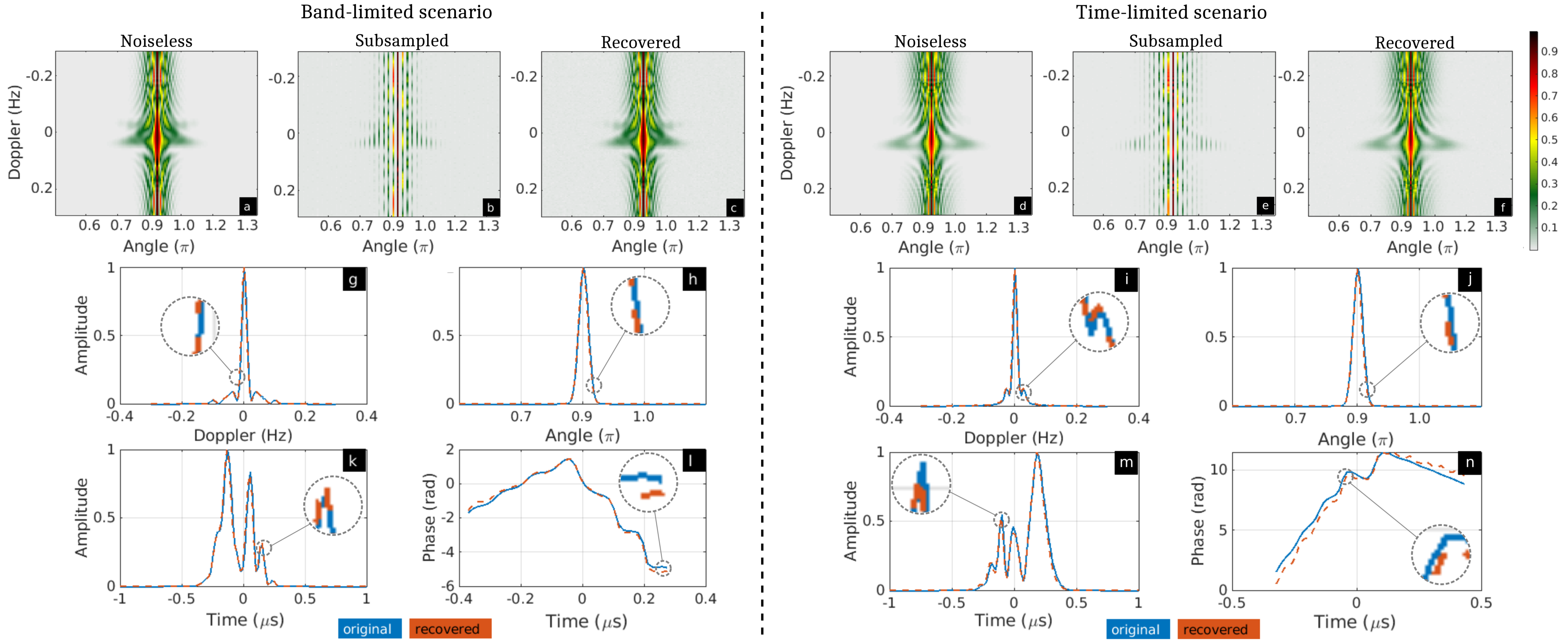}
		\vspace{-1.5em}
		\caption{Reconstructed band- and time-limited signals when 75\% of the samples of their AFs are uniformly removed. The sparse AFs were corrupted by noise such that SNR = $20$ dB. The attained relative error as in \eqref{eq:distance} was $5\times 10^{-2}$ for both signals. For the band-limited [time-limited] signal, (a) [(d)],(b) [(e)], and (c) [(f)] are the original, sub-sampled, and recovered AFs, respectively; (g) [(i)], and (h) [(j)] are 1-D slices of AFs in the angle and Doppler dimensions, respectively; (k) [(m)] and (l) [(n)] are, respectively, magnitude and phase of recovered (red) signal juxtaposed over the original (blue).
		}
		\label{fig:incomplete_noisyresults1}
		\vspace{-1em}
	\end{figure*}
	
	\begin{figure*}[t]
		\centering
		\includegraphics[width=1\linewidth]{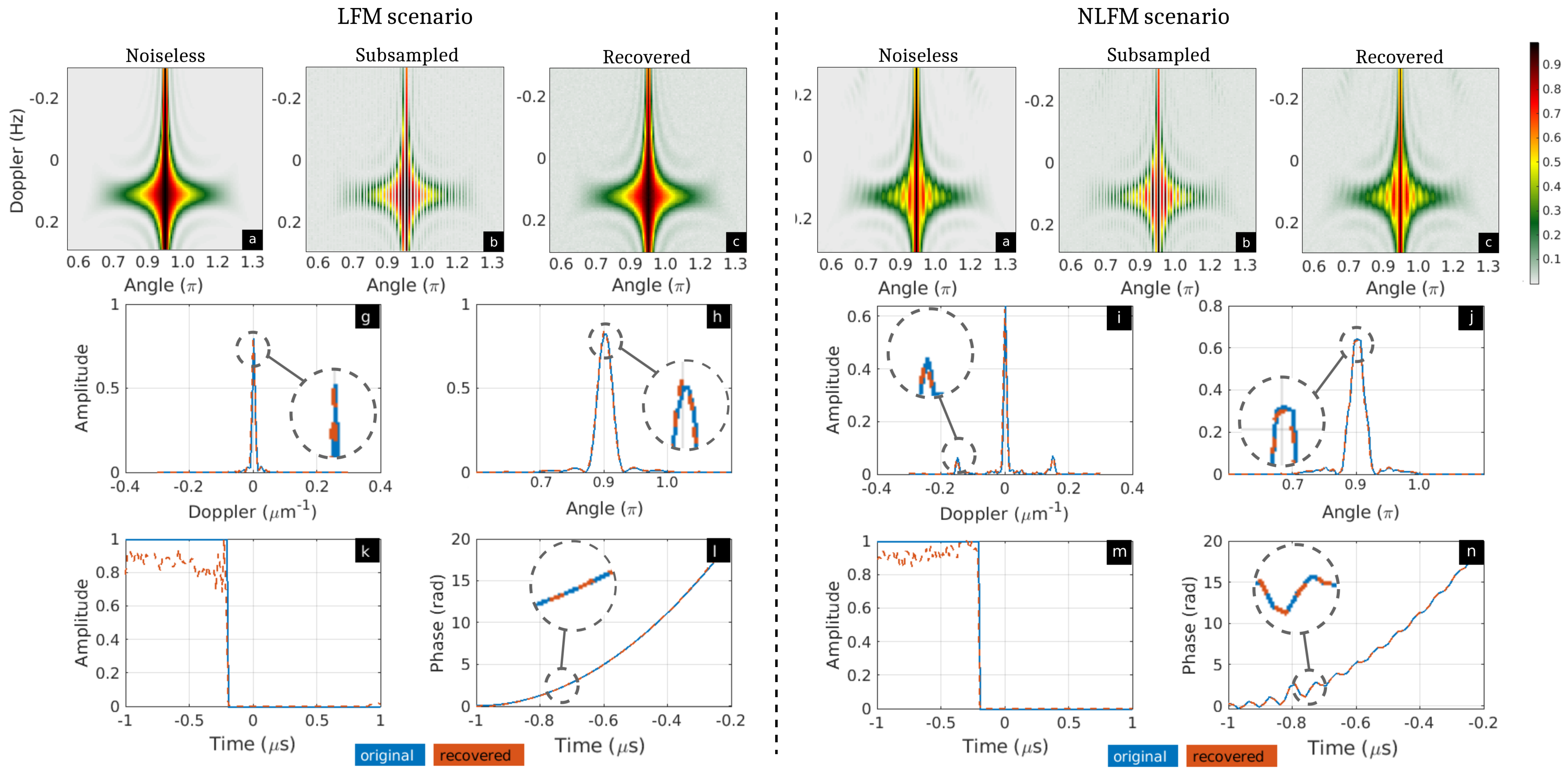}
		\vspace{-2em}
		\caption{Reconstructed LFM and NLFM signals when 75\% of the samples of their AFs are uniformly removed. The sparse AFs were corrupted by noise such that SNR = $20$ dB. The attained relative error as in \eqref{eq:distance} was $5\times 10^{-2}$ for both signals. For the LFM [NLFM] signal, (a) [(d)],(b) [(e)], and (c) [(f)] are the original, sub-sampled, and recovered AFs, respectively; (g) [(i)], and (h) [(j)] are 1-D slices of AFs in the angle and Doppler dimensions, respectively; (k) [(m)] and (l) [(n)] are, respectively, magnitude and phase of recovered (red) signal juxtaposed over the original (blue).
		}
		\label{fig:incomplete_noisyresults2}
		\vspace{-1em}
	\end{figure*}
	
	We now study the performance of Algorithm \ref{alg:algorithm} to estimate the signal of interest $\boldsymbol{x}$ from its FrFT-based AF in terms of relative error as \eqref{eq:distance}. We also compare Algorithm \ref{alg:algorithm}, which is a convex approach to solve the radar PR problem, with the previous non-convex strategy in \cite{pinilla2024phase} that follows a short-time Fourier transform (STFT)-based PR formulation. 
	We built a set of $\left\lceil \frac{N-1}{2} \right\rceil$-band-limited signals that conform to a Gaussian power spectrum. 
	Specifically, each signal ($N=128$ grid points) is produced via the FT of a complex vector with a Gaussian-shaped amplitude. 
	Next, we multiply the obtained power spectrum by a uniformly distributed random phase\footnote{All simulations were implemented in Matlab R2021a on an Intel Core i5 2.5 GHz CPU with 16 GB RAM.}. We normalize the Doppler and delays to $[0,1]$.

	We evaluate the performance of the proposed method under noisy and noiseless scenarios for complete or sparse samples of the AF at different values of signal-to-noise-ratio (SNR), defined as SNR$ = 10\log_{10}(\lVert \boldsymbol{A} \rVert^{2}_{\mathcal{F}}/\lVert \boldsymbol{\sigma} \rVert^{2}_{\text{2}})$, where $\boldsymbol{\sigma}$ is the variance of the noise. The sparse samples of the AF were uniformly chosen. 
	\vspace{0.5em}
	
	\noindent\textbf{Recovery examples}: To demonstrate the effectiveness of solving \eqref{eq:wavemax}, Figure \ref{fig:incomplete_noisyresults1} presents the estimated band- and time-limited signals from a noisy sparse AF. Here, $75$\% of the samples from the AF are uniformly removed, retaining every third angle (dimension $\alpha$) starting from the first. The AF was corrupted by white noise with SNR = $20$ dB. These results provide numerical validation for Proposition~\ref{prop:uniqueness}, indicating that not all AF samples are necessary to estimate the underlying signal. The solution to \eqref{eq:wavemax} closely approximates $\boldsymbol{x}$, even under the assumption of an imperfectly designed sparse AF, highlighting the effectiveness of \eqref{eq:wavemax} for the FrFT-based AF PR problem.\vspace{0.5em}
	
	\noindent\textbf{Modulated signals}: We now investigate the performance of Algorithm \ref{alg:algorithm} in estimating structured signals from their sparse noisy AFs. Here, we consider the LFM and NLFM waveforms,
	\begin{align}
		\boldsymbol{x}[n] = \boldsymbol{a}[n]e^{i\pi \varphi[n]},
	\end{align}
	where 
	\begin{align}
		\varphi[n] &= \begin{dcases} \pi r (\Delta tn)^{2}, \hspace{0.5em}&(\text{LFM}), \nonumber\\
			\pi r t^{2} + \sum_{l=1}^{L}\beta_{l}\cos(2\pi l \Delta tn/T), \hspace{0.5em}&(\text{NLFM}),\end{dcases}
	\end{align}
	with $T$ as the duration of the pulse, $\Delta t$ as the sampling size in time, $r=\frac{\Delta f}{T}$ such that $\Delta f$ is the swept bandwidth, and $L>0$ is an integer, and $\beta_{l} = \frac{0.4T}{l}$. We set $\Delta f = 128\times 10^{3}$, $\Delta t = 0.4\times 10^{-6}$, and 
	\begin{equation}
		\boldsymbol{a}[n] = \left \lbrace \begin{array}{ll}
			1, & 0\leq \Delta tn \leq T, \\
			0, & \text{otherwise}.
		\end{array} \right.
	\end{equation}
	We uniformly removed 50\% of the samples from the AF. Figure \ref{fig:incomplete_noisyresults2} shows the recovered signal for SNR $=20$ dB. 
	We observe that Algorithm \ref{alg:algorithm} is able to estimate the phase of the pulses accurately, while the reconstructed magnitudes show some artifacts; note that we limited the reconstruction quality here by taking only 50\% AF measurements. Again, Proposition~\ref{prop:uniqueness} is numerically validated, and not all AF samples are needed to estimate $\boldsymbol{x}$.\vspace{0.5em}
	
	Note that the reconstructed waveforms in Figures~\ref{fig:incomplete_noisyresults1}, and \ref{fig:incomplete_noisyresults2} are obtained from severely undersampled AF measurements, where 75\% of the samples were uniformly excised, compounded by additive noise contamination. Under such conditions, perfect waveform reconstruction is extremely challenging, leading to observable artifacts in either phase coherence or amplitude fidelity depending on the waveform's structure. For the examined LFM and NLFM waveforms, the dominant manifestation is amplitude distortion -- a consequence of their inherent sensitivity to sparse AF sampling and noise perturbation during the reconstruction process. However, for the waveforms in Figure~\ref{fig:incomplete_noisyresults1}, the main distortion is observed in the phase. This phenomenon arises because missing AF samples disproportionately disrupt the waveform estimation.
	
	\begin{figure}[t]
		\centering
		\includegraphics[width=1\linewidth]{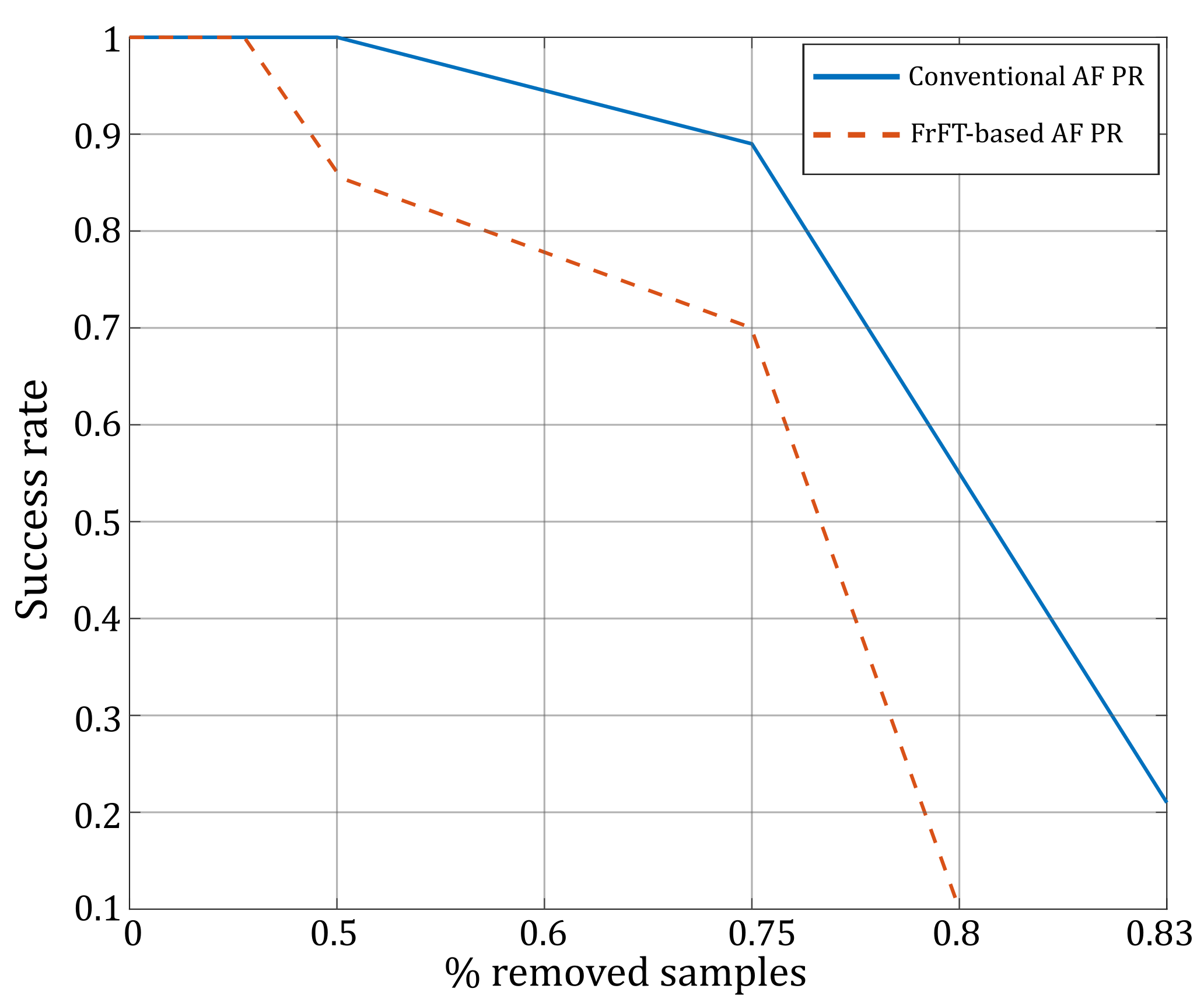}
		\caption{Empirical success rate of solving \eqref{eq:wavemax} for complete and sparse AF samples in the absence of noise using the proposed Algorithm \ref{alg:algorithm} for FrFT-based AF and the non-convex BanRaW method of \cite{pinilla2024phase} for the conventional AF.}
		\label{fig:success}
		\vspace{-1em}
	\end{figure}
	
	\noindent\textbf{Statistical performance}:
	We study the empirical success rate of solving \eqref{eq:wavemax} through Algorithm \ref{alg:algorithm} for complete and sparse samples of the AF in the absence of noise. 
	We declare a trial successful when the returned estimate attains a relative error in \eqref{eq:distance} smaller than $1\times 10^{-6}$. The success rate and the number of iterations are averaged over 100 pulses, where $20\%$ of these pulses are LFM and NLFM ($10\%$ each class) waveforms as described in the case of modulated signals above. Figure \ref{fig:success} compares the success rates of Algorithm \ref{alg:algorithm} and non-convex BanRaW \cite{pinilla2024phase} to fully reconstruct the signal $\boldsymbol{x}$ for complete ($0$\% removed samples) and sparse AF measurements. Although the probability of success of the convex Algorithm \ref{alg:algorithm} is smaller than the nonconvex method \cite{pinilla2024phase}, the upshot of Algorithm \ref{alg:algorithm} is that it theoretically guarantees a unique solution as in Theorem \ref{theo:guarantee}. This guarantee is absent in~\cite{pinilla2024phase}.\vspace{0.5em}
	
	\noindent\textbf{Initialization procedure error}: 
	Finally, we compare the performance of the vector $\boldsymbol{x}^{(0)}$ obtained by Algorithm \ref{alg:initialization} against the initialization procedure introduced in \cite{pinilla2024phase} to approximate the pulse $\boldsymbol{x}$. We analyze this under both noisy and noiseless scenarios for complete as well as sparse samples of the AF. Since \eqref{eq:finalInit} involves the computation of the leading eigenvector of a matrix, recall that Algorithm \ref{alg:initialization} follows a power iteration strategy to estimate $\boldsymbol{x}^{(0)}$. We fixed the number of iterations of this method at $100$. We numerically determine the relative error in \eqref{eq:distance} by averaging it over 100 trials. Figure \ref{fig:init} shows that the initialization procedure proposed in Algorithm \ref{alg:initialization} produces a more accurate approximation of $\boldsymbol{x}$ than BanRaW using the same number of samples because the returned relative error of the former is smaller than that obtained using the latter. This validates the analytical result that \eqref{eq:finalInit} does not require all samples of $\boldsymbol{A}$ to approximate $\boldsymbol{x}$, a fact that is not theoretically guaranteed in \cite{pinilla2024phase}. Further, it is worth mentioning that in Algorithm \ref{alg:initialization} the leading eigenvector is derived from the matrix
		\begin{align*}
			\boldsymbol{G}_{0} = \frac{1}{ \overline{\overline{\mathcal{I}^{c}_{0}}}} \sum_{(n,\alpha) \in \mathcal{I}^{c}_{0}} \boldsymbol{u}_{n,\alpha}\boldsymbol{u}_{n,\alpha}^{H}
		\end{align*}
		which is structurally decoupled from the noisy AF. As a result, while noise perturbs the estimation of the index set $\mathcal{I}^{c}_{0}$, it does not have a direct influence on the spectral decomposition of $\boldsymbol{G}_{0}$. Consequently, the eigenvector computation remains invariant to noise corruption, preserving the fidelity of the waveform recovery. Isolating the eigendecomposition from the estimation of the noise-dependent set implies that the Algorithm \ref{alg:initialization} is more robust to noise than the strategy proposed in~\cite{pinilla2024phase}.
	
	\begin{figure}[ht]
		\centering
		\includegraphics[width=1\linewidth]{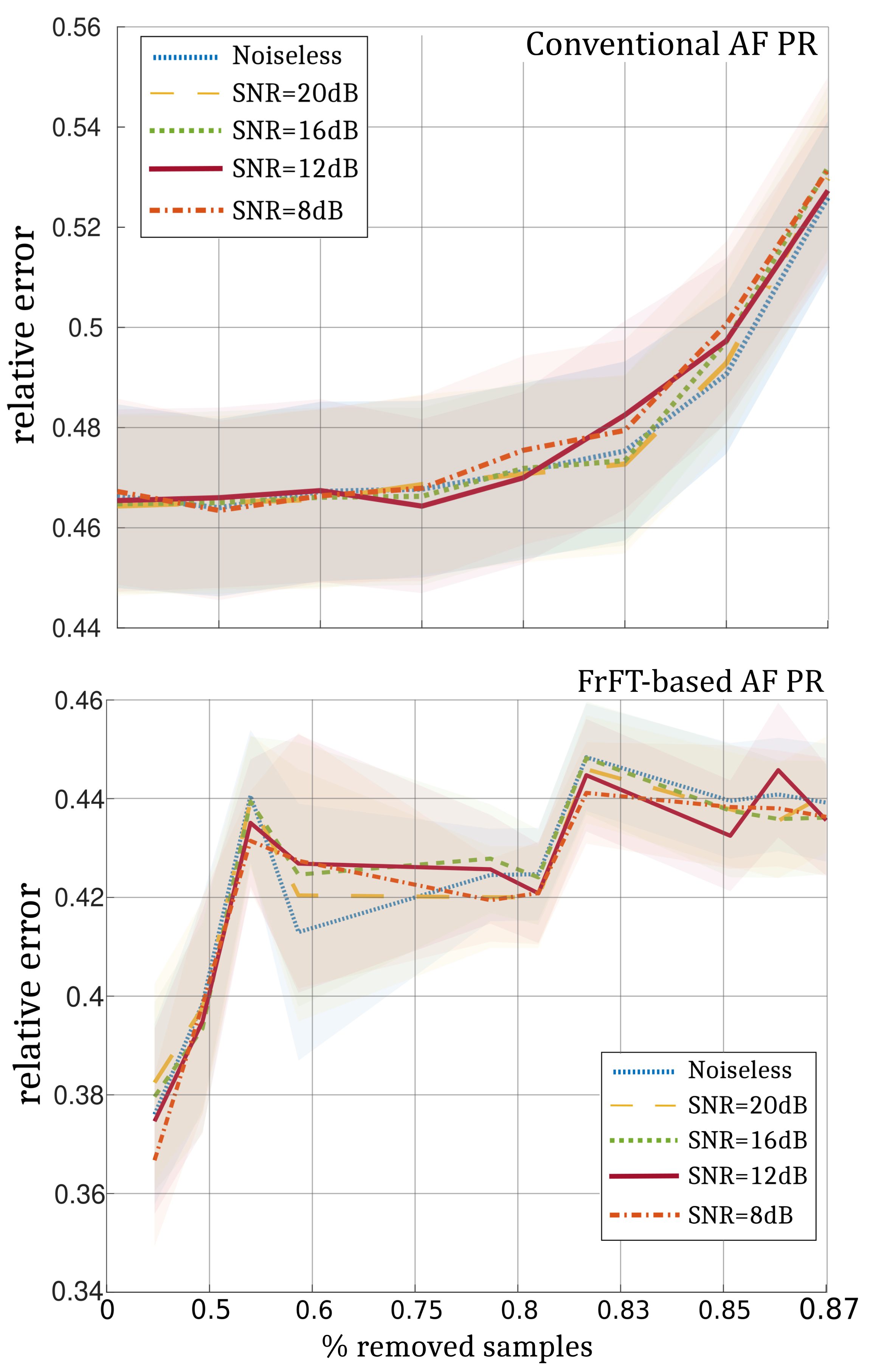}
		\caption{Relative error using the initialization procedures of Algorithm \ref{alg:initialization} and BanRaW \cite{pinilla2024phase} to estimate the underlying signal $\boldsymbol{x}$ at different SNRs plotted with respect to different percentages of uniformly removed delays in \cite{pinilla2024phase} and rotations in Algorithm \ref{alg:initialization}. The shaded background regions represents the variance of the relative error over 100 trials, while the solid lines are the mean.}
		\label{fig:init}
		\vspace{-1em}		
	\end{figure}
	\section{Summary}
	\label{sec:conclusion}
	While both AF and FrFT belong to specific classes of quadratic TFRs, the FrFT-based AF allows further design possibilities. Other generalization of FrFT such as special affine FTs \cite{wolf1979integral,abe1994optical,bernardo1996abcd} offer even more degrees of freedom, but their PR guarantees are difficult to come by. Other TFR kernels (see, e.g., \cite[Tables 3.3.2, 3.3.3, 6.1.2]{boashash2016time} and \cite{hlawatsch1992linear} for a detailed list) such as WVD, Rihaczek \cite{rihaczek1968signal}, Levin \cite{levin2006instantaneous}, Choi-Williams \cite{choi1989improved}, Zhao-Atlas-Marks \cite{zhao1990use}, Born-Jordan \cite{born1925zur}, and Cheriet-Belouchrani kernel \cite{abed2012time} provide valuable insights. However, they are hindered by issues such as cross-term interference, a trade-off between time and frequency resolution, less accurate signal representations, and challenges in developing efficient PR recovery algorithms.
	
	In this regard, we analytically demonstrated that our approach recovers band/time-limited signals, up to trivial ambiguities, from their FrFT-based AFs. Our convex optimization problem to estimate these signals under complete/sparse noisy and noiseless scenarios is an improvement over previous approaches. It requires a designed approximation of the radar signal obtained by extracting the leading eigenvector of a matrix depending on the AF. This approximation is employed to estimate the underlying signal satisfying a convex relaxation of the measurement constraints. Further, in the case of sparse data, our Proposition~\ref{prop:uniqueness} suggests that the full AF is not required to guarantee uniqueness. Numerical experiments showed robust recovery of both the magnitude and phase of the signal even from noisy sparse data. This technique paves way for addressing AF-based radar PR for more complex TFR kernels.

	\appendices
	
	\section{Proof that $\mathcal{T}(\boldsymbol{x})$ is closed}
	\label{app:distance}
	If $\mathcal{T}(\boldsymbol{x})$ is closed, then it guarantees the existence of $\boldsymbol{z}\in \mathcal{T}(\boldsymbol{x})$ such that 
	\begin{align}
		\text{dist}(\boldsymbol{x},\boldsymbol{s}) = \lVert \boldsymbol{s} - \boldsymbol{z} \rVert_{2},
		\label{eq:wellDefined}
	\end{align}
	for any $\boldsymbol{s}\in \mathbb{C}^{N}$. We have the following Lemma~\ref{lem:closed} that guarantees the existence of the minimum of $\lVert \boldsymbol{q}-\boldsymbol{z}\rVert_{2}$ in \eqref{eq:distance}.
	\begin{lemma}\label{lem:closed}
		The set 
		\begin{align}
			\mathcal{T}(\boldsymbol{x})=&\left\lbrace\boldsymbol{z}\in \mathbb{C}^{N} \hspace{0.2em}:\hspace{0.2em} \boldsymbol{z}[n]=e^{i\beta}e^{i b n}\boldsymbol{x}[\epsilon n - a] \text{ for } \beta,b\in \mathbb{R}, \right.\nonumber\\
			&\left. \text{ and }\epsilon=\pm 1, a\in \mathbb{Z} \right \rbrace,
			\label{eq:levelSet}
		\end{align}
		is closed for $\boldsymbol{x}\in \mathbb{C}^{N}$.
	\end{lemma}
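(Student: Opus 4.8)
The plan is to exhibit $\mathcal{T}(\boldsymbol{x})$ as a \emph{finite} union of compact sets, each realized as the continuous image of a compact parameter domain, and then conclude via the elementary facts that continuous images of compacta are compact and that finite unions of closed sets are closed.

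First I would invoke the standing convention that the entries of $\boldsymbol{x}\in\mathbb{C}^{N}$ are periodic with period $N$. Then $\boldsymbol{x}[\epsilon n-a]$ depends on $a$ only through its residue modulo $N$, so in the definition of $\mathcal{T}(\boldsymbol{x})$ the integer shift $a\in\mathbb{Z}$ may be restricted to the finite set $\{0,1,\dots,N-1\}$ without altering the set. Similarly, since $n$ ranges over integers, both $\beta\mapsto e^{i\beta}$ and $b\mapsto e^{ibn}$ are $2\pi$-periodic, so the real parameters $\beta$ and $b$ may be restricted to the compact interval $[0,2\pi]$. Hence
\begin{align}
\mathcal{T}(\boldsymbol{x})=\bigcup_{\epsilon\in\{-1,+1\}}\ \bigcup_{a=0}^{N-1}\mathcal{T}_{\epsilon,a},
\end{align}
where $\mathcal{T}_{\epsilon,a}:=\{\boldsymbol{z}\in\mathbb{C}^{N}\,:\,\boldsymbol{z}[n]=e^{i\beta}e^{ibn}\boldsymbol{x}[\epsilon n-a]\text{ for all }n,\ \beta,b\in[0,2\pi]\}$.

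Next I would fix $(\epsilon,a)$ and define the map $\Phi_{\epsilon,a}\colon[0,2\pi]\times[0,2\pi]\to\mathbb{C}^{N}$ by $(\Phi_{\epsilon,a}(\beta,b))[n]=e^{i\beta}e^{ibn}\boldsymbol{x}[\epsilon n-a]$ for $n=0,\dots,N-1$. Each coordinate is a product of continuous functions of $(\beta,b)$, so $\Phi_{\epsilon,a}$ is continuous, and its domain $[0,2\pi]^{2}$ is compact; therefore $\mathcal{T}_{\epsilon,a}=\Phi_{\epsilon,a}([0,2\pi]^{2})$ is a compact subset of the metric space $\mathbb{C}^{N}$, hence closed. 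Finally, $\mathcal{T}(\boldsymbol{x})$ is a finite union of the closed sets $\mathcal{T}_{\epsilon,a}$, and a finite union of closed sets is closed.

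The only step meriting explicit justification — and thus the ``main obstacle,'' though it is a mild one — is the reduction of the a priori unbounded parameter ranges ($a\in\mathbb{Z}$, $\beta,b\in\mathbb{R}$) to the compact ones used above; this is precisely where the periodicity of the entries of $\boldsymbol{x}$ and the $2\pi$-periodicity of the complex exponentials evaluated at integers are needed. No quantitative estimates enter the argument.
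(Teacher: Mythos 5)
Your proof is correct, and it takes a genuinely different route from the paper's. The paper represents each trivial ambiguity as multiplication by a structured unitary matrix (a reversal permutation, a circulant shift, a diagonal modulation, and a global phase), takes a convergent sequence $\boldsymbol{z}_n=\boldsymbol{Q}_n\boldsymbol{x}\to\boldsymbol{z}_*$, uses unitary invariance of the norm to get $\boldsymbol{Q}_n^{H}\boldsymbol{z}_*\to\boldsymbol{x}$, and then asserts the existence of a limiting $\boldsymbol{Q}_*$ of the same structured form. You instead exploit the paper's standing convention that entries are indexed periodically (so $a$ reduces modulo $N$) and the $2\pi$-periodicity of $e^{i\beta}$ and of $e^{ibn}$ for integer $n$ (so $\beta,b$ reduce to $[0,2\pi]$), writing $\mathcal{T}(\boldsymbol{x})$ as a finite union over $(\epsilon,a)$ of continuous images $\Phi_{\epsilon,a}([0,2\pi]^2)$, hence a finite union of compact sets, hence closed. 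Your argument buys two things: it yields the stronger conclusion that $\mathcal{T}(\boldsymbol{x})$ is compact (so the minimum in the distance \eqref{eq:distance} is attained for free), and it supplies explicitly the step the paper leaves implicit --- the passage to a limiting ambiguity of the same form is exactly a compactness statement about the parameter family (finitely many shifts and reflections, phases on a torus), which the sequential matrix argument would need via a subsequence extraction. The paper's formulation, in turn, emphasizes that the ambiguities act isometrically, a fact it reuses; your parametric argument does not need that structure at all. Both proofs are valid.
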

	
	\begin{proof}
		Observe that a given element $\boldsymbol{z}$ in the set $\mathcal{T}(\boldsymbol{x})$ is the product between an orthogonal matrix and the signal $\boldsymbol{x}$. This is concluded as follows.
		\begin{itemize}
			\item The time inversion on $\boldsymbol{x}$ is achieved by the product of an orthogonal matrix $\boldsymbol{S}_{1}\in \mathbb{R}^{N\times N}$ (which depends on $\epsilon\pm 1$), and $\boldsymbol{x}$. When $\epsilon=1$, $\boldsymbol{S}_{1}$ is an identity matrix of size $N$; otherwise, it is a permutation matrix that models time reversal, i.e. $\boldsymbol{x}[-n]$.
			
			\item A constant time shift of $a$ on the signal $\boldsymbol{x}$ according to $\mathcal{T}(\boldsymbol{x})$ is modelled by the product of a circulant and orthogonal) matrix $\boldsymbol{S}_{2}\in \mathbb{R}^{N\times N}$, which depends on the translation parameter $a$, and $\boldsymbol{x}$. 
			
			\item The time scale ambiguity in $\mathcal{T}(\boldsymbol{x})$ is modelled by the product of a diagonal (and, therefore, orthogonal) matrix $\boldsymbol{S}_{3}\in \mathbb{C}^{N\times N}$, which depends on the parameter $b$, and $\boldsymbol{x}$. The matrix $\boldsymbol{S}_{3}$ is orthogonal because it takes values from the complex unit circle.
			
			\item The global phase shift is a constant taken from the complex unit circle governed by $\beta$.
		\end{itemize}
		It follows from above that a given element 
		\begin{align}
			\boldsymbol{z} = \boldsymbol{Q}\boldsymbol{x}, 
			\label{eq:auxExpression}
		\end{align}
		where the orthogonal matrix
		\begin{align}
			\boldsymbol{Q} = e^{i\beta}\boldsymbol{S}_{s}\boldsymbol{S}_{r}\boldsymbol{S}_{u} \in \mathbb{C}^{N\times N},
			\label{eq:matrixQ}
		\end{align}
		for $s,r,u\in \{1,2,3\}$. Note that the values of $s,r,u$, which determine the product order, directly depend on $\epsilon$, $b$, and $a$.
		
		To prove that the set $\mathcal{T}(\boldsymbol{x})$ is closed,  
		take a sequence $(\boldsymbol{z}_{n})\subset \mathcal{T}(\boldsymbol{x})$ that converges to $\boldsymbol{z}_{*}$. We have to show that $\boldsymbol{z}_{*}\in \mathcal{T}(\boldsymbol{x})$. Since $\boldsymbol{z}_{n}\rightarrow \boldsymbol{z}_{*}$, then for any $\delta>0$, there exists a natural number $n(\delta)$ such that
		\begin{align}
			\lVert \boldsymbol{z}_{n} - \boldsymbol{z}_{*} \rVert_{2} < \delta, \hspace{0.5em} \forall n>n(\delta).
			\label{eq:closeSet}
		\end{align}
		Each $\boldsymbol{z}_{n}$ belongs to $\mathcal{T}(\boldsymbol{x})$. Hence, from \eqref{eq:auxExpression}, there exists $\boldsymbol{Q}_{n}\in \mathbb{C}^{N\times N}$ orthogonal matrix with the form of \eqref{eq:matrixQ} such that $\boldsymbol{z}_{n}=\boldsymbol{Q}_{n}\boldsymbol{x}$. Consequently, from \eqref{eq:closeSet}, for any $\delta>0$
		\begin{align}
			\lVert \boldsymbol{z}_{n} - \boldsymbol{z}_{*} \rVert_{2} = \lVert \boldsymbol{x} - \boldsymbol{Q}_{n}^{H}\boldsymbol{z}_{*} \rVert_{2} < \delta , \hspace{0.5em} \forall n>n(\delta),
		\end{align}
		for some natural number $n(\delta)$ because $\boldsymbol{Q}_{n}$ is orthogonal. The above equation \textit{ipso facto} means that the sequence $\boldsymbol{r}_{n}=\boldsymbol{Q}_{n}^{H}\boldsymbol{z}_{*}$ converges to $\boldsymbol{x}$. 	This leads to the existence of $\boldsymbol{Q}_{*}\in \mathbb{C}^{N\times N}$ following the form of \eqref{eq:matrixQ} such that $\boldsymbol{x}=\boldsymbol{Q}_{*}^{H}\boldsymbol{z}_{*}$. Equivalently, we obtain that $\boldsymbol{z}_{*} = \boldsymbol{Q}_{*}\boldsymbol{x}$ implying that $\boldsymbol{z}_{*}\in \mathcal{T}(\boldsymbol{x})$ because $\boldsymbol{Q}_{*}$ follows the form of \eqref{eq:matrixQ}. This completes the proof.
	\end{proof}

	\section{Proof of Lemma \ref{lem:convexWavemax}}
	\label{app:convexWavemax}
	In order to prove optimization problem in \eqref{eq:wavemax} is convex, we have to show the set $$\mathcal{O}=\left\{\boldsymbol{Z}\in \mathbb{C}^{N\times N} | \left\lvert  \text{Tr}\left(\boldsymbol{B}_{\alpha,k}\boldsymbol{Z}\right) \right\rvert \leq \sqrt{\boldsymbol{A}[\alpha,k]}, \forall n,\alpha \right\},$$ is convex, for matrices $\boldsymbol{B}_{\alpha,k}$ as defined in \eqref{eq:matricesB}. Then, take $\boldsymbol{Z}_{1},\boldsymbol{Z}_{2}\in \mathcal{O}$ and $\lambda \in (0,1)$. Define, $\boldsymbol{Z} = \lambda \boldsymbol{Z}_{1} + (1-\lambda)\boldsymbol{Z}_{2}$. Observe that 
	\begin{align}
		\left\lvert  \text{Tr}\left(\boldsymbol{B}_{\alpha,k}\boldsymbol{Z}\right) \right\rvert &= \left\lvert  \text{Tr}\left(\boldsymbol{B}_{\alpha,k}(\lambda \boldsymbol{Z}_{1} + (1-\lambda)\boldsymbol{Z}_{2})\right) \right\rvert \nonumber\\
		&= \left\lvert  \lambda\text{Tr}\left(\boldsymbol{B}_{\alpha,k}\boldsymbol{Z}_{1}\right) + (1-\lambda)\text{Tr}\left(\boldsymbol{B}_{\alpha,k}\boldsymbol{Z}_{2}\right) \right\rvert \nonumber\\
		&\leq \lambda \left\lvert  \text{Tr}\left(\boldsymbol{B}_{\alpha,k}\boldsymbol{Z}_{1}\right) \right\rvert + (1-\lambda) \left\lvert  \text{Tr}\left(\boldsymbol{B}_{\alpha,k}\boldsymbol{Z}_{2}\right) \right\rvert \nonumber\\
		&\leq \sqrt{\boldsymbol{A}[\alpha,k]},
	\end{align}
	where the last inequality follows because  $\boldsymbol{Z}_{1},\boldsymbol{Z}_{2}\in \mathcal{O}$. It follows from the above equation that $\boldsymbol{Z} \in \mathcal{O}$ leading to the convexity of $\mathcal{O}$. QEF.
	
	\section{Proof of Theorem \ref{theo:initi}}
	\label{app:prooftheoini}
	\subsection{Preliminaries to the proof}
	Due to homogeneity in \eqref{eq:dis}, it suffices to work with the case where $\lVert \boldsymbol{x} \rVert_{2}=1$. Instrumental in proving Theorem \ref{theo:initi} is the following Lemma~\ref{lem:l2}.
	\begin{lemma}
		\label{lem:l2}
		Consider the noiseless data $\boldsymbol{Y}[\alpha,\ell]$. For almost all unit $B$-band-limited signals $\boldsymbol{x}\in \mathbb{C}^{N}$ with $B\leq N/2$, there exists a vector $\boldsymbol{u}\in \mathbb{C}^{N}$ with $\boldsymbol{u}^{H}\boldsymbol{x}=0$ and $\lVert \boldsymbol{u} \rVert_{2}=1$, such that
		\begin{align}
			\frac{1}{2}\lVert \boldsymbol{x}\boldsymbol{x}^{H}-\boldsymbol{x}^{(0)}(\boldsymbol{x}^{(0)})^{H} \rVert_{F}^{2} \leq \frac{\lVert \boldsymbol{S}\boldsymbol{u} \rVert_{2}^{2}}{\lVert \boldsymbol{S}\boldsymbol{x}\rVert_{2}^{2}},
			\label{eq:init}
		\end{align}
		where $\boldsymbol{S} = \frac{1}{N} \left[\boldsymbol{u}_{n_{1},\alpha_{1}},\cdots,\boldsymbol{u}_{n_{J},\alpha_{J}} \right]^{H}$ for $(n_{p},\alpha_{p}) \in \mathcal{I}_{0}^{c}$, $J=\overline{\overline{\mathcal{I}^{c}_{0}}}$, and $p=1,\dots,J$.
		\begin{IEEEproof}
			Note that
			\begin{align}
				\frac{1}{2}\lVert \boldsymbol{x}\boldsymbol{x}^{H}-\boldsymbol{x}^{(0)}(\boldsymbol{x}^{(0)})^{H} \rVert_{F}^{2} &= \frac{1}{2}\lVert \boldsymbol{x} \rVert_{2}^{4} + \frac{1}{2}\lVert \boldsymbol{x}^{(0)} \rVert_{2}^{4} - \lvert \boldsymbol{x}^{H}\boldsymbol{x}^{(0)} \rvert^{2} \nonumber
				\\
				&=1- \lvert \boldsymbol{x}^{H}\boldsymbol{x}^{(0)} \rvert^{2} = 1-\cos^{2}(\theta),
				\label{eq:cos}
			\end{align}
			where $\theta\in [0,\pi/2]$ is the angle between the spaces spanned by $\boldsymbol{x}$ and $\boldsymbol{x}^{(0)}$. Then,
			\begin{align}
				\boldsymbol{x} = \cos(\theta)\boldsymbol{x}^{(0)}+\sin(\theta)(\boldsymbol{x}^{(0)})^{\perp},
				\label{eq:p1}
			\end{align}
			where $(\boldsymbol{x}^{(0)})^{\perp}\in \mathbb{C}^{N}$ is a unit vector orthogonal to $\boldsymbol{x}^{(0)}$ and the real part of its inner product with $\boldsymbol{x}$ is non-negative. From \eqref{eq:p1}, we have
			\begin{align}
				\boldsymbol{x}^{\perp} = -\sin(\theta)\boldsymbol{x}^{(0)}+\cos(\theta)(\boldsymbol{x}^{(0)})^{\perp},
				\label{eq:p2}
			\end{align}
			in which $\boldsymbol{x}^{\perp}\in \mathbb{C}^{N}$ is a unit vector orthogonal to $\boldsymbol{x}$. Thus, considering \eqref{eq:p1} and \eqref{eq:p2} and then appealing to \cite[Lemma 1]{wang2018solving} yields
			\begin{align}
				\frac{1}{2}\lVert \boldsymbol{x}\boldsymbol{x}^{H}-\boldsymbol{x}^{(0)}(\boldsymbol{x}^{(0)})^{H} \rVert_{F}^{2} \leq \frac{\lVert \boldsymbol{S}\boldsymbol{x}^{\perp} \rVert_{2}^{2}}{\lVert \boldsymbol{S}\boldsymbol{x}\rVert_{2}^{2}}.
				\label{eq:p3}
			\end{align}
			Taking $\boldsymbol{u}=\boldsymbol{x}^{\perp}$ completes the proof.
		\end{IEEEproof}
		To prove Theorem \ref{theo:initi}, we need to  upper-bound the term on the right hand-side of \eqref{eq:p3}. Specifically, we upper (lower) bound its numerator (denominator). This is shown, respectively, in the following Lemmata~\ref{lem:l3} and \ref{lem:l4}.
		\begin{lemma}
			\label{lem:l3}
			In the setup of Lemma \ref{lem:l2}, if $\overline{\overline{\mathcal{I}^{c}_{0}}} \geq \mu N$, with $\mu$ sufficiently large, then
			\begin{align}
				\lVert \boldsymbol{S}\boldsymbol{u} \rVert_{2}^{2} \leq (1+\delta-\zeta) \overline{\overline{\mathcal{I}^{c}_{0}}},
			\end{align}
			holds with $\delta,\zeta\in (0,1)$ for almost all $B$-band-limited signals with $B\leq N/2$.
		\end{lemma}
		\begin{IEEEproof}
			As mentioned in Section \ref{sec:approxinit}, the FrFT satisfies $$\boldsymbol{H}=\sum_{n,\alpha=0}^{N-1} \boldsymbol{u}_{n,\alpha}\boldsymbol{u}_{n,\alpha}^{H} = N \boldsymbol{I}.$$ Then, from standard concentration inequality on the sum of positive semi-definite matrices, we have
			\begin{align}
				(1-\delta)\leq\sigma_{min}\left(\frac{1}{N}\boldsymbol{H}\right)&\leq \sigma_{max}\left(\frac{1}{N}\boldsymbol{H}\right)\leq (1+\delta),
				\label{eq:ine1}
			\end{align}
			for any constant $\delta\in (0,1)$, where $\sigma_{max}(\cdot)$ ($\sigma_{min}(\cdot)$) denotes the largest (smallest) singular value. Given that $\boldsymbol{S}$ is a sub-matrix of $\boldsymbol{H}$, it follows from \eqref{eq:ine1} that
			\begin{align}
				\sigma_{max}\left(\frac{1}{\overline{\overline{\mathcal{I}^{c}_{0}}}}\boldsymbol{S} \right) &\leq \sigma_{max}\left(\frac{1}{N\hspace{0.2em}\overline{\overline{\mathcal{I}^{c}_{0}}}}\boldsymbol{H}\right)-\zeta\nonumber\\
				&\leq 1+\delta-\zeta,
				\label{eq:ine2}
			\end{align}
			for some constant $\zeta\in (0,1)$, and any $\delta\in(0,1)$. Thus, for almost all signals from \eqref{eq:ine2}, we have
			\begin{align}
				\lVert \boldsymbol{S}\boldsymbol{u} \rVert_{2}^{2} = \left\lvert \boldsymbol{u}^{H}\boldsymbol{S}^{H}\boldsymbol{S}\boldsymbol{u} \right\rvert \leq (1+\delta-\zeta)\overline{\overline{\mathcal{I}^{c}_{0}}},
			\end{align}
			provided that $\overline{\overline{\mathcal{I}^{c}_{0}}} \geq \mu N$, with $\mu>0$ sufficiently large. Using $\boldsymbol{x}^{\perp}=\boldsymbol{u}$ produces
			\begin{align}
				\lVert \boldsymbol{S}\boldsymbol{x}^{\perp} \rVert_{2}^{2} \leq (1+\delta-\zeta)\overline{\overline{\mathcal{I}^{c}_{0}}}.
				\label{eq:res1}
			\end{align}
		\end{IEEEproof}
		\begin{lemma}
			\label{lem:l4}
			In the setup of Lemma \ref{lem:l2}, $\overline{\overline{\mathcal{I}^{c}_{0}}} \geq \mu N$, with $\mu>0$ sufficiently large, then
			\begin{align}
				\lVert \boldsymbol{S}\boldsymbol{x}\rVert_{2}^{2}\geq (1-\delta) \overline{\overline{\mathcal{I}^{c}_{0}}} ,
				\label{eq:lem4}
			\end{align}
			holds with $\delta \in (0,1)$ for almost all $B$-band-limited signals with $B\leq N/2$.
		\end{lemma}
		\begin{IEEEproof}
			Rewrite the left-hand side in \eqref{eq:lem4} as
			\begin{align}
				\lVert \boldsymbol{S}\boldsymbol{x}\rVert_{2}^{2}=\sum_{(n,\alpha)\in \mathcal{I}^{c}_{0} }\lvert \boldsymbol{u}^{H}_{n,\alpha}\boldsymbol{x}\rvert^{2}.
				\label{eq:lem41}
			\end{align}
			Given that $\boldsymbol{S}$ is a sub-matrix of $\boldsymbol{H}$, it follows from \eqref{eq:ine1} that
			\begin{align}
				\sigma_{min}\left(\frac{1}{\overline{\overline{\mathcal{I}^{c}_{0}}}}\boldsymbol{S} \right) &\geq \sigma_{min}\left(\frac{1}{N\hspace{0.2em}\overline{\overline{\mathcal{I}^{c}_{0}}}}\boldsymbol{H}\right)\nonumber\\
				&\geq 1-\delta,
				\label{eq:ine21}
			\end{align}
			for almost all signals, some constant $\delta\in(0,1)$, and provided $\overline{\overline{\mathcal{I}^{c}_{0}}} \geq \mu N$, with $\mu>0$ sufficiently large. It follows from \eqref{eq:ine21} that 
			\begin{align}
				\sum_{(n,\alpha)\in \mathcal{I}^{c}_{0}} \lvert \boldsymbol{u}^{H}_{n,\alpha}\boldsymbol{x}\rvert^{2} \geq (1-\delta) \overline{\overline{\mathcal{I}^{c}_{0}}},
				\label{eq:lem34}
			\end{align}
			for almost all signals. This completes the proof. 
		\end{IEEEproof}
		
		\subsection{Proof of the theorem}
		Putting together \eqref{eq:cos} and \eqref{eq:lem4}, we obtain
		\begin{align}
			\frac{\lVert \boldsymbol{S}\boldsymbol{u} \rVert_{2}^{2}}{\lVert \boldsymbol{S}\boldsymbol{x}\rVert_{2}^{2}} \leq \frac{1+\delta-\zeta}{1-\delta} \delequal \kappa < 1,
			\label{eq:final}
		\end{align}
		where we take $\delta < \zeta/2$. Combining \eqref{eq:init} and \eqref{eq:final} yields
		\begin{align}
			\lVert \boldsymbol{x}\boldsymbol{x}^{H}-\boldsymbol{x}^{(0)}(\boldsymbol{x}^{(0)})^{H} \rVert_{F}^{2} \leq \kappa.
			\label{eq:final4}
		\end{align}
		for almost all signals provided $\overline{\overline{\mathcal{I}^{c}_{0}}} \geq \mu N$, with $\mu>0$ sufficiently large. 
	\end{lemma}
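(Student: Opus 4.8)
The plan is to collapse the left-hand side of \eqref{eq:init} to a single scalar -- the squared sine of the angle $\theta\in[0,\pi/2]$ between $\boldsymbol{x}$ and $\boldsymbol{x}^{(0)}$ -- and then control that angle purely through the variational (leading-eigenvector) characterization of $\boldsymbol{x}^{(0)}$. First I would record the elementary identity $\lVert\boldsymbol{a}\boldsymbol{a}^{H}-\boldsymbol{b}\boldsymbol{b}^{H}\rVert_{F}^{2}=2\big(1-\lvert\langle\boldsymbol{a},\boldsymbol{b}\rangle\rvert^{2}\big)$, valid for any unit vectors $\boldsymbol{a},\boldsymbol{b}$ (expand the Frobenius norm as a trace and use $\lVert\boldsymbol{a}\rVert_{2}=\lVert\boldsymbol{b}\rVert_{2}=1$). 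Since the reduction to $\lVert\boldsymbol{x}\rVert_{2}=1$ is legitimate by homogeneity, this gives
\[
\tfrac{1}{2}\lVert \boldsymbol{x}\boldsymbol{x}^{H}-\boldsymbol{x}^{(0)}(\boldsymbol{x}^{(0)})^{H}\rVert_{F}^{2}=1-\lvert\langle\boldsymbol{x},\boldsymbol{x}^{(0)}\rangle\rvert^{2}=\sin^{2}\theta,
\]
so the whole statement reduces to exhibiting a unit $\boldsymbol{u}\perp\boldsymbol{x}$ with $\sin^{2}\theta\le \lVert\boldsymbol{S}\boldsymbol{u}\rVert_{2}^{2}/\lVert\boldsymbol{S}\boldsymbol{x}\rVert_{2}^{2}$.

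Next I would build a two-dimensional frame inside $\mathrm{span}\{\boldsymbol{x},\boldsymbol{x}^{(0)}\}$. After a harmless global-phase normalization of $\boldsymbol{x}^{(0)}$ (which leaves $\boldsymbol{x}^{(0)}(\boldsymbol{x}^{(0)})^{H}$ unchanged), I may take $\langle\boldsymbol{x},\boldsymbol{x}^{(0)}\rangle=\cos\theta\ge 0$ and write $\boldsymbol{x}=\cos\theta\,\boldsymbol{x}^{(0)}+\sin\theta\,(\boldsymbol{x}^{(0)})^{\perp}$ for a unit $(\boldsymbol{x}^{(0)})^{\perp}\perp\boldsymbol{x}^{(0)}$. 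I then set the candidate $\boldsymbol{u}:=\boldsymbol{x}^{\perp}=-\sin\theta\,\boldsymbol{x}^{(0)}+\cos\theta\,(\boldsymbol{x}^{(0)})^{\perp}$; by construction $\boldsymbol{u}$ is a unit vector with $\boldsymbol{u}^{H}\boldsymbol{x}=0$, which settles the existence clause. The structural fact I would invoke, inherited from \eqref{eq:finalInit}, is that $\boldsymbol{x}^{(0)}$ is the leading eigenvector of $\boldsymbol{M}:=\boldsymbol{S}^{H}\boldsymbol{S}=\tfrac{1}{N^{2}}\sum_{(n,\alpha)\in\mathcal{I}^{c}_{0}}\boldsymbol{u}_{n,\alpha}\boldsymbol{u}_{n,\alpha}^{H}$ (the matrix in \eqref{eq:finalInit} equals $\boldsymbol{M}$ up to a positive scalar, which does not move the maximizer); hence $\boldsymbol{M}\boldsymbol{x}^{(0)}=\lambda_{\max}\boldsymbol{x}^{(0)}$ with $\lambda_{\max}=\lVert\boldsymbol{S}\boldsymbol{x}^{(0)}\rVert_{2}^{2}=\max_{\lVert\boldsymbol{s}\rVert_{2}=1}\lVert\boldsymbol{S}\boldsymbol{s}\rVert_{2}^{2}$.

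With this in hand the bound follows from evaluating $\lVert\boldsymbol{S}\boldsymbol{u}\rVert_{2}^{2}=\boldsymbol{u}^{H}\boldsymbol{M}\boldsymbol{u}$ in the frame above. Expanding $\boldsymbol{u}=-\sin\theta\,\boldsymbol{x}^{(0)}+\cos\theta\,(\boldsymbol{x}^{(0)})^{\perp}$ and using $\boldsymbol{M}\boldsymbol{x}^{(0)}=\lambda_{\max}\boldsymbol{x}^{(0)}$, the cross term is proportional to $\langle\boldsymbol{x}^{(0)},(\boldsymbol{x}^{(0)})^{\perp}\rangle=0$ and therefore drops, leaving
\[
\lVert\boldsymbol{S}\boldsymbol{u}\rVert_{2}^{2}=\sin^{2}\theta\,\lambda_{\max}+\cos^{2}\theta\,\beta\ \ge\ \sin^{2}\theta\,\lambda_{\max},
\]
where $\beta:=\big((\boldsymbol{x}^{(0)})^{\perp}\big)^{H}\boldsymbol{M}(\boldsymbol{x}^{(0)})^{\perp}\ge 0$ because $\boldsymbol{M}\succeq 0$. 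Finally, since $\lVert\boldsymbol{x}\rVert_{2}=1$ forces $\lambda_{\max}\ge\lVert\boldsymbol{S}\boldsymbol{x}\rVert_{2}^{2}$, I divide to obtain $\sin^{2}\theta\le \lVert\boldsymbol{S}\boldsymbol{u}\rVert_{2}^{2}/\lambda_{\max}\le \lVert\boldsymbol{S}\boldsymbol{u}\rVert_{2}^{2}/\lVert\boldsymbol{S}\boldsymbol{x}\rVert_{2}^{2}$, which together with the first display is exactly \eqref{eq:init}.

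I expect the only delicate point to be the \emph{almost all} qualifier. The inequality is deterministic once $\boldsymbol{x}^{(0)}$ is a leading eigenvector of $\boldsymbol{M}$; genericity is needed only to guarantee that the top eigenvalue of $\boldsymbol{M}$ is simple, so that $\boldsymbol{x}^{(0)}$ -- and hence $\theta$ and the frame $\{\boldsymbol{x}^{(0)},(\boldsymbol{x}^{(0)})^{\perp}\}$ -- is well defined. This holds off the vanishing locus of the discriminant of the characteristic polynomial of $\boldsymbol{M}$, a nonzero polynomial, matching the \emph{almost all} convention fixed before Proposition~\ref{prop:uniqueness}. The remaining ingredients are routine: the cross-term cancellation hinges on selecting $\boldsymbol{u}=\boldsymbol{x}^{\perp}$ precisely (any other unit $\boldsymbol{u}\perp\boldsymbol{x}$ would reintroduce a term $\propto\Re\langle\boldsymbol{x}^{(0)},\boldsymbol{M}(\boldsymbol{x}^{(0)})^{\perp}\rangle$), and on $\boldsymbol{M}\succeq 0$, both immediate here.
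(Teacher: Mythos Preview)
Your proof is correct and follows essentially the same route as the paper: the same reduction of the left-hand side to $\sin^{2}\theta$, the same two-dimensional frame $\{\boldsymbol{x}^{(0)},(\boldsymbol{x}^{(0)})^{\perp}\}$, and the same choice $\boldsymbol{u}=\boldsymbol{x}^{\perp}$. The only difference is that where the paper invokes \cite[Lemma~1]{wang2018solving} as a black box to obtain \eqref{eq:p3}, you unpack that step explicitly via the leading-eigenvector property $\boldsymbol{M}\boldsymbol{x}^{(0)}=\lambda_{\max}\boldsymbol{x}^{(0)}$, which kills the cross term and gives $\lVert\boldsymbol{S}\boldsymbol{u}\rVert_{2}^{2}\ge\sin^{2}\theta\,\lambda_{\max}\ge\sin^{2}\theta\,\lVert\boldsymbol{S}\boldsymbol{x}\rVert_{2}^{2}$; this is exactly the content of the cited lemma, so your argument is a self-contained rendering of the same proof.
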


	\section{Proof of Theorem \ref{theo:guarantee}}
	\label{app:guarantees}
	\subsection{Preliminaries to the proof}
	\begin{lemma}
		\label{lem:card}
		Assume the setup of Theorem \ref{theo:initi}. Then, for any fixed self-adjoint matrix $\boldsymbol{a}\boldsymbol{a}^{H}$, there exists $\boldsymbol{Q}\in \text{Range}(\mathcal{B}^{*})$ such that
		\begin{align}
			\left\lVert \boldsymbol{Q} -\boldsymbol{a}\boldsymbol{a}^{H} \right\rVert_{2} \leq \delta_{1} \left\lVert \boldsymbol{a}\boldsymbol{a}^{H} \right\rVert_{F}
		\end{align}
		for $\delta_{1}<\frac{1}{4\sqrt{2}}$, provided that $\overline{\overline{\mathcal{I}^{c}_{0}}} \geq \mu N$ with $\mu>0$ sufficiently large.
	\end{lemma}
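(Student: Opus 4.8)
The plan is to treat Lemma~\ref{lem:card} as the initial ``inexact pre-certificate'' step of a PhaseLift-type dual certificate construction \cite{candes2013phase,candes2013phaselift,gross2017improved}, transplanted to the \emph{deterministic} FrFT frame $\{\boldsymbol{u}_{n,\alpha}\}$. Since the target $\boldsymbol{Q}$ must lie in $\text{Range}(\mathcal{B}^{*})$, I would use an affine ansatz in the measurement vector $\mathcal{B}(\boldsymbol{a}\boldsymbol{a}^{H})$: take $\boldsymbol{\lambda}\in\mathbb{R}^{N^{2}}$ supported on $\mathcal{I}^{c}_{0}$ with $\boldsymbol{\lambda}_{n,\alpha}=c_{1}\,\boldsymbol{u}_{n,\alpha}^{H}\boldsymbol{a}\boldsymbol{a}^{H}\boldsymbol{u}_{n,\alpha}+c_{0}$ for $(n,\alpha)\in\mathcal{I}^{c}_{0}$, with real scalars $c_{0},c_{1}$ fixed later. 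Writing $\boldsymbol{Q}=\mathcal{B}^{*}(\boldsymbol{\lambda})$ splits $\boldsymbol{Q}$ into a ``fourth-order'' term $c_{1}\sum_{(n,\alpha)\in\mathcal{I}^{c}_{0}}\big(\boldsymbol{u}_{n,\alpha}^{H}\boldsymbol{a}\boldsymbol{a}^{H}\boldsymbol{u}_{n,\alpha}\big)\boldsymbol{u}_{n,\alpha}\boldsymbol{u}_{n,\alpha}^{H}$ plus a ``second-order'' term $c_{0}\sum_{(n,\alpha)\in\mathcal{I}^{c}_{0}}\boldsymbol{u}_{n,\alpha}\boldsymbol{u}_{n,\alpha}^{H}$; by the tight-frame identity used in \eqref{eq:14} together with the spectral estimate \eqref{eq:ine1} and Lemma~\ref{lem:solution}, the second term is within $O(\epsilon)$ of $\overline{\overline{\mathcal{I}^{c}_{0}}}\,\boldsymbol{I}$ in spectral norm whenever $\overline{\overline{\mathcal{I}^{c}_{0}}}\ge\mu N$ with $\mu$ large.

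Next I would introduce the self-map $\mathcal{A}_{0}:\mathcal{S}^{N\times N}\to\mathcal{S}^{N\times N}$ given by $\mathcal{A}_{0}(\boldsymbol{M})=\frac{1}{\overline{\overline{\mathcal{I}^{c}_{0}}}}\sum_{(n,\alpha)\in\mathcal{I}^{c}_{0}}\langle\boldsymbol{u}_{n,\alpha}\boldsymbol{u}_{n,\alpha}^{H},\boldsymbol{M}\rangle\,\boldsymbol{u}_{n,\alpha}\boldsymbol{u}_{n,\alpha}^{H}$ and the reference operator $\mathcal{E}(\boldsymbol{M})=\gamma_{1}\boldsymbol{M}+\gamma_{2}\,\text{Tr}(\boldsymbol{M})\,\boldsymbol{I}$, where $\gamma_{1},\gamma_{2}>0$ are the constants one would get if $\{\boldsymbol{u}_{n,\alpha}\}$ were an exact complex projective $4$-design. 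The crux is the operator estimate $\lVert(\mathcal{A}_{0}-\mathcal{E})(\boldsymbol{M})\rVert_{2}\le\delta_{0}\,\lVert\boldsymbol{M}\rVert_{F}$, valid for all self-adjoint $\boldsymbol{M}$, with $\delta_{0}$ driven arbitrarily small by enlarging $\mu$. I would get this by (i) computing the $4$-design moments only to identify $\gamma_{1},\gamma_{2}$, and (ii) diagonalizing $\boldsymbol{M}$ to reduce the bound to controlling $\sum_{(n,\alpha)\in\mathcal{I}^{c}_{0}}|\boldsymbol{u}_{n,\alpha}^{H}\boldsymbol{v}|^{2}\boldsymbol{u}_{n,\alpha}\boldsymbol{u}_{n,\alpha}^{H}$ for unit $\boldsymbol{v}$ --- a \emph{fourth-order} near-isometry of the restricted FrFT frame. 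For almost all $B$-band-limited $\boldsymbol{x}$ (equivalently, after excising the measure-zero algebraic set on which this degenerates, which also fixes $\mathcal{I}^{c}_{0}$), this follows by upgrading the second-order bounds \eqref{eq:ine1} and Lemmas~\ref{lem:l3}--\ref{lem:l4}, or alternatively by first establishing a rank-$2$ restricted isometry on the tangent space $\mathcal{T}_{\boldsymbol{a}}$ of \eqref{eq:tangent} in the spirit of condition~\textbf{A1} of Theorem~\ref{theo:guarantee}.

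With $\lVert(\mathcal{A}_{0}-\mathcal{E})(\cdot)\rVert_{2}\le\delta_{0}\lVert\cdot\rVert_{F}$ in hand, I would finish by inverting $\mathcal{E}$: solving $\mathcal{E}(\boldsymbol{M})=\boldsymbol{a}\boldsymbol{a}^{H}$ via $\boldsymbol{M}=\gamma_{1}^{-1}\boldsymbol{a}\boldsymbol{a}^{H}+t\boldsymbol{I}$ and picking $t$ to cancel the trace term yields $\lVert\boldsymbol{M}\rVert_{F}\le\frac{3}{2\gamma_{1}}\lVert\boldsymbol{a}\boldsymbol{a}^{H}\rVert_{F}$ (using $\sqrt{N}/(N+1)\le\tfrac12$). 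Setting $\boldsymbol{Q}:=\mathcal{A}_{0}(\boldsymbol{M})\in\text{Range}(\mathcal{B}^{*})$ --- obtained from $\boldsymbol{\lambda}_{n,\alpha}=\frac{1}{\overline{\overline{\mathcal{I}^{c}_{0}}}}\boldsymbol{u}_{n,\alpha}^{H}\boldsymbol{M}\boldsymbol{u}_{n,\alpha}$ on $\mathcal{I}^{c}_{0}$, which is real by self-adjointness of $\boldsymbol{M}$ --- gives
\begin{equation*}
\lVert\boldsymbol{Q}-\boldsymbol{a}\boldsymbol{a}^{H}\rVert_{2}=\lVert(\mathcal{A}_{0}-\mathcal{E})(\boldsymbol{M})\rVert_{2}\le\delta_{0}\lVert\boldsymbol{M}\rVert_{F}\le\frac{3\delta_{0}}{2\gamma_{1}}\,\lVert\boldsymbol{a}\boldsymbol{a}^{H}\rVert_{F},
\end{equation*}
so it suffices to take $\mu$ large enough that $3\delta_{0}/(2\gamma_{1})<\tfrac{1}{4\sqrt{2}}$ and put $\delta_{1}:=3\delta_{0}/(2\gamma_{1})$. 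I expect step~(ii) to be the real obstacle: in the random-measurement PhaseLift analysis this discrepancy is handled by a matrix Bernstein inequality, whereas here the $\boldsymbol{u}_{n,\alpha}$ are fixed rows of FrFT matrices, so the fourth-order near-isometry of $\{\boldsymbol{u}_{n,\alpha}\}_{(n,\alpha)\in\mathcal{I}^{c}_{0}}$ must be extracted from algebraic structure alone, uniformly over $B$-band-limited $\boldsymbol{x}$ outside a measure-zero set --- precisely the ``non-probabilistic isometry'' point flagged after Theorem~\ref{theo:guarantee}, and where the genericity and the requirement $\overline{\overline{\mathcal{I}^{c}_{0}}}\ge\mu N$ are spent.
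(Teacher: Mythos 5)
There is a genuine gap at the step you yourself flag as the crux. Your whole construction hinges on the operator estimate $\lVert(\mathcal{A}_{0}-\mathcal{E})(\boldsymbol{M})\rVert_{2}\le\delta_{0}\lVert\boldsymbol{M}\rVert_{F}$ for all self-adjoint $\boldsymbol{M}$, with $\delta_{0}$ made arbitrarily small by enlarging $\mu$, i.e.\ on the restricted FrFT frame behaving like an approximate projective $4$-design. But everything you cite in support --- the tight-frame identity behind \eqref{eq:14}, the spectral bound \eqref{eq:ine1}, Lemmas~\ref{lem:l3}--\ref{lem:l4}, Lemma~\ref{lem:solution} --- controls only \emph{second} moments of the vectors $\boldsymbol{u}_{n,\alpha}$, and second-moment tightness says nothing about the fourth-moment behavior that $\mathcal{A}_{0}\approx\mathcal{E}$ requires. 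In the PhaseLift setting this is exactly where matrix concentration for random measurement vectors is spent; here the $\boldsymbol{u}_{n,\alpha}$ are fixed rows of FrFT matrices, and there is no reason a union of a modest number of rotated orthonormal bases is a near-$4$-design: already for a single angle $\alpha$, the map $\boldsymbol{M}\mapsto\sum_{n}\langle\boldsymbol{u}_{n,\alpha}\boldsymbol{u}_{n,\alpha}^{H},\boldsymbol{M}\rangle\boldsymbol{u}_{n,\alpha}\boldsymbol{u}_{n,\alpha}^{H}$ is essentially the projection onto matrices diagonal in that basis, which is nowhere near $\gamma_{1}\boldsymbol{M}+\gamma_{2}\operatorname{Tr}(\boldsymbol{M})\boldsymbol{I}$ in the uniform sense you need. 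Restricting the sum further to the data-dependent index set $\mathcal{I}^{c}_{0}$ only makes the claim harder. So the proposal asserts, rather than proves, its central inequality, and as stated that inequality is doubtful; the genericity over band-limited signals and the condition $\overline{\overline{\mathcal{I}^{c}_{0}}}\ge\mu N$ are invoked but never actually converted into a fourth-order bound.

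For comparison, the paper avoids fourth moments entirely. Its proof chooses the index set $\mathcal{I}^{c}_{0}$ \emph{adapted to the fixed vector} $\boldsymbol{a}$ (the same selection rule as in Algorithm~\ref{alg:initialization}, applied to $\boldsymbol{a}$) and takes $\boldsymbol{Q}$ to be the restricted, normalized frame operator $\frac{1}{N\,\overline{\overline{\mathcal{I}^{c}_{0}}}}\sum_{(n,\alpha)\in\mathcal{I}^{c}_{0}}\boldsymbol{u}_{n,\alpha}\boldsymbol{u}_{n,\alpha}^{H}$, which is manifestly of the form $\mathcal{B}^{*}(\boldsymbol{\lambda})$; it then invokes Theorem~\ref{theo:initi} to bound $\lVert\boldsymbol{Q}-\boldsymbol{a}\boldsymbol{a}^{H}\rVert_{2}$ by a constant $\rho$ that can be tuned below $\tfrac{1}{4\sqrt{2}}$ via the cardinality of $\mathcal{I}^{c}_{0}$, following \eqref{eq:final}. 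In other words, the approximation of $\boldsymbol{a}\boldsymbol{a}^{H}$ is obtained from the already-established spectral-initialization accuracy, with weights $\boldsymbol{\lambda}$ that are constant on the selected indices, not quadratic in $\boldsymbol{a}$ as in your ansatz. If you want to salvage your route, you would have to actually establish the near-$4$-design property of the FrFT frame (or a rank-two restricted version of it), which is a substantially stronger statement than anything proved in the paper; otherwise the short route through Theorem~\ref{theo:initi} is the one that closes the argument.
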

	\begin{IEEEproof}
		Without loss of generality, assume $\lVert \boldsymbol{a}\rVert_{2}=1$ and $\lVert\boldsymbol{u}_{n,\alpha}\rVert_{2}=1$. Then, we set
		\begin{align}
			\boldsymbol{Q} = \frac{1}{N \overline{\overline{\mathcal{I}^{c}_{0}}}}\sum_{(n,\alpha)\in \mathcal{I}^{c}_{0}}\boldsymbol{u}_{n,\alpha}\boldsymbol{u}_{n,\alpha}^{H},
		\end{align}
		which is of the form $\mathcal{B}^{*}(\boldsymbol{\lambda})$. Here, the set $\mathcal{I}^{c}_{0}$ is chosen to have a close estimation of $\boldsymbol{a}$ using Algorithm \ref{alg:initialization} with $\overline{\overline{\mathcal{I}^{c}_{0}}} \geq \mu N$ such that $\mu>0$ sufficiently large. Then, notice that from Theorem \ref{theo:initi} we have
		\begin{align}
			\left\lVert \boldsymbol{Q} - \boldsymbol{a}\boldsymbol{a}^{H}\right\rVert_{2} \leq \delta_{1},
			\label{eq:finalLem2}
		\end{align}
		where $\delta_{1}= \rho$ with $\rho \in (0,1)$. We highlight that the constant $\rho$ depends on the cardinality of $\mathcal{I}_{0}^{c}$ which can be tuned such that $\delta_{1}=\rho<\frac{1}{4\sqrt{2}}$ following the criterion in \eqref{eq:final}. QEF.
	\end{IEEEproof}

	\begin{lemma}
		Assume the setup in Theorem \ref{theo:initi}. Then, for any $\boldsymbol{S}\in \mathcal{T}_{\boldsymbol{x}}$, there exists $\boldsymbol{Q}$ of the form $\boldsymbol{Q}=\mathcal{B}^{*}(\boldsymbol{\lambda})$, $\boldsymbol{\lambda} \in \mathbb{R}^{N^{2}}$, such that
		\begin{align}
			\lVert \boldsymbol{Q} -\boldsymbol{S} \rVert_{2} \leq \delta_{1} \lVert \boldsymbol{S} \rVert_{F},
		\end{align}
		holds for $\delta_{1}\in (0,1)$, provided that $\overline{\overline{\mathcal{I}^{c}_{0}}} \geq \mu N$, with $\mu>0$ sufficiently large. This inequality immediately leads to
		\begin{align}
			\lVert \boldsymbol{Q}_{\mathcal{T}^{\perp}_{\boldsymbol{x}}} \rVert_{2} \leq \delta_{1} \lVert \boldsymbol{S} \rVert_{F}, \hspace{1em} \lVert \boldsymbol{Q}_{\mathcal{T}_{\boldsymbol{x}}} - \boldsymbol{S} \rVert_{F} \leq \delta_{2}\lVert \boldsymbol{S} \rVert_{F},
		\end{align}
		for $\delta_{2}\in (0,1)$.
		\label{theo:certificate}
	\end{lemma}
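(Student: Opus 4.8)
The plan is to lift the rank‑one approximation of Lemma~\ref{lem:card} to the whole tangent space $\mathcal{T}_{\boldsymbol{x}}$ by a spectral–decomposition argument, exploiting the structural fact that every $\boldsymbol{S}\in\mathcal{T}_{\boldsymbol{x}}$ is self-adjoint with $\operatorname{rank}(\boldsymbol{S})\leq 2$. By homogeneity of the claimed bounds it suffices to treat $\lVert\boldsymbol{x}\rVert_2=1$, so that the projections are $\boldsymbol{M}_{\mathcal{T}^{\perp}_{\boldsymbol{x}}}=(\boldsymbol{I}-\boldsymbol{x}\boldsymbol{x}^{H})\boldsymbol{M}(\boldsymbol{I}-\boldsymbol{x}\boldsymbol{x}^{H})$ and $\boldsymbol{M}_{\mathcal{T}_{\boldsymbol{x}}}=\boldsymbol{M}-\boldsymbol{M}_{\mathcal{T}^{\perp}_{\boldsymbol{x}}}$ as in Theorem~\ref{theo:guarantee}. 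For $\boldsymbol{S}\in\mathcal{T}_{\boldsymbol{x}}$ I would write its eigendecomposition $\boldsymbol{S}=\lambda_1\boldsymbol{v}_1\boldsymbol{v}_1^{H}+\lambda_2\boldsymbol{v}_2\boldsymbol{v}_2^{H}$ with $\lambda_1,\lambda_2\in\mathbb{R}$ and $\boldsymbol{v}_1,\boldsymbol{v}_2$ orthonormal, so $\lVert\boldsymbol{S}\rVert_F^{2}=\lambda_1^{2}+\lambda_2^{2}$, and set $\boldsymbol{a}_j=\sqrt{|\lambda_j|}\,\boldsymbol{v}_j$, noting $\lVert\boldsymbol{a}_j\boldsymbol{a}_j^{H}\rVert_F=\lVert\boldsymbol{a}_j\rVert_2^{2}=|\lambda_j|$.

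Next I would apply Lemma~\ref{lem:card} --- valid under the setup of Theorem~\ref{theo:initi} with $\overline{\overline{\mathcal{I}^{c}_{0}}}\geq\mu N$, $\mu$ large --- once to each $\boldsymbol{a}_j\boldsymbol{a}_j^{H}$, obtaining $\boldsymbol{Q}_j\in\text{Range}(\mathcal{B}^{*})$ with $\lVert\boldsymbol{Q}_j-\boldsymbol{a}_j\boldsymbol{a}_j^{H}\rVert_2\leq\delta\,|\lambda_j|$, where $\delta$ can be pushed below any prescribed constant by enlarging $\overline{\overline{\mathcal{I}^{c}_{0}}}$ (as remarked after Lemma~\ref{lem:card}). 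Then $\boldsymbol{Q}:=\text{sgn}(\lambda_1)\boldsymbol{Q}_1+\text{sgn}(\lambda_2)\boldsymbol{Q}_2$ again lies in $\text{Range}(\mathcal{B}^{*})$, since that range is the real-linear span of the Hermitian matrices $\boldsymbol{u}_{n,\alpha}\boldsymbol{u}_{n,\alpha}^{H}$ and is closed under real linear combinations; hence $\boldsymbol{Q}=\mathcal{B}^{*}(\boldsymbol{\lambda})$ with $\boldsymbol{\lambda}\in\mathbb{R}^{N^{2}}$. The triangle inequality followed by Cauchy--Schwarz gives
\[
\lVert\boldsymbol{Q}-\boldsymbol{S}\rVert_2\leq\lVert\boldsymbol{Q}_1-\boldsymbol{a}_1\boldsymbol{a}_1^{H}\rVert_2+\lVert\boldsymbol{Q}_2-\boldsymbol{a}_2\boldsymbol{a}_2^{H}\rVert_2\leq\delta(|\lambda_1|+|\lambda_2|)\leq\sqrt{2}\,\delta\,\lVert\boldsymbol{S}\rVert_F ,
\]
and choosing $\delta$ small enough that $\delta_1:=\sqrt{2}\,\delta\in(0,1)$ yields $\lVert\boldsymbol{Q}-\boldsymbol{S}\rVert_2\leq\delta_1\lVert\boldsymbol{S}\rVert_F$, the first assertion.

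For the two displayed consequences I would use that $\boldsymbol{S}\in\mathcal{T}_{\boldsymbol{x}}$ forces $\boldsymbol{S}_{\mathcal{T}^{\perp}_{\boldsymbol{x}}}=\boldsymbol{0}$ --- because $(\boldsymbol{I}-\boldsymbol{x}\boldsymbol{x}^{H})\boldsymbol{x}=\boldsymbol{0}$ kills $\boldsymbol{x}\boldsymbol{s}^{H}$ on the left and $\boldsymbol{s}\boldsymbol{x}^{H}$ on the right --- and $\boldsymbol{S}_{\mathcal{T}_{\boldsymbol{x}}}=\boldsymbol{S}$. Hence $\boldsymbol{Q}_{\mathcal{T}^{\perp}_{\boldsymbol{x}}}=(\boldsymbol{Q}-\boldsymbol{S})_{\mathcal{T}^{\perp}_{\boldsymbol{x}}}$ and $\boldsymbol{Q}_{\mathcal{T}_{\boldsymbol{x}}}-\boldsymbol{S}=(\boldsymbol{Q}-\boldsymbol{S})_{\mathcal{T}_{\boldsymbol{x}}}$. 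Since conjugation by the orthogonal projector $\boldsymbol{I}-\boldsymbol{x}\boldsymbol{x}^{H}$ cannot increase the spectral norm, $\lVert\boldsymbol{Q}_{\mathcal{T}^{\perp}_{\boldsymbol{x}}}\rVert_2\leq\lVert\boldsymbol{Q}-\boldsymbol{S}\rVert_2\leq\delta_1\lVert\boldsymbol{S}\rVert_F$. For the second, a short block computation in a basis adapted to $\boldsymbol{x}$ shows $\boldsymbol{M}\mapsto\boldsymbol{M}_{\mathcal{T}_{\boldsymbol{x}}}$ has spectral-norm operator norm $\leq C$ for an absolute constant $C$ and takes values among rank-$\leq 2$ matrices, so $\lVert\boldsymbol{Q}_{\mathcal{T}_{\boldsymbol{x}}}-\boldsymbol{S}\rVert_F\leq\sqrt{2}\,\lVert(\boldsymbol{Q}-\boldsymbol{S})_{\mathcal{T}_{\boldsymbol{x}}}\rVert_2\leq\sqrt{2}\,C\,\lVert\boldsymbol{Q}-\boldsymbol{S}\rVert_2\leq\delta_2\lVert\boldsymbol{S}\rVert_F$; shrinking $\delta$ once more makes $\delta_2:=\sqrt{2}\,C\,\delta_1\in(0,1)$.

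The main obstacle is exactly this reduction: Lemma~\ref{lem:card} only approximates a single positive semidefinite rank-one matrix in spectral norm, whereas $\boldsymbol{S}$ is indefinite, so one must split it into signed rank-one pieces whose Frobenius norms sum to $O(\lVert\boldsymbol{S}\rVert_F)$. The naive polarization $\boldsymbol{x}\boldsymbol{s}^{H}+\boldsymbol{s}\boldsymbol{x}^{H}=\tfrac{1}{2}\big[(\boldsymbol{x}+\boldsymbol{s})(\boldsymbol{x}+\boldsymbol{s})^{H}-(\boldsymbol{x}-\boldsymbol{s})(\boldsymbol{x}-\boldsymbol{s})^{H}\big]$ is inadequate here, since it only controls the error by $\delta_1(\lVert\boldsymbol{x}\rVert_2^{2}+\lVert\boldsymbol{s}\rVert_2^{2})$, a quantity that can be far larger than $\lVert\boldsymbol{S}\rVert_F$ (e.g. for $\boldsymbol{s}$ nearly parallel to $i\boldsymbol{x}$, $\boldsymbol{S}$ is small while the two rank-one terms are not). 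Passing through the eigendecomposition of $\boldsymbol{S}$ ties the pieces to $\lVert\boldsymbol{S}\rVert_F$ via $\lambda_1^{2}+\lambda_2^{2}=\lVert\boldsymbol{S}\rVert_F^{2}$; the remaining steps --- bounding the tangent-space projections and checking $\boldsymbol{S}_{\mathcal{T}^{\perp}_{\boldsymbol{x}}}=\boldsymbol{0}$ --- are routine once one uses that $\mathcal{T}_{\boldsymbol{x}}$ consists of matrices of rank at most two.
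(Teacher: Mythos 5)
Your proposal is correct and follows essentially the same route as the paper: the paper also reduces the first bound to the eigendecomposition $\boldsymbol{S}=\sigma_{1}\boldsymbol{a}\boldsymbol{a}^{H}+\sigma_{2}\boldsymbol{b}\boldsymbol{b}^{H}$ and invokes Lemma~\ref{lem:card} on the rank-one pieces, then obtains the two consequences from $\boldsymbol{S}_{\mathcal{T}^{\perp}_{\boldsymbol{x}}}=\boldsymbol{0}$, $\lVert\boldsymbol{M}_{\mathcal{T}^{\perp}_{\boldsymbol{x}}}\rVert_{2}\leq\lVert\boldsymbol{M}\rVert_{2}$, $\lVert\boldsymbol{M}_{\mathcal{T}_{\boldsymbol{x}}}\rVert_{2}\leq 2\lVert\boldsymbol{M}\rVert_{2}$, and the rank-$2$ bound $\lVert\cdot\rVert_{F}\leq\sqrt{2}\lVert\cdot\rVert_{2}$, with the same constant bookkeeping ($\delta_{1}<\tfrac{1}{4\sqrt{2}}$ versus your $\sqrt{2}\,C\,\delta_{1}<1$ with $C=2$). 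Your write-up merely fills in the details the paper leaves implicit, notably the signed recombination of the two rank-one approximants.
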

	\begin{IEEEproof}
		The immediate consequences of Lemma \ref{theo:certificate} hold for the following reasons. Since any matrix in $\mathcal{T}_{\boldsymbol{x}}$ has the rank of at most 2,
		\begin{align}
			\lVert \boldsymbol{Q}_{\mathcal{T}_{\boldsymbol{x}}} - \boldsymbol{S} \rVert_{F} \leq \sqrt{2}\lVert \boldsymbol{Q}_{\mathcal{T}_{\boldsymbol{x}}} - \boldsymbol{S} \rVert_{2} &\leq 2\sqrt{2}\lVert \boldsymbol{Q} - \boldsymbol{S} \rVert_{2} \nonumber\\
			&\leq \rho_{1}\lVert \boldsymbol{S} \rVert_{F},
			\label{eq:finalUni1}
		\end{align}
		where $\rho_{1} = 2\sqrt{2}\delta_{1}<1$ with $\delta_{1}<\frac{1}{4\sqrt{2}}$ (to meet conditions over $\delta_{2}$ in proof of Lemma \ref{lem:card}). In brief, we will show that we choose $\delta_{1}$ satisfying this condition. The second inequality in \eqref{eq:finalUni1} follows from $\lVert \boldsymbol{M}_{\mathcal{T}_{\boldsymbol{x}}} \rVert_{2}\leq 2\lVert \boldsymbol{M} \rVert_{2}$ for any $\boldsymbol{M}$. Next, since $\lVert \boldsymbol{M}_{\mathcal{T}^{\perp}_{\boldsymbol{x}}} \rVert_{2}\leq \lVert \boldsymbol{M} \rVert_{2}$, we get
		\begin{align}
			\lVert \boldsymbol{Z}_{\mathcal{T}^{\perp}_{\boldsymbol{x}}} \rVert_{2} = \lVert \boldsymbol{Z}_{\mathcal{T}^{\perp}_{\boldsymbol{x}}} - \boldsymbol{S}_{\mathcal{T}^{\perp}_{\boldsymbol{x}}} \rVert_{2} \leq \lVert \boldsymbol{Z} - \boldsymbol{S} \rVert_{2} \leq \delta_{1}\lVert \boldsymbol{S} \rVert_{F}.
		\end{align}
		It thus suffices to prove \eqref{eq:finalUni1}. To this end, consider the eigenvalue decomposition of $\boldsymbol{S} = \sigma_{1}\boldsymbol{a}\boldsymbol{a}^{H}+\sigma_{2}\boldsymbol{b}\boldsymbol{b}^{H}$ with normalized vectors $\boldsymbol{a},\boldsymbol{b}\in \mathbb{C}^{N}$. The proof of Lemma \ref{theo:certificate} then follows from Lemma \ref{lem:solution} combined with Lemma~\ref{lem:card}.
	\end{IEEEproof}
	
	\subsection{Proof of the theorem}
	
	Without loss of generality, assume $\lVert \boldsymbol{x} \rVert_{2}=1$.\vspace{0.5em}
	
	\noindent To prove \textbf{A1}: By definition of operator $\mathcal{B}(\cdot)$ in \eqref{eq:auxOperator}, for any $\boldsymbol{S}\in \mathcal{T}_{\boldsymbol{x}}$, we have
	\begin{align}
		\frac{1}{N}\lVert\mathcal{B}(\boldsymbol{S})\rVert_{1} &= \frac{1}{N}\sum_{\alpha,n =0}^{N-1}\left \lvert \text{Tr}\left(\boldsymbol{U}_{n,\alpha}\boldsymbol{S}\right) \right \rvert, \nonumber\\
		&=\frac{1}{N}\sum_{\alpha,n =0}^{N-1}\left \lvert \boldsymbol{u}_{n,\alpha}^{H}\boldsymbol{S}\boldsymbol{u}_{n,\alpha} \right \rvert,
		\label{eq:unique1}
	\end{align}
	Since $\boldsymbol{S}\in \mathcal{T}_{\boldsymbol{x}}$ has a rank at most two, we can choose normalized vectors $\boldsymbol{a},\boldsymbol{b}\in \mathbb{C}^{N}$ such that $\boldsymbol{S} = \sigma_{1}\boldsymbol{a}\boldsymbol{a}^{H}+\sigma_{2}\boldsymbol{b}\boldsymbol{b}^{H}$. Then, from \eqref{eq:unique1}, 
	\begin{align}
		\frac{1}{N}\lVert\mathcal{B}(\boldsymbol{S})\rVert_{1} &\leq \frac{1}{N}\sum_{n,\alpha=0}^{N-1}\lvert \sigma_{1} \rvert \left \lvert \boldsymbol{u}_{n,\alpha}^{H} \boldsymbol{a} \right \rvert^{2} + \lvert \sigma_{2} \rvert \left \lvert \boldsymbol{u}_{n,\alpha}^{H} \boldsymbol{b} \right \rvert^{2} \nonumber\\
		&= \frac{1}{N}\sum_{\alpha=0}^{N-1}\lvert \sigma_{1} \rvert \lVert \boldsymbol{U}_{\alpha}\boldsymbol{a} \rVert_{2}^{2} + \lvert \sigma_{2} \rvert \lVert \boldsymbol{U}_{\alpha}\boldsymbol{b} \rVert_{2}^{2} \nonumber\\
		&\leq \frac{1}{N}\sum_{\alpha=0}^{N-1} (1+\delta)\left( \lvert \sigma_{1} \rvert + \lvert \sigma_{2} \rvert \right) = (1+\delta)\lVert \boldsymbol{S} \rVert_{1},
		\label{eq:unique2}
	\end{align}
	for any $\delta>0$. In the second equality of \eqref{eq:unique2}, the matrix $\boldsymbol{U}_{\alpha}=[\boldsymbol{u}_{0,\alpha},\dots, \boldsymbol{u}_{N-1,\alpha}]^{H}$ (for simplicity, we assume each $\boldsymbol{u}_{n,\alpha}$ is unitary) satisfies $\lVert \boldsymbol{U}_{\alpha} \rVert_{2}\leq 1+\delta$ because it is the FrFT matrix for $\alpha$. The inequality in \eqref{eq:unique2} follows from $\lVert \boldsymbol{S} \rVert_{1}=\lvert \sigma_{1} \rvert + \lvert \sigma_{2} \rvert$.
	
	On the other hand, from \eqref{eq:unique1}, we also conclude that
	\begin{align}
		\frac{1}{N}\lVert\mathcal{B}(\boldsymbol{S})\rVert_{1} &\geq \frac{1}{N}\sum_{\alpha,n =0}^{N-1}  \boldsymbol{u}_{n,\alpha}^{H} \boldsymbol{S}\boldsymbol{u}_{n,\alpha} \nonumber\\
		&=\frac{1}{N}\sum_{\alpha=0}^{N-1}\sigma_{1} \lVert \boldsymbol{U}_{\alpha}\boldsymbol{a} \rVert_{2}^{2} + \sigma_{2} \lVert \boldsymbol{U}_{\alpha}\boldsymbol{b} \rVert_{2}^{2} \nonumber\\
		&\geq (1-\delta)(\sigma_{1} + \sigma_{2})= (1-\delta)\lVert \boldsymbol{S} \rVert_{1},
		\label{eq:unique3}
	\end{align}
	for almost all signals provided $\overline{\overline{\mathcal{I}^{c}_{0}}} \geq \mu N$ with $\mu>0$ sufficiently large in which the last equality comes from $\lvert \sigma_{1}\rvert + \lvert \sigma_{2}\rvert= \sigma_{1} + \sigma_{2}=\lVert \boldsymbol{S}\rVert_{1}$ because $\boldsymbol{S}$ is positive semidefinite. Combining \eqref{eq:unique2} and \eqref{eq:unique3}, the result holds. QEF.\vspace{0.5em}
	
	\noindent To prove \textbf{A2}: We now construct the approximate dual certificate $\boldsymbol{Q}$ obeying the conditions of Theorem \ref{theo:guarantee}. To this end, we employ the \textit{golfing} method presented in \cite{gross2011recovering}. Variations of this technique have also been used in other subsequent works  \cite{candes2011robust,candes2013phaselift,li2013sparse}. The special form that we employ is most closely related to the construction in \cite{li2013sparse,candes2015phase}. 
	
	To build our approximate dual certificate $\boldsymbol{Q}$, we partition the entries of the column-wise version of $\boldsymbol{Y}$, the vector $\boldsymbol{y}$, into $K + 1$ different groups so that, hereafter, the operator $\mathcal{B}_{0}$ corresponds to those measurements from the first $L_{0}$ entries, $\mathcal{B}_{1}$ to those from the next $L_{1}$ ones, and so on. Clearly, $L_{0}+L_{1}+\cdots+L_{K}=N^{2}$. The golfing begins with $\boldsymbol{X}^{(0)}=\frac{2}{L_{0}}\mathcal{P}_{\mathcal{T}_{\boldsymbol{x}}}(\mathcal{B}_{0}^{*}(\boldsymbol{1}))$, where $\mathcal{P}_{\mathcal{T}_{\boldsymbol{x}}}(\cdot)$ represents the orthogonal projector to the set $\mathcal{T}_{\boldsymbol{x}}$. Then, for $t = 1,\dots,K$, we inductively define (a) $\boldsymbol{Z}^{(t)}\in \text{Range}(\mathcal{B}^{*}_{t})$ obeying $\lVert \boldsymbol{Z}^{(t)} - \boldsymbol{X}^{(t-1)}\rVert_{2}\leq \delta_{1}\lVert \boldsymbol{X}^{(t-1)} \rVert_{F}$, 
	and (b) $\boldsymbol{X}^{(t)} = \boldsymbol{X}^{(t-1)} - \mathcal{P}_{\mathcal{T}_{\boldsymbol{x}}}(\boldsymbol{Z}^{(t)})$. In the end, we set
	\begin{align}
		\boldsymbol{Q} = \boldsymbol{Z} - \frac{2}{L_{0}}\mathcal{B}_{0}^{*}(\boldsymbol{1}), \hspace{1em} \boldsymbol{Z} = \sum_{t=1}^{K}\boldsymbol{Z}^{(t)}.
	\end{align}
	Note that Lemma \ref{theo:certificate} asserts that $\boldsymbol{Z}^{(t)}$ exists and, for each $t$, we have
	\begin{align}
		\lVert \boldsymbol{X}^{(t)} \rVert_{F} \leq \delta_{2}\lVert \boldsymbol{X}^{(t-1)} \rVert_{F}, \text{ and } \lVert \boldsymbol{Z}^{(t)}_{\mathcal{T}^{\perp}_{\boldsymbol{x}}} \rVert_{2} \leq \delta_{3}\lVert \boldsymbol{X}^{(t-1)} \rVert_{F}.
		\label{eq:certicate1}
	\end{align}
	
	We now show that the constructed $\boldsymbol{Q}$ satisfies the required assumptions from Lemma \ref{theo:certificate}. First, $\boldsymbol{Q}$ is self-adjoint and of the form $\mathcal{B}^{*}(\boldsymbol{\lambda})$ with $\boldsymbol{\lambda}\in \mathbb{R}^{N^{2}}$. Next, consider
	\begin{align}
		\boldsymbol{Q}_{\mathcal{T}_{\boldsymbol{x}}} = \boldsymbol{Z}_{\mathcal{T}_{\boldsymbol{x}}} - \frac{2}{L_{0}}\mathcal{P}_{\mathcal{T}_{\boldsymbol{x}}}(\mathcal{B}_{0}^{*}(\boldsymbol{1})) &= \sum_{t=1}^{B} \mathcal{P}_{\mathcal{T}_{\boldsymbol{x}}}(\boldsymbol{Z}^{(t)}) - \boldsymbol{X}^{(0)} \nonumber\\
		&=\sum_{t=1}^{K}(\boldsymbol{X}^{(t-1)} - \boldsymbol{X}^{(t)}) - \boldsymbol{X}^{(0)} \nonumber\\
		&= -\boldsymbol{X}^{(K)}.
		\label{eq:certicate2}
	\end{align}
	Then, combining \eqref{eq:certicate1} and \eqref{eq:certicate2}, we have
	\begin{align}
		\lVert \boldsymbol{Q}_{\mathcal{T}_{\boldsymbol{x}}} \rVert_{F} \leq \lVert \boldsymbol{X}^{(K)} \rVert_{F} \leq \delta_{2}^{K} \lVert \boldsymbol{X}^{(0)} \rVert_{F},
		\label{eq:solution1}
	\end{align}
	and
	\begin{align}
		\lVert \boldsymbol{Z}_{\mathcal{T}^{\perp}_{\boldsymbol{x}}} \rVert_{2} \leq \sum_{t=1}^{K} \lVert \boldsymbol{Z}^{(t)}_{\mathcal{T}^{\perp}_{\boldsymbol{x}}} \rVert_{2} &\leq \delta_{3} \sum_{t=1}^{K} \lVert \boldsymbol{X}^{(t-1)} \rVert_{F} \nonumber\\
		&\leq \delta_{3} \sum_{t=1}^{K} \delta_{2}^{K} \lVert \boldsymbol{X}^{(0)} \rVert_{F} \nonumber\\
		&\leq \rho \lVert \boldsymbol{X}^{(0)} \rVert_{F},
		\label{eq:orthoCertificate}
	\end{align}
	for some $\rho \in (0,1)$.
	
	Recall that Lemma \ref{lem:solution} states
	\begin{align}
		\left\lVert\frac{2}{L_{0}}\mathcal{B}_{0}^{*}(\boldsymbol{1}) - 2\boldsymbol{I} \right\rVert_{2} \leq \frac{\epsilon}{4},
	\end{align}
	for some $\epsilon\in (0,1)$. Further, for any matrix $\boldsymbol{S}$, we have $\lVert \boldsymbol{S}_{\mathcal{T}_{\boldsymbol{x}}} \rVert_{2}\leq 2 \lVert \boldsymbol{S} \rVert_{2}$ and $\lVert \boldsymbol{S}_{\mathcal{T}^{\perp}_{\boldsymbol{x}}} \rVert_{2}\leq \lVert \boldsymbol{S} \rVert_{2}$. Hence,
	\begin{align}
		\lVert \boldsymbol{X}^{(0)} -2 \boldsymbol{I}_{\mathcal{T}_{\boldsymbol{x}}} \rVert_{2} \leq \frac{\epsilon}{2}, \hspace{1em}\lVert \boldsymbol{Z}_{\mathcal{T}^{\perp}_{\boldsymbol{x}}} - \boldsymbol{Q}_{\mathcal{T}^{\perp}_{\boldsymbol{x}}} -2 \boldsymbol{I}_{\mathcal{T}^{\perp}_{\boldsymbol{x}}}\rVert_{2}\leq \frac{\epsilon}{4}.
		\label{eq:uniqueF1}
	\end{align}
	Since $\boldsymbol{X}^{(0)}$ has a rank at most $2$, this implies
	\begin{align}
		\lVert \boldsymbol{X}^{(0)} \rVert_{F} \leq \sqrt{2} \lVert \boldsymbol{X}^{(0)} \rVert_{2} &\leq \sqrt{2} \lVert \boldsymbol{X}^{(0)} -2 \boldsymbol{I}_{\mathcal{T}_{\boldsymbol{x}}} \rVert_{2} + 2\sqrt{2} \lVert \boldsymbol{I}_{\mathcal{T}_{\boldsymbol{x}}} \rVert_{2} \nonumber\\
		&\leq \frac{\sqrt{2}}{2} (\epsilon + 4).
		\label{eq:uniqueF2}
	\end{align}
	Substituting the above into \eqref{eq:solution1} yields
	\begin{align}
		\lVert \boldsymbol{Q}_{\mathcal{T}_{\boldsymbol{x}}} \rVert_{F} \leq \frac{\sqrt{2}}{2} (\epsilon + 4) \delta_{2}^{K}.
	\end{align}
	
	Note that one could choose $\delta_{2}<\frac{1}{\sqrt{2}(\epsilon+4)}$ because it directly depends on the cardinality of $\mathcal{I}_{0}^{c}$ (cf. the proof of Lemma \ref{theo:certificate}). Combining \eqref{eq:orthoCertificate}, \eqref{eq:uniqueF1} and \eqref{eq:uniqueF2} results in
	\begin{align}
		\lVert \boldsymbol{Q}_{\mathcal{T}^{\perp}_{\boldsymbol{x}}} + 2\boldsymbol{I}_{\mathcal{T}^{\perp}_{\boldsymbol{x}}} \rVert_{2} &\leq \lVert \boldsymbol{Z}_{\mathcal{T}^{\perp}_{\boldsymbol{x}}} \rVert_{2} + \lVert \boldsymbol{Z}_{\mathcal{T}^{\perp}_{\boldsymbol{x}}} - \boldsymbol{Q}_{\mathcal{T}^{\perp}_{\boldsymbol{x}}} - 2\boldsymbol{I}_{\mathcal{T}^{\perp}_{\boldsymbol{x}}}\rVert_{2} \nonumber\\
		&\leq \rho + \frac{\epsilon}{4} < 1,
	\end{align}
	for any $\epsilon\in (0,1)$, implying that $\boldsymbol{Q}_{\mathcal{T}^{\perp}_{\boldsymbol{x}}} \preceq - \boldsymbol{I}_{\mathcal{T}^{\perp}_{\boldsymbol{x}}}$. This completes the proof.
	
	\bibliographystyle{IEEEtran}
	\bibliography{main}

\end{document}